\def\@abssec#1{\vspace{.05in}\footnotesize \parindent .2in
{\bf #1. }\ignorespaces}
\newtheorem{theorem}{Theorem}[section]
\newtheorem{lemma}[theorem]{Lemma}
\newtheorem{proposition}[theorem]{Proposition}
\newtheorem{corollary}[theorem]{Corollary}
\newtheorem{definition}[theorem]{Definition}
\newtheorem{remark}[theorem]{Remark}
\def \Rm {{\mathbb R}}
\def \Mm {\mathbb M}
\def \Nm {\mathbb N}
\def \Cm {\mathbb C}
\def \Zm {\mathbb Z}
\def \Sm {\mathbb S}
\newcommand{\eps}{\varepsilon}
\newcommand{\dsum}{\displaystyle\sum}
\newcommand{\dint}{\displaystyle\int}
\newcommand{\aver}[1]{\langle {#1} \rangle}
\newcommand{\bpar}{\bar\partial}
\newcommand{\bx}{\mathbf x}
\newcommand{\mF}{\mathcal F}
\newcommand{\fa}{{\mathfrak a}}
\newcommand{\fS}{{\mathfrak S}}
\newcommand{\rH}{{\rm H}}
\newcommand{\cout}[1]{}
\newcommand{\sgn}[1]{\,{\rm sign}(#1)}
\newcommand{\tr}{{\rm tr}}
\newcommand{\Tr}{{\rm Tr}}
\newcommand{\hpdo}{{$h\Psi$DO}}
\newcommand{\pdo}{{$\Psi$DO}}
\newcommand{\R}{{\rm R}}
 \renewcommand{\arraystretch}{1.5}
\title{Topological invariants for interface modes}
\author{Guillaume Bal \thanks{University of Chicago; {\tt guillaumebal@uchicago.edu}}}
\begin{document}
 
\maketitle

\tableofcontents

\begin{abstract}
We consider topologically non-trivial interface Hamiltonians, which find several applications in materials science and geophysical fluid flows. The non-trivial topology manifests itself in the existence of topologically protected, asymmetric edge states at the interface between two two-dimensional half spaces in different topological phases. It is characterized by a quantized interface conductivity. The objective of this paper is to compute such a conductivity and show its stability under perturbations.  We present two methods. 

The first one computes the conductivity using the winding number of branches of absolutely continuous spectrum of the interface Hamiltonian. This calculation is independent of any bulk properties but requires a sufficient understanding of the spectral decomposition of the Hamiltonian. In the fluid flow setting, it also applies in cases where the so-called bulk-interface correspondence fails.

The second method establishes a bulk-interface correspondence between the interface conductivity and a so-called bulk-difference invariant. We introduce the bulk-difference invariants characterizing pairs of half spaces. We then relate the interface conductivity to the bulk-difference invariant by means of a Fedosov-H\"ormander formula, which computes the index of a related Fredholm operator. 

The two methods are used to compute invariants for representative $2\times2$ and $3\times3$ systems of equations that appear in the applications.
\end{abstract}
 
%\begin{AMS}
%\end{AMS}

\renewcommand{\thefootnote}{\fnsymbol{footnote}}
\renewcommand{\thefootnote}{\arabic{footnote}}

\renewcommand{\arraystretch}{1.1}

%\begin{keywords}
%\end{keywords}

%\begin{AMS}
%\end{AMS}

\noindent{\bf Keywords}:  Interface Hamiltonian, Fredholm operator, index theory, bulk-difference invariant, bulk-interface correspondence, semiclassical calculus, topological insulators, topological waves.

%\pagestyle{myheadings}
%\thispagestyle{plain}

%%%%%%%%%%%%%%%%%%%%%%
%%% BEGINNING TEXT %%%
%%%%%%%%%%%%%%%%%%%%%%

\section{Introduction}
\label{sec:intro}

Topological invariants offer a useful description of phenomena that appear to be immune to large classes of perturbations. The macroscopic transport properties of topological insulators are for instance known to be dictated to a large extent by a topological invariant characterizing their phase \cite{bernevig2013topological,Fruchart2013779,prodan2016bulk}. More recently, the behavior of equatorial waves and that of other wave phenomena was also shown to afford a topological description \cite{delplace2017topological,shankar2017topological,souslov2019topological,tauber2019bulk}. Its main practical consequence is the stability of some macroscopic transport properties in the presence of defects. This paper addresses the computation of these invariants and their relations to quantized physical observables for continuous two-dimensional models that appear in the analysis of topological insulators and topological waves.

We consider Hamiltonians generically denoted by $H[\mu]$ with $\mu$ an order parameter taking values in $\Rm$ (or more general manifolds).  A typical Hamiltonian that appears in the analysis of topological insulators with $\mu=m$ a mass term is the $2\times2$ Dirac system
\begin{equation}\label{eq:s22}
  H[m(y)] =D\cdot\sigma + m(y) \sigma_3, \qquad H[m] = k\cdot\sigma + m\sigma_3 = \left(\begin{matrix} m & \xi-i\zeta \\ \xi+i\zeta & -m\end{matrix} \right),
\end{equation}
in the physical and Fourier variables (when $m$ is constant), respectively. Here, $\bx=(x,y)$ are the spatial variables, $k=(\xi,\zeta)$ the dual Fourier variables, $D=(D_x,D_y)$ with $D_x=\frac1i\partial_x$ and $D_y=\frac1i\partial_y$ while $\sigma=(\sigma_1,\sigma_2)$ and $\sigma_{1,2,3}$ are the standard Pauli matrices. 
The above system as well as more general (and more physical) block-diagonal direct sums of such elementary blocks appear naturally (and generically) in the analysis of topological insulators \cite{B-BulkInterface-2018,B-EdgeStates-2018,drouot2019bulk,drouot2018defect,Fefferman2016}.

For equatorial waves, $\mu=f$ is the Coriolis force (with as a possible second order parameter an odd viscosity term $\epsilon$ in which case $f$ below is replaced by $f+\varepsilon\Delta$ \cite{delplace2017topological,souslov2019topological,tauber2019bulk}). The typical $3\times3$ system we consider is
\begin{equation}\label{eq:s33}
  H[f(y)] = \big(D_x,D_y,-f(y)\big) \cdot \Gamma,\qquad H[f] = (\xi,\zeta,-f) \cdot \Gamma = \left(\begin{matrix} 0&\xi&\zeta\\ \xi&0&if \\ \zeta&-if&0\end{matrix} \right),
\end{equation}
where $\Gamma=(\gamma_1,\gamma_4,\gamma_7)$ involves some of the Gell-Mann matrices used in the representation of the Lie algebra $\mathfrak{su}(3)$; see \cite{delplace2017topological,souslov2019topological} for derivations of such a fluid model and its applications.\footnote{The physical system $H[f(y)]$ is the same as in  \cite{delplace2017topological,souslov2019topological,tauber2019bulk} while its Fourier domain representation $H[f]$ is not. Our convention for plane wave representations is $e^{i(k\cdot x-Et)}$ (rather than $e^{i(Et-k\cdot x)}$) so that $k=(\xi,\zeta)$ is the Fourier multiplier of $D=\frac1i\nabla$, a non-negotiable convention in semiclassical analysis. This choice affects the numerical values of our topological invariants. The dispersion relations $E=E(k)$ and group velocities $\nabla_kE(k)$ are, however, the same in both representations.} The above system applies to a vector field $(\eta,u,v)$ with $\eta$ the (variation of the) thickness of the two dimensional fluid and $(u,v)$ its velocity components.

%THERE MUST BE APPLICATIONS OF SUCH SYSTEMS IN TOPOLOGICAL PHOTONICS AS WELL.

\medskip

The main objectives of the paper are to introduce an interface invariant describing the asymmetric transport along the interface and to devise means to compute it. 

Physically interesting phenomena appear when insulating domains with different bulk topologies are brought next to each other. Consider a material modeled by a Hamiltonian $H[\mu]$. It is said to be insulating in the energy range $[E_1,E_2]$ when the latter interval and the spectrum of $H[\mu]$ do not intersect. A bulk topology may then be assigned to $H[\mu]$ \cite{bernevig2013topological,Fruchart2013779,prodan2016bulk}; we do so in section \ref{sec:bulk} below. Consider two materials modeled by $H[\mu_\pm]$ in the half spaces $\pm y>1$ with bulk topologies characterized by integers $I_\pm=I(\mu_\pm)$ (with a (possibly) smooth transition between the two). The interface will be modeled more generally by a varying order parameter $\mu=\mu(y)$ and corresponding interface Hamiltonian $H[\mu(y)]$. At the interface $y\sim0$, the bulk invariant is {\em not} defined, the energy gap closes, and transport along the interface is allowed. Quite surprisingly, transport is typically asymmetric, with more modes moving in one direction than in the other one. 

The domain walls $\mu(y)$ and the construction of spatial truncations and of density of states within the bulk band gap $[E_1,E_2]$ are all conveniently represented as {\em switch functions} $f\in{\mathfrak S}[f_-,f_+,x_-,x_+]$ which we define as the family of bounded functions from $\Rm$ to $\Rm$ such that
\begin{equation}\label{eq:switch}
    f(x)=f_- \quad \forall x\leq x_- \qquad \mbox{ and } \qquad  f(x)=f_+ \quad \forall x\geq x_+.
\end{equation}
We also denote by ${\mathfrak S}[f_-,f_+]$ the union of the above switch functions over all $x_-\leq x_+$. A {\em smooth switch function} is a switch function in $C^\infty(\Rm)$. A domain wall $\mu(y)$ is then typically an element in $\fS[\mu_-,\mu_+]$. We also consider domain walls of the form $\mu(y)=\lambda y$ for $\Rm\ni \lambda\not=0$.

This topologically protected asymmetric transport is characterized by what we call an interface conductivity, in analogy to the linear response to radiation in condensed matter physics \cite{bernevig2013topological,Fruchart2013779,prodan2016bulk}. This conductivity is defined for a Hamiltonian $H=H[\mu(y)]$ as 
\begin{equation}\label{eq:sigmaI}
  \sigma_I = \sigma_I[H] := {\rm Tr}\ i[H,P]\varphi'(H).
\end{equation}
Here, $P=P(x)$ is a smooth switch function from $\Rm$ to $[0,1]$ in $\fS[0,1,x_0-\beta,x_0+\beta]$ for some $\beta>0$. The conductivity $\sigma_I$ is in fact independent of $x_0$ and $\alpha$. Above, $\varphi(E)$ is constructed as a smooth switch function in $\fS[0,1,E_1,E_2]$ such that $\varphi'(E)\geq0$ to afford an interpretation as a density of states while $\varphi'(H)$ is then constructed by spectral calculus. The interval $[E_1,E_2]$ is chosen to lie in a bulk band gaps of $H[\mu_\pm]$. 

The interpretation of $\sigma_I$ is as follows. $\varphi'(H)$ may be seen as a (nonnegative) density of states (integrating to $1$). The range of energies in the system is restricted to this interval to avoid leakage into the bulk. Then $P$ may be seen as an observable with $\langle\psi|P|\psi\rangle$ quantifying an amount of signals to the right of $x_0$.  Its evolution $\langle\psi|\dot P|\psi\rangle:=\frac{d}{dt}\aver{e^{-it H}\psi|P|e^{-it H}\psi}=\frac{d}{dt}\aver{\psi|e^{it H}Pe^{-it H}|\psi}$ in Hamiltonian dynamics results in the Heisenberg formalism as $\dot P=i[H,P]$ with $[A,B]=AB-BA$ the usual commutator. The above (operator) trace then has the clear physical interpretation (if defined) as a rate of signal moving from the left of $x_0$ to the right of $x_0$. This  conductivity has appeared in different contexts in many works on topological protection of edge states \cite{B-BulkInterface-2018,drouot2019microlocal,elbau2002equality,elgart2005equality,prodan2016bulk}.

It turns out that $\sigma_I$ is often quantized ($2\pi\sigma_I\in\Zm$) with an invariant that is immune to perturbations in the definition of $H$.  
Moreover, the asymmetry is quantified by a general principle, the bulk-interface correspondence, which states that the excess of modes going in one direction versus the other is given by $I(\mu_+)-I(\mu_-)$.

While appealing, this correspondence does not (quite) always apply in the case of equatorial waves \cite{delplace2017topological}. In particular, the number of edge states depends on the structure of the Coriolis force term $f(y)$ with a number of asymmetric interface modes equal to $2$ when $f(y)=\lambda y$ and equal to only one when $f(y)=f_0\sgn{y}$. The main heuristic reason for such difficulties comes from the fact that the topological change occurs when $f(y)$ changes signs. For such a value, the gap closing of the Hamiltonian $H[f]$ coincides with a sheet of infinitely degenerate spectrum (modes in geostrophic balance corresponding to an infinitely degenerate $0\in {\rm Sp}(H[f=0])$), whose behavior is influenced by the profile of the variation $f(y)$.

\medskip

This paper introduces two methodologies to compute the interface conductivity $\sigma_I$. The first one, presented in section \ref{sec:ii}, extends the work in \cite{B-BulkInterface-2018} on the system \eqref{eq:s22} to more general Hamiltonians including the one in \eqref{eq:s33}. The method is based on mapping $H[\mu(y)]$ to an appropriate Fredholm operators as in \cite{B-BulkInterface-2018,prodan2016bulk}. The index of the operator is then related to a sequence of winding numbers associated to the Hamiltonian's continuous spectrum. Such invariants are defined independently of any bulk invariant. Moreover, the method allows us to compute $\sigma_I$ for $H$ defined in \eqref{eq:s33} both when  $f(y)=\lambda y$ and when $f(y)=f_0\sgn{y}$. The main drawback of the method  is that it requires explicit knowledge of the structure of the branches of absolutely continuous spectrum. Once these invariants are computed, one shows as in  \cite{B-BulkInterface-2018}  that they are stable with respect to appropriate classes of perturbations  $H_V:=H[\mu(y)]+V$.

\medskip

The second methodology is based on a new bulk-interface correspondence relating the above interface conductivity to the symbol of an appropriate Fredholm operator whose index is computed directly from its symbol in a Fedosov-H\"ormander formula. This correspondence may be interpreted as a topological charge conservation as described in \cite{gurarie2011single,volovik2009universe}. A reinterpretation of the formula then relates the conductivity to several equivalent expressions of a quantity we call a bulk-difference invariant, which generalizes the difference of bulk invariants we mentioned above. 

The bulk-difference invariant is introduced in section \ref{sec:bulk}. We present in section \ref{sec:BI} a semiclassical analysis of the interface conductivity and show how to relate it to the bulk-difference invariant and to an index theory similar to that of Fedosov-H\"ormander \cite[Chapter 19]{H-III-SP-94}. The correspondence shares some similarities with the bulk-interface correspondence based on spectral flows and spectral asymmetry in the physics literature \cite{fukui2012bulk,gurarie2011single,volovik2009universe}; see Appendix \ref{sec:bic}.  Our constructions also show that the interface conductivity is stable with against perturbations $V$. 

As an application to the system \eqref{eq:s33}, we obtain that the bulk-interface correspondence applies provided that $f(y)$ varies sufficiently slowly. This is consistent with the explicit computations obtained when $f(y)=\lambda y$ and tends to indicate that the bulk-interface correspondence does not apply when $f(y)=f_0\sgn{y}$ because the latter is too singular.

%
%
%
%
%
%%
%%%
\section{Spectral computation of interface invariants}
\label{sec:ii}
%%%
%%
%
%

This section computes the interface conductivity $\sigma_I$ defined in \eqref{eq:sigmaI} for Hamiltonians $H=H[\mu(y)]$ in terms of the winding number of their branches of absolutely continuous spectrum. This extends work in \cite{B-BulkInterface-2018} on the system \eqref{eq:s22}. 

The order parameter $\mu=\mu(y)\in \fS[\mu_-,\mu_+]$ generates a domain wall in the topologically non-trivial case $\mu_+\mu_-<0$. The non-vanishing values of $\mu$ act as insulators while the interface in the vicinity of $y=0$ where $\mu$ changes signs becomes conducting. This is the physical origin for the topologically protected asymmetric transport.

The results in \cite{B-BulkInterface-2018} show that we can relate $\sigma_I$ to the index of a Fredholm operator and that both remain quantized when $H$ is replaced by $H+V$ for $V$ in an appropriate class of perturbations. 
The computation of $\sigma_I$ based on  branches of absolutely continuous spectrum is independent of any definition of bulk invariants and hence of any bulk-interface correspondence. However, it is useful in practice when the branches can be analyzed in sufficient detail. The analysis carried out in \cite{B-BulkInterface-2018} for the system \eqref{eq:s22} applies to general mass terms $m(y)$. However, the algebra for the $3\times3$ system in \eqref{eq:s33} is more complex and carried out only in specific situations such as when $f(y)=\lambda y$ and $f(y) = f_0 \sgn{y}$. The details of these calculations are presented in Appendix \ref{sec:appli}.

\medskip

We consider a Hamiltonian $H=H[\mu(y)]$ acting on vectors $(\psi_k(x,y))_{1\leq k\leq N}$ with coefficients $\mu=\mu(y)$ independent of $x$ so that the following spectral decomposition holds:
\begin{equation}\label{eq:Hxi}
  H = {\mathcal F}^{-1}_{\xi\to x} \dint_{\Rm}^{\oplus} H[\xi] d\xi {\mathcal F}_{x\to\xi}.
\end{equation}
Here ${\mathcal F}$ is the one dimensional Fourier transform in the $x$-variable.
We assume that $H$ is an unbounded self-adjoint operator from its domain of definition ${\mathcal D}(H)$ to $L^2(\Rm^2)\otimes\Cm^N$. The domain of definition is defined as $(H\pm i)^{-1} (L^2(\Rm^2)\otimes\Cm^N)$, which is a subspace of $L^2(\Rm^2)\otimes\Cm^N$ independent of $\pm$ \cite[Theorem 1.2.7]{davies_1995}. We verify in Appendix \ref{sec:appli} that this property holds \eqref{eq:s22} and \eqref{eq:s33}. Since $H$ is self-adjoint, we have access to the corresponding spectral calculus; see \cite{davies_1995} and Appendix \ref{sec:hpdo}.

Let $(E_1,E_2)$ be a fixed interval. We assume:
\\[2mm]
 [H-AC]: (i) for each $\xi\in\Rm$, $H[\xi]$ restricted to the interval $[E_1,E_2]$ has finite rank and vanishes for $\xi$ outside of a compact set; and (ii)  the existence of smooth (in $\xi$) corresponding eigenvalues $E_j(\xi)$ and rank-one projectors $\Pi_j(\xi)$ (with Schwartz kernel $\psi_j(y,\xi)\psi_j^*(y',\xi)$ for normalized eigenvectors $\int_{\Rm}|\psi(y,\xi)|^2dy=1$) parametrizing a finite number $J$ of branches of absolutely continuous spectrum.  
\\[2mm]
The above branches $E_j(\xi)$ are so far defined on an implicit compact domain $\Xi_j\ni\xi$. They are extended by continuity  to $\xi\in\Rm$ as the unique continuous branches still called $E_j(\xi)$ such that $E_j(\xi)\in\{E_1,E_2\}$ for $\xi\not\in \Xi_j$.

We now construct a unitary operator from the restriction of $H$ to the interval $[E_1,E_2]$. Let $\varphi(E)$ be a smooth switch function in $\fS[0,1,E_1,E_2]$. We then define the unitary operator
\[
   U(H) = e^{i2\pi \varphi(H)},\qquad W(H)=U(H)-I
\]  
by spectral calculus \cite{dimassi1999spectral}.  Note that $W(E)$ is compactly supported in $(E_1,E_2)$.  By assumption, we have the decomposition
\[
  W(H) = {\mathcal F}^{-1} \dint_{(E_1,E_2)}^{\oplus} \dsum_{j=1}^J W(E_j(\xi)) \Pi_j(\xi) d\xi\  {\mathcal F} 
 =  {\mathcal F}^{-1} \dint_{\Rm}^{\oplus} \dsum_{j=1}^J W(E_j(\xi)) \Pi_j(\xi) d\xi \ {\mathcal F} .
\]

Let finally $P$ be either the function $P$ defined in the introduction or a spatial projector onto $x\geq x_0$ for $x_0\in\Rm$, i.e., point-wise multiplication by $\rH(x-x_0)$, the Heaviside function. Then, as an extension of the results \cite{B-BulkInterface-2018}, we have 
%/%%
\begin{theorem}\label{thm:intunpert}
%%%%
  Let $P$ and $U(H)$ be defined as above for $H$ satisfying the hypothesis [H-AC](i)-(ii). Then $PU(H)P$ is a Fredholm operator on the range of $P$. Moreover, 
\begin{equation}\label{eq:intunpert}
  I[H]:={\rm Index}(PUP) = -{\rm Tr} [P,U]U^* =  \dsum_{j=1}^J {\mathcal W}_1(e^{i 2\pi \varphi\circ E_j})
\end{equation}
with ${\mathcal W}_1(f)$ the winding number of a unimodular complex function $f$  with compactly supported gradient, given explicitly by
\begin{equation}\label{eq:W1}
  {\mathcal W}_1(f) = \dfrac{1}{2\pi i} \dint_{\Rm} \partial_\xi f f^*(\xi) d\xi.
\end{equation}
\end{theorem}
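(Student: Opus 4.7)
The plan is to follow the strategy of \cite{B-BulkInterface-2018}: first establish the trace-class properties of $W(H) := U(H)-I$ and its commutator with $P$, then deduce the Fredholm property together with the identity $\mathrm{Index}(PUP) = -\Tr[P,U]U^*$, and finally reduce this trace to a sum of winding numbers by combining the $x$-Fourier decomposition \eqref{eq:Hxi} with the fiber spectral decomposition of $H[\xi]$.

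Hypothesis [H-AC] gives $W(H[\xi]) = \sum_{j=1}^J W(E_j(\xi))\Pi_j(\xi)$, which has rank at most $J$, is smooth in $\xi$, and vanishes outside a compact subset of $\Rm_\xi$. Because $H$ commutes with $x$-translations, so does $U(H)$, and its integral kernel has the form $U(x-x',y,y')$ whose partial Fourier transform in $s := x-x'$ is the fiber unitary $U(H[\xi])$. The Schwartz decay in $s$ of the off-identity part then yields the compactness needed: $(I-P)U(H)P = -[P,U(H)]P$ is Hilbert--Schmidt (its squared HS norm equals $\int_\Rm |s|\,\|W(s,\cdot,\cdot)\|^2_{{\rm HS}}\,ds$, finite because $\widehat{W}(\xi)$ is compactly supported and smooth in $\xi$), and a similar direct calculation shows that $[P,U(H)]U^*(H)$ is trace class.

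With these properties, $T := PU(H)P$ is Fredholm on $\mathrm{Ran}(P)$ via the parametrix $S := PU^*(H)P$: both $P-ST = PU^*(I-P)UP = \bigl((I-P)UP\bigr)^*(I-P)UP$ and $P-TS = PU(I-P)U^*P$ are trace class. The identity $\mathrm{Index}(T) = \Tr(I-T^*T) - \Tr(I-TT^*) = \Tr(UPU^* - P)$, combined with the algebraic fact $UPU^* - P = -[P,U]U^*$ (which follows from $UU^* = I$), produces the middle equality in \eqref{eq:intunpert}; this portion is essentially the argument of \cite{B-BulkInterface-2018}.

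The remaining task is to compute $\Tr[P,U(H)]U^*(H)$. The identity $\int_\Rm\bigl(P(x)-P(x-s)\bigr)dx = s$ (valid for any switch function in $\fS[0,1]$ with jump $1$) together with translation invariance reduces the trace to $\int s\,\mathrm{tr}_N\,W(s,y,y')\,U^*(-s,y',y)\,dy\,dy'\,ds$. Plancherel in $s$ then converts this into the fiber integral
\begin{equation*}
  \Tr[P,U(H)]U^*(H) = \frac{i}{2\pi}\int_\Rm \mathrm{Tr}_{L^2_y\otimes\Cm^N}\bigl(\partial_\xi W(H[\xi])\,U^*(H[\xi])\bigr)\,d\xi.
\end{equation*}
Expanding $W(H[\xi]) = \sum_j W(E_j(\xi))\Pi_j(\xi)$ with mutually orthogonal rank-one projectors, using $\Pi_j(\partial_\xi\Pi_j)\Pi_j = 0$ (from differentiating $\Pi_j^2 = \Pi_j$), and the scalar identity $W'(E)U^*(E) = i2\pi\varphi'(E)$, the fiber trace collapses to $i2\pi\sum_j (\varphi\circ E_j)'(\xi)$. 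Integrating in $\xi$ and comparing with \eqref{eq:W1} produces the stated sum of winding numbers. The main obstacle is the careful trace-class bookkeeping in the first step: $W(H)$ itself is \emph{not} trace class on $L^2(\Rm^2)\otimes\Cm^N$ because of $x$-translation invariance, and one must justify that the commutator with $P$ restores the missing $x$-localization before the formal Fourier manipulations can be given rigorous meaning.
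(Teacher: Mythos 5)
Your proof is correct and follows essentially the same strategy as the paper's appendix: reduce to computing $\Tr[P,W]W^*$ via the Schwartz kernel on the diagonal, use the identity $\int_\Rm(P(x)-P(x-s))\,dx=s$ and Plancherel to pass to the $\xi$-fiber, then show the $\partial_\xi\Pi_j$ contribution vanishes by orthonormality of the eigenprojectors and normalization of the eigenvectors (the paper spells out the full $\tau(\xi)=0$ computation rather than invoking $\Pi_j(\partial_\xi\Pi_j)\Pi_j=0$, which alone only handles the diagonal $j=k$ terms). The only presentational difference is that you re-derive the parametrix/index-trace formula, whereas the paper cites it from \cite{B-BulkInterface-2018}.
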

The proof of this theorem is given in Appendix \ref{sec:appindex}.
Applying the above theorem yields 
\[
  I[H]=\Tr [U,P]U^* = N_+ - N_-,
\]
where $N_+$ is the number of branches $E_j$ such that $E_j(\xi)\in\fS[E_1,E_2]$ while $N_-$ is the number of branches  $E_j$ such that $E_j(\xi)\in\fS[E_2,E_1]$. Branches such that  $E_j(\xi)$ belongs to $\fS[E_1,E_1]$ or $\fS[E_2,E_2]$
 do not contribute to $I[H]$.

\medskip
 \noindent
{\bf Stability and interface conductivity.} %Following \cite{B-BulkInterface-2018}, we may now extend the above result to perturbed operators $H_V=H+V$ and relate the above index to the interface conductivity $\sigma_I$. 
One of the main reasons to develop the above apparatus borrowed from non-commutative geometry (see \cite{B-BulkInterface-2018,prodan2016bulk} and their references)  is that it allows us to include spatial perturbations in the Hamiltonians. Let $H$ be as above and $H_V=H+V$ a perturbed Hamiltonian.  

We recall results from \cite{B-BulkInterface-2018} for completeness.  Stability of the interface conductivity $\sigma_I$ is guaranteed when $V$ is relatively compact with respect to $H$, which means that $V(H+i)^{-1}$ is a compact operator in the $L^2$ sense. For the $2\times2$ problem \eqref{eq:s22}, any $V=V(x,y)$ an operator of point-wise multiplication by a bounded function that decays to $0$ at infinity satisfies such hypotheses since the symbol of $(H+i)^{-1}$ decays to $0$ as $(\xi,\zeta)\to\infty$; this is essentially an application of Rellich's compactness criterion \cite{simon2015operator}.
 In other words, the topology is fixed by our assumptions `at infinity' but $V$ can be arbitrarily large (bounded) so long as it decays at infinity. For the $3\times3$ system \eqref{eq:s33}, the symbol of $(H+i)^{-1}$ does not converge to $0$ as $(\xi,\zeta)\to\infty$ and it is more difficult to characterize operators $V$ for which the above criterion holds. 

Let us define $U[H_V]= e^{i2\pi \varphi(H_V)}$ spectrally. Then we have \cite{B-BulkInterface-2018}
\begin{theorem}\label{thm:intpert}
$PU[H_V]P$ on the range of $P$ is a Fredholm operator and 
\begin{equation}\label{eq:indexpert}
  I[H_V] :={\rm Index}(PU[H_V]P) =  I[H].
\end{equation}
\end{theorem}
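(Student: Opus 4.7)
The strategy is to reduce the stability claim to a statement about compact perturbations of Fredholm operators. Concretely, I would show that $U[H_V]-U[H]$ is a compact operator on $L^2(\Rm^2)\otimes\Cm^N$, from which it follows that
\[
  PU[H_V]P = PU[H]P + P\bigl(U[H_V]-U[H]\bigr)P
\]
differs from the Fredholm operator $PU[H]P$ of Theorem \ref{thm:intunpert} by a compact operator. Standard Fredholm theory (stability of the index under compact perturbations) then simultaneously delivers the Fredholm property for $PU[H_V]P$ and the identity ${\rm Index}(PU[H_V]P)={\rm Index}(PU[H]P)=I[H]$, which is exactly \eqref{eq:indexpert}. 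Along the way one needs to know that $H_V=H+V$ is itself self-adjoint on $\mathcal{D}(H)$; this comes from Kato--Rellich once one observes that relative compactness of $V$ with respect to $H$ implies relative boundedness with relative bound zero, so that $U[H_V]=e^{i2\pi\varphi(H_V)}$ is well defined by spectral calculus.

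\textbf{Compactness of $U[H_V]-U[H]$.} Because $\varphi\in\fS[0,1,E_1,E_2]$ is smooth, the function $g(E):=e^{i2\pi\varphi(E)}-1$ belongs to $C_c^\infty(\Rm)$ with support in $[E_1,E_2]$, and $U[H_\bullet]-I=g(H_\bullet)$. The core step is therefore to show $g(H_V)-g(H)$ is compact. I would use the Helffer--Sj\"ostrand representation
\[
  g(H) = -\dfrac{1}{\pi}\dint_{\Cm} \bar\partial \tilde g(z)\,(H-z)^{-1}\,dL(z),
\]
with $\tilde g$ an almost analytic extension of $g$ whose $\bar\partial \tilde g$ vanishes to infinite order on $\Rm$, and $dL$ the Lebesgue measure on $\Cm$. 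Subtracting the corresponding formula for $H_V$ and applying the second resolvent identity yields
\[
  g(H_V)-g(H) = \dfrac{1}{\pi}\dint_{\Cm} \bar\partial \tilde g(z)\,(H_V-z)^{-1}\,V\,(H-z)^{-1}\,dL(z).
\]
By hypothesis, $V(H+i)^{-1}$ is compact; writing $V(H-z)^{-1}=V(H+i)^{-1}\bigl(I+(z+i)(H-z)^{-1}\bigr)$ shows $V(H-z)^{-1}$ is compact for every $z\notin{\rm Sp}(H)$, and post-composition by the bounded operator $(H_V-z)^{-1}$ preserves compactness. The standard bound $|\bar\partial\tilde g(z)|\le C_N|\mathrm{Im}\,z|^N$ combined with $\|(H-z)^{-1}\|,\|(H_V-z)^{-1}\|\le|\mathrm{Im}\,z|^{-1}$ makes the integral converge in the operator norm on the space of compact operators, so the limit is compact.

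\textbf{Obstacle.} The main technical hurdle is controlling the operator-norm convergence of the Helffer--Sj\"ostrand integral on the ideal of compact operators, so that the compactness of the integrand for each $z\notin\Rm$ actually transfers to the integral; this requires the sharp resolvent and almost-analytic-extension estimates recalled above. Once that is in hand, the reduction to Fredholm index stability is immediate: compactness of $P(U[H_V]-U[H])P$ follows since $P$ is bounded, and Theorem \ref{thm:intunpert} supplies the numerical value $I[H]$ of the common index.
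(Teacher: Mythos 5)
Your proposal is correct and follows the same route as the cited reference \cite{B-BulkInterface-2018} (the paper itself states the theorem as a recollection and does not reprove it): use the Helffer--Sj\"ostrand representation together with the second resolvent identity and relative compactness of $V$ to show $U[H_V]-U[H]$ is compact, then invoke stability of the Fredholm index under compact perturbations. The key estimates you flag — $\|(H_\bullet-z)^{-1}\|\le|\mathrm{Im}\,z|^{-1}$, the rapid vanishing $|\bar\partial\tilde g|\le C_N|\mathrm{Im}\,z|^N$, and the factorization $V(H-z)^{-1}=V(H+i)^{-1}\bigl(I+(z+i)(H-z)^{-1}\bigr)$ to propagate compactness to all $z\notin{\rm Sp}(H)$ — are exactly the right ones, and norm-closedness of the compact ideal closes the argument.
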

The interface index is thus independent of any relatively compact perturbation.% The number of protected states is indeed of topological nature and independent of any local fluctuations. 

Still following \cite{B-BulkInterface-2018}, we now relate the above index to the interface conductivity in \eqref{eq:sigmaI}. The latter is also stable with respect to perturbations $V$ although it is less stable than the above index $I[H_V]$. Let us now impose  that $V$ is of the form $V=V_1V_2$ with
\begin{equation}\label{eq:Vtrace}
 \|V_j\|\leq C \ \mbox{ and } \ \|(z-H)^{-1}V_j\|_{HS} \leq C |\Im z|^{-1}\quad j=1,2,
\end{equation}
where $HS$ is the Hilbert-Schmidt norm. It is then proved in \cite{B-BulkInterface-2018} that $[H_V,P] \varphi'(H_V)$ is indeed a trace-class operator and we have:
\begin{theorem}\label{thm:sigmaI}
  Let $H$ be as above and $V$ a perturbation satisfying \eqref{eq:Vtrace}. Then  the operator $i[H_V,P] \varphi'(H_V)$ is trace-class and 
\[
  \sigma_I := i {\rm Tr} [H_V,P] \varphi'(H_V) = \frac1{2\pi} I[H_V].
\]
\end{theorem}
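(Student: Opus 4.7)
The plan is to connect the index $I[H_V]$ from Theorem \ref{thm:intpert} with the conductivity $\sigma_I$ through a Duhamel-type interpolation combined with the Helffer--Sj\"ostrand functional calculus. Write $U = U[H_V] = e^{i 2\pi \varphi(H_V)}$. Starting from the index formula $I[H_V] = -\Tr\,[P,U]U^*$ (Theorem \ref{thm:intunpert} applied to $H_V$ and the stability statement in Theorem \ref{thm:intpert}), the elementary identity $-[P,U]U^* = UPU^* - P$ allows us to work with the manifestly self-adjoint expression $UPU^* - P$, for which an explicit Duhamel representation is available.

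Introduce the unitary family $U_s = e^{i 2\pi s\,\varphi(H_V)}$, $s\in[0,1]$, interpolating between $U_0 = I$ and $U_1 = U$. Because $\varphi(H_V)$ commutes with every $U_s$, direct differentiation gives $\partial_s (U_s P U_s^*) = 2\pi i\, U_s [\varphi(H_V), P] U_s^*$, so that
\[
  UPU^* - P \;=\; 2\pi i \int_0^1 U_s [\varphi(H_V), P] U_s^* \, ds.
\]
Once $[\varphi(H_V),P]$ is known to be trace class, unitary conjugation by $U_s$ preserves the trace and the $s$-integral is trivial, yielding $I[H_V] = 2\pi i\,\Tr\,[\varphi(H_V), P]$.

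Next I would express $[\varphi(H_V), P]$ via Helffer--Sj\"ostrand. For an almost-analytic extension $\tilde\varphi$ of $\varphi$ with compact support,
\[
  \varphi(H_V) \;=\; \frac{1}{\pi}\int_{\Cm} \bar\partial\tilde\varphi(z)\,(z - H_V)^{-1} \, dA(z),
\]
and the resolvent identity $[(z-H_V)^{-1}, P] = (z-H_V)^{-1}[H_V,P](z-H_V)^{-1}$ gives
\[
  [\varphi(H_V), P] \;=\; \frac{1}{\pi}\int_{\Cm} \bar\partial\tilde\varphi(z)\,(z-H_V)^{-1}[H_V, P](z-H_V)^{-1} \, dA(z).
\]
Cyclicity inside the trace, together with $(z-H_V)^{-2} = -\partial_z (z-H_V)^{-1}$ and an integration by parts in $z$, identifies the resulting weighted resolvent integral with $\varphi'(H_V)$. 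This yields $\Tr\,[\varphi(H_V), P] = \Tr\,[H_V, P]\varphi'(H_V)$, so that $I[H_V] = 2\pi i\,\Tr\,[H_V, P]\varphi'(H_V) = 2\pi\,\sigma_I$, which is the claim.

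The main obstacle is the trace-class justification and the accompanying use of cyclicity. Since $P$ depends only on $x$ and $P' = \alpha$ is bounded and compactly supported, $[H,P]$ is bounded multiplication by a matrix-valued function with compactly supported kernel in $x$, while $[V,P]$ splits as $V_1[V_2,P] + [V_1,P]V_2$, inheriting a useful factorization through $V_1,V_2$. Combining these facts with the Hilbert--Schmidt bound $\|(z-H_V)^{-1}V_j\|_{HS}\le C|\Im z|^{-1}$ from \eqref{eq:Vtrace} and the rapid decay of $\bar\partial\tilde\varphi(z)$ as $|\Im z|\to 0$, the integrand in the Helffer--Sj\"ostrand representation is trace-norm integrable; this controls both $[\varphi(H_V),P]$ and $[H_V,P]\varphi'(H_V)$ and legitimizes all swaps of trace with integral, as carried out in \cite{B-BulkInterface-2018}. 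This technical input is the heart of the argument; granted it, the algebraic identity of the previous paragraphs closes the proof.
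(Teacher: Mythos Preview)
The paper does not actually prove Theorem~\ref{thm:sigmaI}; it states the result and defers entirely to \cite{B-BulkInterface-2018} (see also the remark after Corollary~\ref{cor:rel}, where the same relation is again attributed to \cite{B-BulkInterface-2018,drouot2019microlocal,elbau2002equality,prodan2016bulk}). Your outline---Fedosov trace formula, Duhamel interpolation $U_sPU_s^*$, then Helffer--Sj\"ostrand plus cyclicity to pass from $\Tr[\varphi(H_V),P]$ to $\Tr([H_V,P]\varphi'(H_V))$---is the standard argument and is precisely the one carried out in those references, so your proposal is correct and matches the intended approach.

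Two small corrections that do not affect the overall validity. First, ``$P'=\alpha$'' is a slip; you mean that $P'$ is a bounded, compactly supported function. Second, your invocation of Theorem~\ref{thm:intunpert} ``applied to $H_V$'' is not quite right: that theorem treats the unperturbed $H$. The identity $I[H_V]=-\Tr[P,U(H_V)]U^*(H_V)$ you need is the general Fedosov/Calder\'on trace formula, valid once $[P,U(H_V)]$ is trace-class; Theorem~\ref{thm:intpert} then tells you this index equals $I[H]$. Finally, in your trace-class sketch, note that hypothesis~\eqref{eq:Vtrace} bounds $(z-H)^{-1}V_j$, not $(z-H_V)^{-1}V_j$; passing between the two resolvents requires a resolvent identity that your sketch elides but that is handled in \cite{B-BulkInterface-2018}. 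Since you explicitly defer this technical step to that reference, the proposal stands.
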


This provides a means to compute the topologically protected conductivity $\sigma_I$ by spectral analysis of the unperturbed Hamiltonian $H[\mu(y)]$ using Theorem \ref{thm:intunpert} above. 

That the interface conductivity is quantized is consistent with the numerical simulations presented in, e.g.,  \cite{delplace2017topological,souslov2019topological}.  %We now compute it in several cases of interest.
The calculations of interface conductivities for the systems \eqref{eq:s22} and \eqref{eq:s33} are presented in Appendix \ref{sec:appli}.  A summary for system \eqref{eq:s22} is that $2\pi\sigma_I=-\sgn{m_+}$ when $m(y)$ is a domain wall between $m_-$ and $m_+$ when $m_+m_-<0$ (with $\sigma_I=0$ when $m_+m_->0$) while $2\pi\sigma_I=-\sgn{\lambda}$ when $m(y)=\lambda y$. For system \eqref{eq:s33}, we find that $2\pi\sigma_I$ is given by $\sgn{f_0}$ when $f(y)=f_0\sgn{y}$  while it is given by $2\sgn{\lambda}$ when $f(y)=\lambda y$.

This discrepancy is reminiscent of correction terms in Levinson's theorem caused by the presence of resonances for the Hamiltonian at energy $0$ \cite{graf2020topology,kellendonk2012wave}.

\section{Bulk-difference invariant}
\label{sec:bulk}

Bulk invariants are defined when the coefficients $\mu$ such as $m$ or $f$ above are constant. 
They are of interest in their own right and may be extended to operators with spatially varying coefficients as in \cite{B-BulkInterface-2018} and in a variety of other settings \cite{prodan2016bulk}. They are also instrumental in bulk-boundary correspondences \cite{bourne2017k,bourne2018chern,drouot2019bulk,elbau2002equality,Graf2013,graf2018bulk,hatsugai1993chern, kellendonk2004quantization,ludewig2020cobordism,prodan2016bulk}, which relate them to the interface conductivity $\sigma_I$.

With $\mF=\mF_{(x,y)\to k=(\xi,\zeta)}$ the two-dimensional Fourier transform, we may then write the spectral decomposition
\[
  H = {\mathcal F}^{-1} \dint_{\Rm^2}^{\oplus} H(k) dk {\mathcal F}
\]
where $H(k)$ is $M_n(\Cm)$-valued (square matrices of dimension $n$) with $n=2$ and $n=3$ in \eqref{eq:s22} and \eqref{eq:s33}, respectively.
We then have the spectral decomposition
\begin{equation}\label{eq:diagH}
   H(k) = \dsum_{i=1}^n h_i(k) \Pi_i(k)
\end{equation}
with eigenvalues $h_i(k)\in\Rm$ for $k\in\Rm^2$ ordered with increasing values and rank-one orthogonal projectors $\Pi_i(k)= \psi_i(k) \psi_i^*(k)$ for $1\leq i\leq n$. We assume that $h_i(k)$ and $\Pi_i(k)$ are sufficiently smooth (e.g. $C^2$) in the variable $k$. This holds for $H(k)$ smooth and in the presence of (local) spectral gaps separating the energies $i\mapsto h_i(k)$.  
%\tb{Do we need strictly increasing values or can we handle degenerate eigenvalues; this is actually the setting in FTI. Revise the proof of $c$ to $W$. All we need is a gap. Evs can be degenerate otherwise. However, we need smoothness of $\Pi(k)$. So best to keep as separated (local gaps). }

For the $2\times2$ problem in \eqref{eq:s22}, we even have global spectral gaps since  $h_{1,2}(k)=\mp\sqrt{|k|^2+m^2}$ with one spectral gap given by $(-|m|,|m|)$ while for the $3\times3$ problem in \eqref{eq:s33}, $h_{1,2,3}(k) = (-\sqrt{|k|^2+f^2},0,\sqrt{|k|^2+f^2})$ with two global spectral gaps given by $(-|f|,0)$ and $(0,|f|)$.   The (smooth) rank-one projectors are also well known; see \eqref{eq:eigen3} below for the $3\times3$ system as well as \cite{delplace2017topological,Fruchart2013779,souslov2019topological,tauber2019bulk}. In both cases, $\Pi_1\equiv \Pi_-$ corresponds to the projection of the Hamiltonian onto its (strictly) negative spectrum. 

\medskip

Let $\Pi(k)$ be any sum $\Pi(k)=\sum_{i\in I}\Pi_i(k)$ for $I$ a subset of $\{1,\ldots,n\}$ independent of $k$. 
Then $k\mapsto \Pi(k)$ defines a vector bundle over the base manifold $\Rm^2$ \cite{Fruchart2013779,nakahara2003geometry}. When $\Rm^2$ is replaced by a compact manifold $M$, then vector bundles over $M$ admit topological classifications based on their Chern classes. For the above projectors, living in spaces of matrices over $M$, such classes integrated over the base manifold give rise to integer-valued objects called Chern numbers:
\[
  \tilde c[\Pi]= \frac{i}{2\pi} \dint_M {\rm tr} \Pi d\Pi \wedge d\Pi = \frac{i}{2\pi} \dint_M {\rm tr} \Pi [\partial_1\Pi,\partial_2 \Pi] d^2k  \  \in \Zm.
\]
Here, $[A,B]=AB-BA$ and tr refers to the standard matrix trace.

When $M$ is not compact, for instance $\Rm^2$, the domain of interest in this paper, the above integrals can often still be evaluated but are no longer guaranteed to be integer-valued \cite{B-BulkInterface-2018,delplace2017topological,Fruchart2013779,souslov2019topological,tauber2019bulk}. For instance, for the $2\times2$ problem, $\tilde c[\Pi_1]=\frac12\sgn{m}$. A possible solution to this issue is to regularize the Hamiltonian and its projectors in such a way that they take a unique value as $|k|\to\infty$. This allows one to compactify the plane around the unit sphere mapping $\infty$ to the south pole, say, and still obtain a continuous family of projectors on the sphere. The above integral, which is manifestly invariant by change of variables from its $PdP\wedge dP$ form, may then be computed for $M=\Sm^2$ and shown to be integral. A typical regularization consists in replacing $m$ by $m-\eta|k|^\alpha$ for $\alpha>1$ and $\eta\not=0$. We may then show that $\tilde c[\Pi_1]=\frac12(\sgn{m}+\sgn{\eta})\in\Zm$ \cite{B-BulkInterface-2018}.

A similar regularization, based on replacing $f$ by $f-\eta|k|^2$ in the $3\times3$ model \eqref{eq:s33}, also allows one to define a topological invariant on the sphere $\Sm^2$. One then finds that $\tilde c[\Pi_3]=-\tilde c[\Pi_1]= (\sgn{f}+\sgn{\eta})\in\Zm$ \cite{souslov2019topological,tauber2019bulk}. %\tb{One sign here or above is wrong.}

The bulk invariants of such regularized operators therefore depend on the sign of the regularization. The main advantage of this regularization is that it still applies in the setting where $H$ is no longer translationally invariant \cite{B-BulkInterface-2018}. We now propose a different method to bypass this somewhat artificial regularization term and define topological invariants in the specific situation of interest here, namely the analysis of interface Hamiltonians, with one half space essentially corresponding to a bulk invariant coming from $m_+$ or $f_+$ and the other half space corresponding to a (possible) sign change of the mass terms $m_-$ or $f_-$. More generally, let us assume the existence of two Hamiltonians
\begin{equation}\label{eq:Hpm}
   H^\pm(k) = \dsum_{i=1}^n h^\pm_i(k) \Pi^\pm_i(k)
\end{equation}
with smooth projectors $k\to\Pi_i^\pm(k)$. The structure of the energies $h_i^\pm(k)$ is irrelevant beyond the existence of well-defined gaps and each $h_i(k)$ may be continuously modified (homotopically transformed) to a single value $h_i$ (flat band) that depends on neither $k$ nor $\pm$. In the applications of interest here, the two Hamiltonians are identical except for the value of their mass terms $\mu_\pm$. 

\medskip\noindent
{\bf Construction of the bulk-difference invariants.}
Let $\Pi^\pm(k)= \sum_{i\in I} \Pi^\pm_i(k)$ for $I\subset\{1,\ldots,n\}$ independent of $k\in\Rm^2$ be two smooth families of projectors.
Defining $k=|k|\theta$, we {\em assume} the continuous matching (gluing condition) of the projectors in all directions at infinity:
\begin{equation}\label{eq:sphere}
  \lim_{|k|\to\infty} \Pi^+(|k|\theta) =  \lim_{|k|\to\infty} \Pi^-(|k|\theta) \qquad \mbox{ for all } \theta\in \Sm^1.
\end{equation}
We assume that these limits exist and are continuous in $\theta$. In our applications, the projectors at $\infty$ do not depend on the mass terms $m$ or $f$ and thus satisfy the above hypothesis.

We then {\em define} a new projector $\Pi(k)$ for $k$ an element in the union of two planes $P_\pm\simeq\Rm^2$ that are wrapped around the unit sphere $\Sm^2\simeq(P_+\sqcup P_-)\slash \sim $ so that the circles at infinity are glued (identified by $\sim$) along the sphere's equator. For $k\in P_\pm$, we define $\Pi(k)=\Pi^\pm(k)$. For a point $\phi$ on the sphere, a form of stereographic projection $\pi$ maps $\phi$ in the upper half sphere to $k\in P_+$ and $\phi$ in the lower half sphere to $k\in P_-$. More precisely, with $\phi\in \Sm^2$ parametrized by $(x,y,z)$, we have
\[
   (x,y) = \frac{k}{\sqrt{1+|k|^2}},\quad z=\frac{\pm1}{\sqrt{1+|k|^2}}, \qquad k\in P_\pm,
\] 
with $\pi$ the inverse map, i.e., $k=\pi(\phi)$. We then define $\pi^*\Pi(\phi)=\Pi(\pi(\phi))$ the pull back by $\pi$ (still called $\Pi(\phi)\equiv\pi^*\Pi(\phi)$ to simplify notation) a projector that is now continuous on $\Sm^2$ thanks to the continuity assumption \eqref{eq:sphere}. We may therefore define the Chern numbers as integrals over the sphere, a compact cycle, which written on the sphere and then pushed by $\pi$ to the planar variables, are given by
\begin{equation}\label{eq:cbd}
  c[\Pi] =  \dfrac{i}{2\pi} \dint_{\Sm^2} {\rm tr} \Pi d\Pi \wedge d\Pi =  \dfrac{i}{2\pi} \dint_{\Rm^2} {\rm tr} \Big( \Pi^-[\partial_1\Pi^-,\partial_2\Pi^-] - \Pi^+[\partial_1\Pi^+,\partial_2\Pi^+]\Big)  dk,
\end{equation}
where the $-$ sign above is necessary to ensure that $\Sm^2$ has a given orientation, here inherited from that of the lower plane $P_-$ and opposite that of the upper plane $P_+$. This ensures that $\Sm^2$ also inherits its orientation from the $dx\wedge dy\wedge dz>0$ positive orientation of $\Rm^3$ it is embedded in.
\begin{definition}[Bulk-difference invariant]\label{def:bulkdifference}
 Let $H^\pm(k)$ be decomposed as in \eqref{eq:Hpm}  and let $\Pi^\pm(k)= \sum_{i\in I} \Pi^\pm_i(k)$ for $I\subset\{1,\ldots,n\}$ satisfying the  gluing condition \eqref{eq:sphere}. Then the above integrals $c[\Pi]$ are well defined integers we call the   {\em bulk-difference} invariants. We define $c_i=c[\Pi_i]$ when $I=\{i\}$.
\end{definition}
These invariants are by construction immune to any (continuous) perturbation of $H(k)$ that maintains the spectral gaps  and the gluing assumptions as $|k|\to\infty$ and satisfy the general  additivity property of Chern numbers  \cite{avron1983homotopy} resulting from the additivity property of Chern classes \cite{nakahara2003geometry}:
\[
  c[\Pi_i+\Pi_{i+1}] = c[\Pi_i] + c[\Pi_{i+1}].
\]
When the two Hamiltonians satisfy $H_+=H_-$, then the above integral vanishes. However, for the system $H_\pm=k\cdot\sigma  +m_\pm\sigma_3$, which satisfies all the above assumptions, we find, following calculations as in e.g.  \cite{B-BulkInterface-2018} that
\begin{equation}\label{eq:chern2}
 c_- :=c[\Pi_1] = -c_+ :=- c[\Pi_2] = \frac12 \sgn{m_-} - \frac12 \sgn{m_+} \in\Zm
\end{equation}
which is a bona fide invariant even in the absence of regularization ($\eta=0$ above). For the system \eqref{eq:s33}, which also satisfies the above assumptions, we find that 
\begin{equation}\label{eq:chern3}
  c_+ :=c[\Pi_3]= c_0+c_+ := c[\Pi_3+\Pi_2]= - c_- :=- c[\Pi_1]= \sgn{f_-} - \sgn{f_+}  \in \Zm.
\end{equation}
%\tb{Question on why $c_0=0$. Is it immediately clear. Answer is no. This is a calculation. But this could also be obtained by symmetry.}
The calculations are given in some detail in Appendix \ref{sec:appeq}; see \eqref{eq:eigen3} and the following computations. For the above problem, only $c_+$ needs to be computed since $c_-=-c_+$ by symmetry and $c_++c_0+c_-=0$. 
\medskip

Explicit calculations of Chern numbers may be obtained in several ways. The first one is to compute the Berry curvature $i{\rm tr}\ \Pi_i d\Pi_i \wedge d\Pi_i$ directly (from $\psi_i(k)$) and integrate it over the plane(s) $\Rm^2$, as done for instance in \cite{B-BulkInterface-2018,delplace2017topological,souslov2019topological}. The second method consists in directly looking at the line bundle generated by $\psi_i(k)$ and how charts covering the sphere need to be glued by an appropriate transition of connections to respect the twists of the eigenvectors \cite{bernevig2013topological,Fruchart2013779,tauber2019bulk}. The third method, which is the most versatile when it applies, recasts the Chern number as the degree of a map $k\to {\rm h}(k)=(\xi,\zeta,m)\in\Rm^3\backslash\{0\}$ \cite{B-BulkInterface-2018,Fruchart2013779,graf2020topology}.  In particular, it applies to system \eqref{eq:s33} seen as a spin-$1$ representation and provides the computations in \eqref{eq:chern3} \cite{graf2020topology}.
%Obtaining such a map for the more complex $\mathfrak{su}(3)$ algebra is not possible in general, although direct expressions for the Berry curvature from the coefficients of the Hamiltonian have also been worked out \cite{barnett20123}.

\medskip\noindent{\bf Green's function invariant.}
We now consider a different form of the bulk-difference  invariant based on the notion of resolvent or Green's function \cite{gurarie2011single,volovik2009universe} and similar to calculations based on the Kubo formula \cite{bernevig2013topological}. It is given by
\[
   G=G_\alpha(\omega,k) = (z-H(k))^{-1} = \dsum_{i=1}^n (z-h_i(k))^{-1} \Pi_i(k)
\]
for $k$ in a domain $K$ and $z=\alpha+i\omega$ with $\alpha$ a fixed real number in a {\em global} spectral gap, i.e., $\alpha\not=h_i(k)$ for all $1\leq i\leq n$ and $k\in K$. Therefore, $G$ and $G^{-1}$ are well-defined with obviously $G^{-1}(k)=z-H(k)$. 

Here, we consider $K=\Sm^2\simeq(\Rm^2\sqcup\Rm^2)\slash \sim$, the sphere introduced earlier. We thus assume that for each $1\leq i\leq n$, $\Pi_i^\pm(k)$ satisfy the gluing conditions \eqref{eq:sphere}. In addition, we assume that $(z-h_i^\pm(k))^{-1}$ also satisfy the gluing conditions  \eqref{eq:sphere}. Thus, the resulting $(z-h_i)^{-1} \Pi_i$ is continuous on $\Sm^2\simeq(\Rm^2\sqcup\Rm^2)\slash \sim$. We identify $\omega\in\Rm$ with its one point compactification  $\Rm\slash\sim\simeq\Sm^1$ where the points at infinity $\omega\to\pm\infty$ are identified. We observe that $(z-h_i)^{-1}$ and hence $G$ is also continuous on  $\Rm\slash\sim\simeq\Sm^1$. We have thus constructed a continuous function from $M=\Sm^1\times\Sm^2 \simeq (\Rm\slash \sim)\times(\Rm^2\sqcup\Rm^2)\slash \sim$ to $GL_n(\Cm)$ the space of $n-$dimensional invertible matrices. 

Associated to the Green's function is the three dimensional (generalized) winding number
\begin{equation}\label{eq:Walpha}
 W_\alpha = \dfrac{1}{24\pi^2}  \dint_M {\rm tr}  (dG^{-1} G)^{\wedge 3}  =  \dfrac{1}{8\pi^2} \dint_M  {\rm tr}\partial_\omega G^{-1}G [ \partial_1 G^{-1}G, \partial_2 G^{-1}G] d\omega d^2k,
\end{equation}
where by a slight abuse of notation, the above right-hand side should be interpreted as an integration over the two half planes $P_\pm$ as in \eqref{eq:cbd}. Since the groups \cite{nakahara2003geometry} $\pi_3(\Sm^1\times \Sm^2)=\pi_3(\Sm^1)\oplus \pi_3(\Sm^2) =\Zm$, the (generalized) three-dimensional winding number of $G$ is a well-defined integer. 

The above winding number is related to the Chern numbers we previously defined \cite{gurarie2011single,prodan2016bulk,volovik2009universe} by:
%%%%%%%%%%
\begin{lemma}
\label{lem:equivCW}
For $\alpha$ in a global spectral gap, let $G=G_\alpha$ be constructed from $H^\pm(k)$ as described above.
Let $W_\alpha$ and $c_i=c[\Pi_i]$ be defined in \eqref{eq:Walpha} and \eqref{eq:cbd}, respectively. Then we have the relation
\begin{equation}\label{eq:cW}
  W_\alpha = -\dsum_{h_i<\alpha} c_i = \dsum_{h_i>\alpha} c_i.
\end{equation}
\end{lemma}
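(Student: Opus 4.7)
The proof plan I would carry out splits into two reductions followed by an explicit flat-band calculation. The easier equality $-\sum_{h_i<\alpha}c_i=\sum_{h_i>\alpha}c_i$ is immediate from the additivity $c[\Pi+\Pi']=c[\Pi]+c[\Pi']$ recorded after Definition \ref{def:bulkdifference}, together with $c[I]=0$ (since $dI=0$ makes the integrand in \eqref{eq:cbd} vanish identically). Applying additivity to $I=\sum_i\Pi_i$ gives $\sum_i c_i=0$, so the second equality follows from the first. It therefore suffices to prove $W_\alpha=-c[P_F]$, where $P_F:=\sum_{h_i<\alpha}\Pi_i$ is the Fermi projection.

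For the main identity I would exploit homotopy invariance. The integer $W_\alpha$ is the degree of the continuous map $G_\alpha:M=\Sm^1\times\Sm^2\to GL_n(\Cm)$ and is therefore invariant under any continuous deformation of $H^\pm(k)$ that keeps $z-H(k)$ invertible throughout $M$ (equivalently, keeps $\alpha$ in a spectral gap) and respects the gluing \eqref{eq:sphere}. I would shift so that $\alpha\in(-1,1)$ if needed and then linearly interpolate each eigenvalue $h_i^\pm(k)$ to the constant value $-1$ (if $h_i<\alpha$) or $+1$ (if $h_i>\alpha$), while leaving all projectors $\Pi_i^\pm(k)$ unchanged. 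Throughout this deformation $\alpha$ remains in the spectral gap, the gluing at $|k|=\infty$ is untouched, and $G$ stays invertible on $M$, so $W_\alpha$ does not vary. The Chern number $c[P_F]$ is manifestly unchanged as well since it depends only on $P_F$. Hence one may assume $H=-P_F+Q_F$ with $Q_F:=I-P_F$.

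For this flat Hamiltonian $G^{-1}=(z-1)I+2P_F$, so
\[
  G=g_-P_F+g_+Q_F,\qquad \partial_\omega G^{-1}=iI,\qquad \partial_j G^{-1}=2\,\partial_j P_F\quad(j=1,2),
\]
with $g_\mp(\omega)=(\alpha+i\omega\pm 1)^{-1}$. Using the off-diagonal decomposition $\partial_j P_F=P_F(\partial_j P_F)Q_F+Q_F(\partial_j P_F)P_F$ (a direct consequence of $P_F^2=P_F$), the ``diagonal'' products $(P_F\partial_j P_FQ_F)(P_F\partial_k P_FQ_F)$ and their $-+/-+$ counterparts vanish, so substituting into the integrand of \eqref{eq:Walpha} every surviving matrix expression in $\tr\partial_\omega G^{-1}G[\partial_1 G^{-1}G,\partial_2 G^{-1}G]$ carries the common $\omega$-dependent factor $g_+g_-(g_--g_+)$, while the $k$-dependent factor reassembles (by cyclicity of the trace) into $\tr P_F[\partial_1 P_F,\partial_2 P_F]$. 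The $\omega$-integral $\int_\Rm g_+g_-(g_--g_+)\,d\omega$ is a residue computation: the integrand has double poles at $\omega=i(\alpha\pm 1)$ lying on opposite sides of the real axis precisely because $\alpha$ is in the gap, and closing the contour in either half-plane evaluates it to a single explicit constant. Combining this constant with the prefactor $1/(8\pi^2)$ in $W_\alpha$ and with the orientation convention of \eqref{eq:cbd} (in which the contributions of $P_+$ and $P_-$ enter with opposite signs) reproduces exactly the prefactor $i/(2\pi)$ and overall minus sign of $-c[P_F]$, yielding the claimed identity.

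The step I expect to demand the most care is the sign and orientation bookkeeping. The convention $z=\alpha+i\omega$ (versus $\alpha-i\omega$), the direction in which the $\omega$-contour is closed, the relative orientations of $P_+$ and $P_-$ in $M$ inherited from \eqref{eq:cbd}, and the sign in $\partial_j G^{-1}G=-G^{-1}\partial_j G$ must all conspire to produce the specific minus sign in $W_\alpha=-\sum_{h_i<\alpha}c_i$. A useful sanity check is the case of $k$-independent $\Pi_i$, where both sides vanish.
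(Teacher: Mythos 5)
Your approach is a genuinely different route from the paper's. The paper proves the lemma by a direct computation: it substitutes $G=\sum_i(z-h_i)^{-1}\Pi_i$ into the integrand, carries out the $\omega$-integral by partial fractions (using $\int_\Rm(\alpha+i\omega)^{-1}d\omega=\pi\sgn\alpha$ and that higher negative powers integrate to zero), and then reduces the resulting sum over pairs $(i,j)$ to $\sum_{i\le i_0}\tr\Pi_i[\partial_1\Pi_i,\partial_2\Pi_i]$ by projector algebra. A byproduct of that route is the intermediate formula \eqref{eq:w3trace2}, which the paper uses again in Appendix~\ref{sec:appeq} to compute $W_\alpha$ explicitly for the $3\times3$ system. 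Your reduction-to-flat-bands argument is shorter and more conceptual, and the flat-band computation itself (the $g_\pm$ algebra, $P_F\partial_jP_FP_F=0$, and the residue $\int_\Rm g_+g_-(g_--g_+)\,d\omega=-\pi$ producing exactly $-c[P_F]$) is correct; I checked the signs and the orientation convention does conspire as you hoped.

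There is, however, one genuine gap in your homotopy step. You linearly interpolate the eigenvalues $h_i^\pm(k)$ to the constants $s_i=\pm1$ while keeping $\Pi_i$ fixed, and appeal to homotopy invariance of the degree of $G:M\to GL_n(\Cm)$. But degree invariance requires the family $G_t$ to be jointly continuous on $[0,1]\times M$, and your interpolation fails this when the bands are unbounded, which is precisely the situation in both model systems (for \eqref{eq:s22}, $h_\pm(k)=\pm\sqrt{|k|^2+m^2}\to\pm\infty$; similarly for \eqref{eq:s33}). Near the equator of $\Sm^2$ and for $t<1$ one still has $h_i^{(t)}(k)=(1-t)h_i(k)+ts_i\to\pm\infty$, so the corresponding term $(z-h_i^{(t)})^{-1}\Pi_i$ in $G_t$ vanishes; at $t=1$ it is suddenly $(z-s_i)^{-1}\Pi_i\neq0$. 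So $G_t$ is continuous on $M$ for each fixed $t$, but $(t,m)\mapsto G_t(m)$ is discontinuous at $t=1$ on the equator, and the degree argument as stated does not apply. The fix is simple: interpolate the Green's function rather than the spectrum. Set
\begin{equation*}
  G_t=(1-t)G+tG^{\mathrm{flat}},\qquad G^{\mathrm{flat}}=(z+1)^{-1}P_F+(z-1)^{-1}Q_F .
\end{equation*}
Then $G_t=\sum_i\bigl[(1-t)(z-h_i)^{-1}+t(z-s_i)^{-1}\bigr]\Pi_i$, and the scalar coefficient vanishes only if $z=(1-t)s_i+th_i$, which is impossible for $\omega\neq0$ and, at $\omega=0$, is excluded because $(1-t)s_i+th_i$ stays strictly on the same side of $\alpha$ as $h_i$. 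Moreover $G_t$ is jointly continuous on $[0,1]\times M$ because $G$ and $G^{\mathrm{flat}}$ both are, so the degree is indeed constant and the rest of your argument goes through unchanged. With that modification your proof is correct and a valid, arguably cleaner, alternative to the paper's; the trade-off is that the paper's explicit computation yields formula \eqref{eq:w3trace2}, which it then reuses for concrete Chern number evaluations.
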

%%%%%%%%
%\tb{Is this lemma known? Probably somewhere but where.}
The proof of the lemma is given in Appendix \ref{sec:appeq}. 
%Note that in situations where $\alpha<h_1(k)$ or $\alpha>h_n(k)$  (i.e., when there is a gap at $\infty$, which is not the case for the two problems considered above), then $W_\alpha=0$. This is consistent with the fact that the sum of the Chern numbers vanishes since $\sum_{i=1}^n \Pi_i=I$ corresponds to a trivial bundle over $\Sm^2$.

Note that the Chern numbers $c[\Pi_i]$ have been defined for sufficiently smooth projectors without conditions on the energies $h_i(k)$ besides local gaps ensuring the regularity of $k\to\Pi_i(k)$. In contrast, $G_\alpha$ is defined only for $\alpha$ in a global spectral gap. The global spectral gap condition is satisfied in the applications considered here, with a global gap $\alpha\in (-|m|,|m|)$ for \eqref{eq:s22} and two global gaps $\alpha\in (-|f|,0) \cup ( 0,|f|)$ for \eqref{eq:s33}.

The form of the invariant \eqref{eq:Walpha} appears naturally in the subsequent analysis of the bulk-interface correspondence. It also offers a convenient computational tool, in particular \eqref{eq:w3trace2} below that appears in the proof of the above lemma. The formula is used to compute the bulk-difference invariant for the system  \eqref{eq:s33} in Appendix \ref{sec:appeq}.

\section{Bulk-Interface correspondence}
\label{sec:BI}
We return to the computation of the interface invariant $\sigma_I$.

The bulk-interface correspondence stipulates that the amount of asymmetric current, or equivalently the number of topologically protected edge states, is given by the bulk-difference invariant. For the system \eqref{eq:s22}, the difference therefore equals $c_-=2\pi\sigma_I=-\sgn{m}$ in the topologically nontrivial case, whereas for the system \eqref{eq:s33}, it is given by $c_-=2\pi\sigma_I=2\sgn{f}$. In both cases, $\mu(y)$ is a smooth switch function in $\fS[-\mu_,+\mu]$.

Many techniques have been developed to prove or at least build intuition on the correspondence between bulk invariants and the number of edge or interface modes. The edge problem considers bulk Hamiltonians in the plane and their restriction to a half plane with appropriate boundary conditions along the edge. The relation between the bulk invariant and the number of topological edge states is then referred to as a bulk-boundary correspondence \cite{bourne2017k,bourne2018chern,elbau2002equality,Graf2013,graf2018bulk,hatsugai1993chern, kellendonk2004quantization,ludewig2020cobordism,prodan2016bulk}. 

This paper considers a similar problem,  the domain wall problem, where the order parameter $\mu(y)$ transitions, smoothly or not, from one bulk topology to another. The relation between the two bulk invariants, and more precisely the bulk-difference as established in the preceding section, and the number of protected interface modes is the bulk-interface correspondence \cite{fukui2012bulk,essin2011bulk,volovik2009universe}. A brief discussion on these works on spectral flows and spectral asymmetries and their relation to the results of this paper is presented in Appendix \ref{sec:bic}.

The main advantage of deriving a bulk-interface correspondence is that it avoids the explicit spectral decomposition and the search for topologically non-trivial branches of continuous spectrum that we carried out in section \ref{sec:ii}. The computation of $\sigma_I$ is directly obtained from calculations of presumably simpler bulk invariants. 

The main result of this section is to recast the interface conductivity as an appropriate integral of the symbol of the Hamiltonian $H$ that is familiar in index theory \cite{atiyah1975spectral,H-III-SP-94,niemi1984spectral}. This integral  is in the form of a Fedosov-H\"ormander formula and computes the index of a Fredholm operator naturally related to $H$. By an application of the Stokes' theorem, it is also related to the bulk-difference invariant in \eqref{eq:Walpha}. This correspondence  essentially realizes the topological charge conservation of \cite{essin2011bulk,volovik2009universe}.

%Under assumptions that are necessary to set up a derivation based on semiclassical calculus, we obtain below that the topology of the interface invariant may be directly related to an appropriate integral of its symbol that is familiar in index theory \cite{atiyah1975spectral,H-III-SP-94,niemi1984spectral}. This essentially realizes the topological charge conservation of \cite{essin2011bulk,volovik2009universe}. The computation of that index by a Fedosov-H\"ormander formula involves a form of topological charge of the symbol of the operator $H$. The latter may then be easily related to the bulk-difference invariant computed in section \ref{sec:bulk} as a winding number of a Green's function.

\medskip

We recall the chain of relations obtained under different assumptions in section \ref{sec:ii}:
\begin{equation}\label{eq:rel}
  2\pi \sigma_I = 2\pi {\rm Tr} \ i[H,P] \varphi'(H) = -\Tr [P,U(H)]U^*(H) =  {\rm Index} \ PU(H)P_{{\rm Ran P}} =:I[H]. \end{equation}
We recall that $\varphi$ is a smooth function increasing from $0$ to $1$ with $\varphi'\geq0$ supported in a spectral gap while $U(H)=e^{i 2\pi \varphi(H)}$. The main objective of the section \ref{sec:icpdo} is to find sufficient conditions on the symbol of $H$ such that \eqref{eq:rel} holds. 

Once $2\pi\sigma_I$ will be guaranteed to be integer-valued and constant over continuous deformations of $H$, we will evaluate the trace defining $\sigma_I$ by applying semiclassical calculus tools similar to those used to derive the index of elliptic pseudo-differential operators on $\Rm^2$ \cite[Chapter 19]{H-III-SP-94} in section \ref{sec:smc}.

\subsection{Interface conductivity and pseudo-differential calculus}
\label{sec:icpdo}

The terminology used in the rest of the paper on pseudo-differential operator (\pdo) and semiclassical pseudo-differential operators (\hpdo) is borrowed from \cite{dimassi1999spectral}; see also \cite{bolte2004semiclassical} for the extension of results to matrix-valued symbols. Appendix \ref{sec:hpdo} summarizes the notation and results on \pdo\ and \hpdo\ we use in this paper. 
Throughout, we use the notation $\aver{X}=\sqrt{1+X^2}$ for $X\in\Rm$ and $\aver{X}^{-\infty}$ as a quantity bounded by $C_N\aver{X}^{-N}$ for every $N\in\Nm$.

\medskip

We now find sufficient criteria ensuring that  $[H,P] \varphi'(H)$ and $[P,U(H)]U^*(H)$ are trace-class.
Let $H={\rm Op}^w(\sigma)$ be a self-adjoint \pdo\ with symbol $\sigma=\sigma(x,y,\xi,\zeta)$ in $S(m_n)$ with $m_n(x,y,\xi,\zeta):=(\aver{x}+\aver{y}+\aver{\xi}+\aver{\zeta})^n$ for $n\in\Nm$. Let $\alpha\in\Rm$ be an energy level. We want to show that $(H-\alpha)^2={\rm Op}^w(\tau_\alpha)$  with symbol $\tau_\alpha=(\sigma-\alpha)\sharp(\sigma-\alpha)$ is well-approximated by a positive-definite operator $G_\alpha={\rm Op}^w(\tilde\tau_\alpha) $.

%%%%%%%%% PROPOSITION 4.1
\begin{proposition}\label{prop:traceclass}
  Let $\alpha\in\Rm$ and $E_0>0$.
  Let $H={\rm Op}^w(\sigma)$ with $\sigma\in S(m_n)$ and $(H-\alpha)^2={\rm Op}^w(\tau_\alpha)$ as above.\\
 We assume the existence of  $G_\alpha={\rm Op}^w(\tilde\tau_\alpha)$ a self-adjoint operator such that $G_\alpha$ has no spectrum in $(-\infty,E_0^2]$ and such that {either}: 
\\[2mm] 
{\rm (H1)} $\tilde\tau_\alpha=\tau_\alpha$ for $y^2+\zeta^2+\xi^2>R^2$; {or}: 
\\[2mm]
{\rm (H2)} $\tau_\alpha-\tilde\tau_\alpha$ in $S(m)$ with  $m=\aver{y}^{-\infty}(\aver{\xi}+\aver{\zeta})^{2n}$ and $(I+(H-\alpha)^2)^{-1}$ has Weyl symbol in $S(m)$ with  $m=(\aver{\zeta}+\aver{\xi})^{-s}$ for $s>0$.
\\[2mm]
Let $\phi$ be a smooth function with compact support in $(-E_0,E_0)$ and $P$ a smooth switch function in $\fS[0,1]$.
  \\[2mm]
Then $[P,\phi(H-\alpha)]$ and $[P,H]\phi(H-\alpha)$ are trace-class operators with traces computed as the integral of their Schwartz kernel along the diagonal.
\end{proposition}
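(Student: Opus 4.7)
The plan is to use Helffer--Sj\"ostrand functional calculus together with pseudodifferential calculus to represent $\phi(H-\alpha)$ as a \pdo\ whose Weyl symbol decays (compactly or rapidly) in $(y,\xi,\zeta)$, and then exploit that $P'(x)$ is compactly supported to produce Schwartz Weyl symbols for the commutators.

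First I would write, via Helffer--Sj\"ostrand applied to the self-adjoint operator $H-\alpha$,
\[
\phi(H-\alpha) \;=\; -\frac{1}{\pi}\int_{\Cm} \bpar \tilde\phi(z)\,(z-(H-\alpha))^{-1}\, dL(z),
\]
with $\tilde\phi$ a compactly supported almost-analytic extension of $\phi$ localized in a complex neighborhood of $\mathrm{supp}(\phi) \subset (-E_0,E_0)$, so that $\bpar\tilde\phi$ vanishes to infinite order on the real axis.

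Second I would build a \pdo\ parametrix for the resolvent $(z-(H-\alpha))^{-1}$ with $\bpar\tilde\phi$-controlled dependence on $z$. Under hypothesis (H1), on the complement of the ball $\{y^2+\xi^2+\zeta^2 \leq R^2\}$ one has $\tau_\alpha = \tilde\tau_\alpha$ with $\tilde\tau_\alpha \geq E_0^2$, so $z^2-\tau_\alpha$ is elliptic in the appropriate class with invertible symbol for $|z|<E_0$; a Neumann-type iteration then yields a Weyl symbol for $(z^2-(H-\alpha)^2)^{-1}$, and hence for $(z-(H-\alpha))^{-1} = (z+(H-\alpha))(z^2-(H-\alpha)^2)^{-1}$, which is compactly supported in $(y,\xi,\zeta)$ outside the ball modulo smoothing corrections. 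Under hypothesis (H2), the parametrix instead uses the $\aver{y}^{-\infty}$ decay of $\tau_\alpha-\tilde\tau_\alpha$ together with the $(\aver{\xi}+\aver{\zeta})^{-s}$ decay of the Weyl symbol of $(I+(H-\alpha)^2)^{-1}$ to produce, by iteration, a resolvent symbol decaying in all three variables $(y,\xi,\zeta)$. Integrating the parametrix against $\bpar\tilde\phi(z)\,dL(z)$ is convergent in the relevant symbol topology because $\bpar\tilde\phi$ vanishes to infinite order at the real axis, yielding $\phi(H-\alpha)=\mathrm{Op}^w(a)$ with $a$ Schwartz (under (H1)) or with the prescribed decay (under (H2)) in $(y,\xi,\zeta)$, uniformly in $x$.

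Third, I would compute the commutator symbols via the Moyal product. Since $P=P(x)\in\fS[0,1]$ depends only on $x$, every nonzero term in the asymptotic expansion of $P\sharp a - a\sharp P$ carries a factor $P^{(k)}(x)$ with $k\geq 1$, supported on the compact $x$-interval where $P$ transitions, multiplied by $\partial_\xi^k a$ which still enjoys the rapid decay in $(y,\xi,\zeta)$. Thus $[P,\phi(H-\alpha)]$ has a Schwartz Weyl symbol, and is trace class with trace given by the diagonal integral of its (smooth, integrable) Schwartz kernel, via the standard formula $\Tr\mathrm{Op}^w(b)=(2\pi)^{-2}\int b$. The same reasoning applies to $[P,H]\phi(H-\alpha)$: the symbol of $[P,H]$ involves $P'(x)$ times $\partial_\xi\sigma$ (plus lower order), so it is compactly supported in $x$, and its Moyal product with the rapidly decaying symbol of $\phi(H-\alpha)$ is again Schwartz.

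The main obstacle is the second step: carefully keeping track of the symbol class and of the $z$-dependence in the parametrix to ensure that the error terms are controlled uniformly and that the contour integral indeed converges in an appropriate symbolic topology. Under (H2) this is particularly delicate because the iterative construction must preserve the joint $\aver{y}^{-\infty}$ and $(\aver{\xi}+\aver{\zeta})^{-s}$ decay through composition with operators that a priori can grow polynomially in those variables; this is where the ellipticity provided by $G_\alpha$ having no spectrum below $E_0^2$ enters decisively to absorb such growth.
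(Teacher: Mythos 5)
Your approach and the paper's differ in a way that matters. The paper does \emph{not} apply Helffer--Sj\"ostrand directly to $\phi(H-\alpha)$ and then try to build a parametrix for $(z-(H-\alpha))^{-1}$. Instead, it uses the spectral hypothesis on $G_\alpha$ to produce a cutoff $\psi$ with $\psi(G_\alpha)=0$ and $\psi(\lambda^2)\phi(\lambda)=\phi(\lambda)$, so that $\phi(H-\alpha)=\big(\psi(H_1)-\psi(H_2)\big)\phi(H-\alpha)$ with $H_1=(H-\alpha)^2$, $H_2=G_\alpha$. Helffer--Sj\"ostrand is then applied to the \emph{difference} $\psi(H_1)-\psi(H_2)$, which the resolvent identity rewrites as $-\pi^{-1}\int \bar\partial\tilde\psi\,(z-H_1)^{-1}(H_1-H_2)(z-H_2)^{-1}\,d^2z$. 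The middle factor $H_1-H_2$ is exactly where (H1) or (H2) is used: it has Weyl symbol decaying in $(y,\xi,\zeta)$, while the two resolvents are merely bounded \pdo\ with symbols in $S(1)$, so the composition inherits the decay. The $x$-decay then comes separately from the commutator with $P$, as you correctly note.

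The gap in your argument is the claim that the Weyl symbol of $(z-(H-\alpha))^{-1}$ (or of $(z^2-(H-\alpha)^2)^{-1}$) is ``compactly supported in $(y,\xi,\zeta)$ outside the ball modulo smoothing corrections.'' This is false. The leading symbol $(z-(\sigma-\alpha))^{-1}$ is bounded on the support of $\bar\partial\tilde\phi$, and for differential operators it does decay as $(\xi,\zeta)\to\infty$ because $\tau_\alpha$ grows there, but it does \emph{not} decay as $|y|\to\infty$: the coefficients are bounded in $y$, so the resolvent symbol stays of size $O(1)$ there. The decay in $y$ of the symbol of $\phi(H-\alpha)$ is genuine, but it is a cancellation occurring only after the $z$-integration (heuristically, the leading symbol of $\phi(H-\alpha)$ is $\phi(\sigma-\alpha)$, which vanishes for $|y|$ large because the bulk gap covers $\mathrm{supp}\,\phi$); it cannot be read off the resolvent symbol factor by factor, and controlling the remainder so that this cancellation persists is precisely the technical difficulty your sketch acknowledges at the end but does not resolve. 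The paper's construction sidesteps this difficulty entirely: by factoring out $H_1-H_2$, the decay in $(y,\xi,\zeta)$ is manifest term-by-term in the Helffer--Sj\"ostrand integrand, with no cancellation needed, and the remaining $(z-H_1)^{-1}$, $(z-H_2)^{-1}$ only need $S(1)$ bounds uniform on $\mathrm{supp}\,\bar\partial\tilde\psi$. (A smaller imprecision: (H1) gives $\tilde\tau_\alpha=\tau_\alpha$ outside a ball, and separately that the \emph{operator} $G_\alpha$ has no spectrum below $E_0^2$; you slide between this and the pointwise bound $\tilde\tau_\alpha\ge E_0^2$, which is related via G\aa rding but is a genuinely different statement.)

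To repair your approach you would essentially have to reconstruct the $\psi(H_1)-\psi(H_2)$ trick or prove a refined ``semiclassical Paley--Wiener'' statement quantifying the vanishing of $\phi(\sigma-\alpha)$ and all Moyal corrections outside the gap-closing region, and show the remainder can be absorbed. The paper's route is both shorter and cleaner, and is the one you should compare against.
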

%%%%%
\begin{proof}
   By hypothesis, $G_\alpha$ has no spectrum in $(-\infty,E_0^2+\delta)$ for some $\delta>0$ and we can thus construct a smooth function compactly supported $\psi:[0,\infty)\to\Rm$ such that $\psi(G_\alpha)=0$ and $\psi(E)=1$ for $0\leq E\leq E_0$. This implies that $\psi(\lambda^2) \phi(\lambda)=\phi(\lambda)$ by assumption on $\phi$. 

   By spectral calculus and following a similar construction in \cite{helffer1983calcul}, we have
\[
    \phi(H-\alpha)=\psi(H_1) \phi(H-\alpha) = \big(\psi(H_1)-\psi(H_2)) \phi(H-\alpha),
\] 
with $H_1=(H-\alpha)^2$ and $H_2=G_\alpha$. Therefore, $[P,H]\phi(H-\alpha)$ is trace-class when $[P,H]\psi(H_1)$ is since $\phi(H-\alpha)$ is bounded.  Taking a commutator with $P$ provides localization in $x$ since
\[
  [P,A] =  \chi(x) A (1-\chi(x))-(1-\chi(x)) A \chi(x) ,
\] 
with $\chi(x)$ a multiplication operator with symbol bounded by $\aver{x}^{-\infty}$ for $x<1$ and $1-\chi(x)$ an  operator with symbol bounded by $\aver{x}^{-\infty}$ for $x>-1$. For $A$ a \pdo\ with symbol in $S(m_A)$, then $[P,A]$ has a symbol in $S(m_A\aver{x}^{-\infty})$ by composition of three \pdo; see \cite[Chapter 7]{dimassi1999spectral} and \cite{drouot2019microlocal}.

Let us first prove the results under hypothesis (H1). The Helffer-Sj\"ostrand formula \eqref{eq:hs}  (see also \cite{davies_1995,dimassi1999spectral} and \cite[Eq. (9.11)]{dimassi1999spectral}) gives the expression:
\[
 \psi(H_1) = \psi(H_1)-\psi(H_2) = -\frac1\pi \dint_{\Cm} \bar\partial\tilde\psi (z-H_1)^{-1}(H_1-H_2)(z-H_2)^{-1} d^2z,
\]
where $d^2z=d\lambda d\omega$ for $z=\lambda+i\omega$, and $\tilde\psi$ is an almost analytic extension of $\psi$ as in \eqref{eq:hs}; see also Appendix \ref{sec:hpdo}. The operators $(z-H_k)^{-1}$ have symbols in $S(1)$ bounded by $C|{\rm Im} z|^{-1}$ uniformly on the compact support of $\bar\partial\tilde\psi$, itself satisfying $|\bar\partial\tilde\psi|\leq C_N|{\rm Im} z|^{N}$ for any $N\geq1$, while the symbol of $(H_1-H_2)$ is in $S((\aver{y}+\aver{\xi}+\aver{\zeta})^{-\infty})$. By composition calculus, $\psi(H_1)$ is also a \pdo\ with symbol in  $S((\aver{y}+\aver{\xi}+\aver{\zeta})^{-\infty})$. As a consequence, $[P,H]\psi(H_1)$ has a symbol in $S(\aver{x}^{-\infty}(\aver{y}+\aver{\xi}+\aver{\zeta})^{-\infty})$.

The operator $\phi(H-\alpha)=\phi(H-\alpha)\psi(H_1)$ also has a symbol in $S((\aver{y}+\aver{\xi}+\aver{\zeta})^{-\infty})$. Therefore, $[P,\phi(H-\alpha)]$ has a symbol in $S(\aver{x}^{-\infty}(\aver{y}+\aver{\xi}+\aver{\zeta})^{-\infty})$. Both operators are thus trace-class by \cite[Theorems 9.3\&9.4]{dimassi1999spectral} with traces computed by integrating the symbols in $(x,\xi)$ or equivalently the Schwartz kernels along the diagonal.

Let us now assume hypothesis (H2). Then, for $p>0$,
\[
  \phi(H-\alpha) = \phi(H-\alpha) (I+H_1)^p (I+H_1)^{-p}(\psi(H_1)-\psi(H_2))
\]
with $\phi(H-\alpha) (I+H_1)^p$ a bounded operator and by the Helffer-Sj\"ostrand formula 
\[
 (I+H_1)^{-p}(\psi(H_1)-\psi(H_2))= -\frac1\pi \dint_{\Cm} \bar\partial\tilde\psi (z-H_1)^{-1}(I+H_1)^{-p}(H_1-H_2)(z-H_2)^{-1} dz.
\]
On the compact support of $\bar\partial\tilde\psi$,  the symbols of $(z-H_k)^{-1}$ and $\bar\partial\tilde\psi$ are bounded as described above and the symbol of $(I+H_1)^{-p}(H_1-H_2)$ is in $S(\aver{y}^{-\infty}(\aver{\xi}+\aver{\zeta})^{2n-sp})$. The symbols  of both $[P,\phi(H-\alpha)]$ and $[P,H]\phi(H-\alpha)$ are therefore in $S(\aver{x}^{-\infty}\aver{y}^{-\infty}(\aver{\xi}+\aver{\zeta})^{2n-sp})$ for any $p>0$ and since $s>0$, \cite[Theorems 9.3 \& 9.4]{dimassi1999spectral} apply again. 
\end{proof}

For $\alpha\in\Rm$ and $E_0>0$, let $\varphi$ be a smooth switch function in $\fS[0,1,\alpha-E_0,\alpha+E_0]$ and $U(H)=e^{i2\pi \varphi(H)}$. We prove the relations in \eqref{eq:rel} as follows.
%%%%%%%%%%%%%% COROLLARY 4.2
\begin{corollary}\label{cor:rel}
  Let $H$ as in the preceding proposition with either $\phi(H)=W(H)=U(H)-I$ or $\phi(H)=\varphi'(H)$. Let $G_\alpha$ as in the above proposition with no spectrum in $(-\infty,E_0^2]$ and such and either (H1) or (H2) holds. 

Then all terms in \eqref{eq:rel} are defined and equal.
\end{corollary}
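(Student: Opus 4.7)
The plan is first to verify, via two applications of Proposition \ref{prop:traceclass}, that all operators appearing in \eqref{eq:rel} are well defined and trace-class. Since $\varphi\in\fS[0,1,\alpha-E_0,\alpha+E_0]$, both $\varphi'$ and $W:=e^{i2\pi\varphi}-1$ are smooth and compactly supported in $(\alpha-E_0,\alpha+E_0)$; setting $\phi_1(\lambda)=\varphi'(\lambda+\alpha)$ and $\phi_2(\lambda)=W(\lambda+\alpha)$ therefore produces admissible inputs for the proposition. Applying it with $\phi=\phi_1$ and then with $\phi=\phi_2$ shows that $[P,\varphi'(H)]$, $[P,H]\varphi'(H)$, $[P,U(H)]=[P,W(H)]$ and $[P,H](U(H)-I)$ are all trace-class. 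Since $U^*(H)$ is unitary and hence bounded, $[P,U(H)]U^*(H)$ is trace-class; so is $i[H,P]\varphi'(H)$. Each of the three traces in \eqref{eq:rel} is therefore defined.

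Next, the plan is to establish the first equality through the Helffer-Sj\"ostrand formula \eqref{eq:hs}. Let $\widetilde W$ be an almost analytic extension of $W$ with compact support in $\Cm$. Using $[(z-H)^{-1},P]=(z-H)^{-1}[H,P](z-H)^{-1}$ and the Helffer-Sj\"ostrand representation of $W(H)$, one finds
\[
[W(H),P] \;=\; -\frac{1}{\pi}\int_{\Cm}\bpar\widetilde W(z)\,(z-H)^{-1}[H,P](z-H)^{-1}\,d^2z.
\]
Multiplying on the right by $U^*(H)$, taking the trace and using cyclicity under the $z$-integral, together with the fact that $U^*(H)$ commutes with $(z-H)^{-1}$, yields
\[
\Tr\,[W(H),P]\,U^*(H) \;=\; \Tr\,[H,P]\,W'(H)\,U^*(H),
\]
where we have invoked the spectral identity $W'(H)=-\pi^{-1}\int_{\Cm}\bpar\widetilde W(z)(z-H)^{-2}d^2z$. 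Since $W'(\lambda)=2\pi i\,\varphi'(\lambda)\,U(\lambda)$ and $UU^*=I$, we have $W'(H)U^*(H)=2\pi i\,\varphi'(H)$; combined with $[P,U(H)]=-[W(H),P]$ this produces
\[
-\Tr\,[P,U(H)]\,U^*(H) \;=\; 2\pi\,\Tr\,i[H,P]\,\varphi'(H),
\]
which is the first equality in \eqref{eq:rel}.

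For the remaining identity, the plan is to invoke the non-commutative index formula used in \cite{B-BulkInterface-2018,prodan2016bulk}: whenever $U$ is unitary and $[P,U]$ is trace-class, $PUP$ restricted to ${\rm Ran}\,P$ is Fredholm with index equal to $-\Tr\,[P,U]\,U^*$. The trace-class property of $[P,U(H)]$ was established in the first paragraph, so the formula applies directly and closes the chain.

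The main obstacle is the justification of cyclicity under the $z$-integral in the Helffer-Sj\"ostrand step. This rests on balancing the blow-up $\|(z-H)^{-1}\|\leq|\Im z|^{-1}$ on the support of $\bpar\widetilde W$ against the rapid decay $|\bpar\widetilde W(z)|\leq C_N|\Im z|^{N}$ built into an almost analytic extension; together with the symbol estimates already used in the proof of Proposition \ref{prop:traceclass}, this ensures that each intermediate operator $\bpar\widetilde W(z)(z-H)^{-1}[H,P](z-H)^{-1}U^*(H)$ is trace-class with trace integrable in $z$, so that Fubini and cyclicity apply.
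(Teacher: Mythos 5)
Your verification of the trace-class properties via two applications of Proposition \ref{prop:traceclass} (once with $\phi=\varphi'$, once with $\phi=W$) is exactly what the paper does, and your appeal to the non-commutative index formula for the last equality is the same citation-based step as in the paper's proof. Where you depart is the middle equality: the paper simply records that it holds once $[H,P]\varphi'(H)$ and $[P,U(H)]U^*(H)$ are trace-class, citing \cite{B-BulkInterface-2018,drouot2019microlocal,elbau2002equality,prodan2016bulk}, whereas you attempt a direct Helffer--Sj\"ostrand proof. The algebra in that attempt (the identity $[(z-H)^{-1},P]=(z-H)^{-1}[H,P](z-H)^{-1}$, the spectral identity $\phi'(H)=-\pi^{-1}\int\bpar\tilde\phi(z)(z-H)^{-2}d^2z$, and $W'(H)U^*(H)=2\pi i\varphi'(H)$) is all correct.

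The gap is in the final paragraph. You assert that each intermediate operator $\bpar\widetilde W(z)(z-H)^{-1}[H,P](z-H)^{-1}U^*(H)$ is trace-class for fixed $z$ with trace integrable in $z$, and justify this via the symbol estimates from Proposition \ref{prop:traceclass}. But those estimates hinge on the representation $\phi(H-\alpha)=\big(\psi(H_1)-\psi(G_\alpha)\big)\phi(H-\alpha)$, where the difference $\psi(H_1)-\psi(G_\alpha)$ has a Weyl symbol decaying rapidly in $y$ precisely because the symbols of $H_1=(H-\alpha)^2$ and $G_\alpha$ agree for $|y|$ large (under (H1) or (H2)). The resolvent $(z-H)^{-1}$ by itself inherits no such decay in $y$: under (H2), for instance, the symbol of $(I+(H-\alpha)^2)^{-1}$ lies in $S((\aver{\xi}+\aver{\zeta})^{-s})$ — decay in $(\xi,\zeta)$ but none in $y$ — so the best one can say about the symbol of $(z-H)^{-1}[H,P](z-H)^{-1}$ is that it lies in $S(\aver{x}^{-\infty}(\aver{\xi}+\aver{\zeta})^{-s'})$ for some $s'>0$. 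This is not integrable in $y$ over $\Rm$ and hence does not give a trace-class operator. Consequently the claimed pointwise trace-class property fails, and the Fubini/cyclicity step is unjustified as stated. To repair the argument one would need to insert the compactly supported energy cutoff $\psi(H_1)-\psi(G_\alpha)$ before (or instead of) one of the resolvents, or argue by approximation in trace norm — which is essentially why the paper defers to the references for this step.
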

%%%%%%%%%%%%%%
\begin{proof}
   The above proposition implies that $[P,U(H)]$ and $[P,H]\varphi'(H)$ are trace-class.  The first equality in \eqref{eq:rel} is a definition. The last equality (not counting the definition of $I[H]$) holds as an application of a Fedosov formula provided that $[P,U(H)]$ is compact; see \cite{B-BulkInterface-2018}. The middle inequality has been shown to hold using a variety of derivations \cite{B-BulkInterface-2018,drouot2019microlocal,elbau2002equality,prodan2016bulk}. Following \cite{B-BulkInterface-2018}, we recall that the above middle equality holds provided $[H,P] \varphi'(H)$ and $[P,U(H)]U^*(H)$ are trace-class. 
\end{proof}

The previous result shows that \eqref{eq:rel} holds when $(H-\alpha)^2$ is well approximated by some $G_\alpha$ such that $\|G_\alpha\|>E_0^2$. We will find sufficient criteria ensuring either (H1) or (H2) for the systems \eqref{eq:s22} and \eqref{eq:s33}. Before doing so, we prove several results showing that the conductivity $\sigma_I$ is stable against changes in $P$, $\varphi'$, and $H$. %In particular we have
%%%%%%%%%%%%%%%%%%%%
\begin{proposition}\label{prop:stabsigma}
 Let $P_j$ for $j=1,2$ be two smooth switch functions in $\fS[0,1]$ and $\varphi_j$ for $j=1,2$ be two smooth switch functions in $\fS[0,1,\alpha-E_0,\alpha+E_0]$ for $\alpha\in\Rm$ and $E_0>0$. Assume that $H$ satisfies the hypotheses of Corollary \ref{cor:rel}. Then we have
\[
  \sigma_I= {\rm Tr}\ i[P_1,H]\varphi_1'(H) ={\rm Tr}\ i[P_2,H]\varphi_2'(H).
\]
\end{proposition}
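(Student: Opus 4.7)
The plan is to invoke Corollary \ref{cor:rel}, which identifies the trace in question with an integer-valued Fredholm index, and then to exploit homotopy invariance of the Fredholm index. Under the standing hypothesis on $H$, for any switch functions $P \in \fS[0,1]$ and $\varphi \in \fS[0,1,\alpha-E_0,\alpha+E_0]$ the corollary yields
\[
2\pi\,{\rm Tr}\, i[P,H]\varphi'(H) \;=\; {\rm Index}\bigl(P U(H) P\bigr), \qquad U(H) = e^{2\pi i \varphi(H)},
\]
so it suffices to show that the Fredholm indices attached to the pairs $(P_1,\varphi_1)$ and $(P_2,\varphi_2)$ coincide.

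I interpolate linearly via $P_s := (1-s)P_1 + sP_2$ and $\varphi_s := (1-s)\varphi_1 + s\varphi_2$ for $s \in [0,1]$, and set $U_s(H) := e^{2\pi i \varphi_s(H)}$. Convex combinations preserve both smoothness and the prescribed asymptotic values, so $P_s \in \fS[0,1]$ and $\varphi_s \in \fS[0,1,\alpha-E_0,\alpha+E_0]$ throughout. The hypotheses of Proposition \ref{prop:traceclass} and Corollary \ref{cor:rel} constrain only the symbol of $H$ together with the gap data $(\alpha,E_0)$, so they hold simultaneously for every $s$; in particular $P_s U_s(H) P_s$ is Fredholm at every $s$. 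Continuity of the map $s \mapsto P_s U_s(H) P_s$ in operator norm follows from the trivial bound $\|P_s - P_t\| \leq |s-t|\,\|P_2 - P_1\|_\infty$, the functional-calculus estimate $\|\varphi_s(H) - \varphi_t(H)\| \leq \|\varphi_s - \varphi_t\|_\infty$, and the standard bound $\|e^{iA} - e^{iB}\| \leq \|A - B\|$ valid for bounded self-adjoint $A,B$. Homotopy invariance of the Fredholm index along norm-continuous Fredholm paths therefore gives ${\rm Index}(P_1 U_1(H) P_1) = {\rm Index}(P_2 U_2(H) P_2)$, and the asserted equality of traces follows.

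The sole technical subtlety is verifying that the pseudodifferential hypotheses of Corollary \ref{cor:rel} are genuinely uniform along the homotopy; this is immediate once one observes that those hypotheses involve only the symbol of $H$ and the global spectral-gap parameters, not the particular admissible switch functions $P$ or $\varphi$. The argument is morally the stability proof of \cite{B-BulkInterface-2018}, applied here once to deform $P$ and once to deform $\varphi$.
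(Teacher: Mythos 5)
Your proof is correct and follows essentially the same approach as the paper: both identify $2\pi\sigma_I$ with a Fredholm index via Corollary \ref{cor:rel} and then invoke homotopy invariance of the index along a norm-continuous path obtained by deforming $P$ and $\varphi$ within their respective switch-function classes. Your explicit linear interpolation together with the elementary bound $\|e^{iA}-e^{iB}\|\leq\|A-B\|$ for bounded self-adjoint $A,B$ is a slightly more direct route to the norm continuity of the path than the paper's appeal to the Helffer--Sj\"ostrand formula, but this is a cosmetic difference rather than a different argument.
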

%%%%%%%%%%%%%%%
\begin{proof}
 Both traces are defined and \eqref{eq:rel} applies thanks to Corollary \ref{cor:rel}. Since $P_j$ for $j=1,2$ have the same behavior at infinity, they belong to the same homotopy class of such functions and we find a continuous family of smooth functions $[1,2]\ni t\mapsto P_t\in\fS[0,1]$ with continuity for instance in the uniform norm. Similarly, we construct $[1,2]\ni t\mapsto \varphi_t$ smooth switch functions in $\fS[0,1,\alpha-E_0,\alpha+E_0]$. Using \eqref{eq:rel}, we know that $2\pi\sigma_I(t)$ is the index of $P_t U_t(H)P_t$ and is therefore an integer for all $t\in [1,2]$. By construction of $P_t$, $t\mapsto  P_t U_\tau(H)P_t$ is continuous in operator norm for any fixed value of $\tau\in[1,2]$. 

Also, $U_t(E)=e^{i2\pi \varphi_t(E)}$ is a smooth function of $t$. This translates, for instance by invoking the Helffer-Sj\"ostrand formula \eqref{eq:hs}, into a continuous map $t\to U_t(H)$ in operator norm. 

As a consequence, $t\mapsto P_tU_t(H)P_t$ is continuous in operator norm and the index of these Fredholm operators is therefore independent of $t$ \cite[Chapter 19]{H-III-SP-94}. Appealing to \eqref{eq:rel} concludes the proof of the proposition.
\end{proof}

We now extend the above results to families of operators $H=H(t)$ for $t\in [0,1]$ such that the invariants in \eqref{eq:rel} are independent of $t$. This allows us to include the presence of perturbations $V(x,y)$ in the Hamiltonian and to vary semiclassical parameters in the next section. We first recall the following result:
%%%%%%%%%%%%%%%
\begin{lemma} \label{lem:scale}
  Let $A$ be trace-class operator on $L^2(\Rm^n)$ and $\Lambda$ a linear invertible transform in ${\rm GL}(n,\Rm)$. Then $\Lambda^{-1}A\Lambda$ is also trace-class and
$
  {\rm Tr}\ A = {\rm Tr}\  \Lambda^{-1}A\Lambda.
$
\end{lemma}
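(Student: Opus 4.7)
The plan is to reduce the statement to the cyclic property of the trace. Since $\Lambda\in{\rm GL}(n,\Rm)$ is a linear automorphism of $\Rm^n$, it naturally induces a bounded invertible operator on $L^2(\Rm^n)$ via the pullback $(T_\Lambda f)(x):=f(\Lambda x)$, with inverse $T_\Lambda^{-1}=T_{\Lambda^{-1}}$; both operators are bounded on $L^2(\Rm^n)$ (with norm $|\det\Lambda|^{\mp 1/2}$, so that the normalized version $|\det\Lambda|^{1/2}T_\Lambda$ is unitary). With the mild abuse of notation of the statement, we continue to write $\Lambda$ and $\Lambda^{-1}$ for these operators.

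Since the trace-class ideal $\mathcal L^1(L^2(\Rm^n))$ is a two-sided ideal inside the bounded operators, the trace-class assumption on $A$ together with the boundedness of $\Lambda$ and $\Lambda^{-1}$ immediately gives that $\Lambda^{-1}A\Lambda\in\mathcal L^1(L^2(\Rm^n))$. Then the cyclicity of the trace, valid whenever one of two composed factors is trace-class and the other is bounded, yields
\[
  {\rm Tr}\,\Lambda^{-1}A\Lambda ={\rm Tr}\,\Lambda^{-1}(A\Lambda)={\rm Tr}\,(A\Lambda)\Lambda^{-1}={\rm Tr}\,A.
\]

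As an alternative (and to explain why the Jacobian ``disappears'' after conjugation even though $T_\Lambda$ itself is not unitary), one can argue at the level of Schwartz kernels: if $A$ has kernel $K_A(x,y)$, then $\Lambda^{-1}A\Lambda$ has kernel $|\det\Lambda|\,K_A(\Lambda x,\Lambda y)$, and integrating along the diagonal followed by the change of variables $u=\Lambda x$ recovers $\int K_A(u,u)\,du={\rm Tr}\,A$.

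There is no serious obstacle here; the only point worth care is the passage from the finite-dimensional matrix $\Lambda$ to a bounded invertible operator on $L^2(\Rm^n)$, after which the result is a standard instance of the cyclicity of the trace on $\mathcal L^1$. The lemma will be used in the sequel with $\Lambda$ being a semiclassical anisotropic rescaling, and the statement above is exactly what is needed to transport the trace defining $\sigma_I$ to the semiclassical variables without altering its value.
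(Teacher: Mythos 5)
Your proof is correct and takes the same route as the paper, which simply states that the lemma is a direct consequence of the cyclicity of the trace; you add the helpful (and correct) observation that $\Lambda$ must first be promoted to a bounded invertible operator on $L^2(\Rm^n)$ via pullback before cyclicity on the trace-class ideal can be invoked.
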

%%%%%%%%%%%%%%%
The proof is a direct consequence of the cyclicity of the trace. The only such linear transform we use in this paper is the scaling $Y\mapsto y=hY$ for $h>0$ so that $hD_y$ and $\mu(y)$ are pulled back to $D_Y$ and $\mu(hY)$. Recall the notation $I[H]:={\rm Index } PU(H)P$. Let $H={\rm Op}^w(a(x,y,\xi,\zeta))$ and $\tilde H_h={\rm Op}^w(a(x,hy,\xi,h^{-1}\zeta))$. Then the lemma and identities such as $U(\Lambda^{-1}H\Lambda)=\Lambda^{-1}U(H)\Lambda$ imply that $I[H]=I[\tilde H_h]$ when either one is defined. What we need is a different equality. Let $H_h={\rm Op}^w(a(x,y,\xi,h\zeta))$ the semiclassical rescaling (in the variable $y$). We want to show that $I[H_h]=I[H]$. The above lemma states that $I[H_h]=I[\tilde H]$ with $\tilde H={\rm Op}^w(a(x,hy,\xi,\zeta))$. It thus remains to show that transforming a coefficient $\mu(y)$ to $\mu(hy)$ does not modify the invariants in \eqref{eq:rel}. Similarly, let $H_t=H+tV$. We want to show that $I[H_t]$ is independent of $t$. A sufficient condition is as follows.
%%%%%%%%%%%%%%%%% PROPOSITION 4.5
\begin{proposition}\label{prop:stabt}
  Let $H_t$ be as in corollary \ref{cor:rel} for all $t$ in a connected interval $I\subset\Rm$. Let us assume that $H(t)-H(s)=(t-s)B(s,t)$ for $B(s,t)$ a bounded operator in uniform norm on $L^2(\Rm^2;\Mm_n)$ uniformly in $(s,t)\in I^2$. Then $I[H_t]={\rm Index}\ PU(H_t)P$ is independent of $t\in I$.
\end{proposition}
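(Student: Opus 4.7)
The plan is to exploit the homotopy invariance of the Fredholm index. Specifically, I would show that the map $t\mapsto PU(H_t)P$ takes values in the Fredholm operators on $\mathrm{Ran}\,P$ and is continuous in the operator-norm topology; since the index is locally constant on Fredholm operators under norm perturbations \cite[Chapter 19]{H-III-SP-94}, this forces $I[H_t]$ to be constant on the connected interval $I$.

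That each $PU(H_t)P$ is Fredholm on $\mathrm{Ran}\,P$ is precisely what Corollary \ref{cor:rel} (together with the Fedosov-type argument recalled in its proof) gives us, by assumption. The substantive step is to establish norm continuity of $t\mapsto U(H_t)=e^{i2\pi\varphi(H_t)}$. Here I would mimic the Helffer-Sj\"ostrand argument used inside Proposition \ref{prop:traceclass}: choosing an almost analytic extension $\tilde\varphi$ of $\varphi$ with $|\bar\partial\tilde\varphi(z)|\leq C_N|\mathrm{Im}\,z|^N$ and compact projection to $\Rm$, the resolvent identity yields
\[
\varphi(H_t)-\varphi(H_s)=-\frac1\pi\dint_{\Cm}\bar\partial\tilde\varphi(z)\,(z-H_t)^{-1}(H_t-H_s)(z-H_s)^{-1}\,d^2z.
\]
Since $\|(z-H_r)^{-1}\|\leq |\mathrm{Im}\,z|^{-1}$ for $r\in\{s,t\}$ and since the hypothesis $H(t)-H(s)=(t-s)B(s,t)$ with $\sup_{(s,t)\in I^2}\|B(s,t)\|<\infty$ gives $\|H_t-H_s\|\leq C|t-s|$, the integrand is pointwise bounded by $C|t-s|\,|\bar\partial\tilde\varphi(z)|\,|\mathrm{Im}\,z|^{-2}$, which is integrable by choosing $N\geq 2$. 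Thus $\|\varphi(H_t)-\varphi(H_s)\|\leq C'|t-s|$.

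From this, the spectral theorem applied to the self-adjoint operator $\varphi(H_r)$ combined with the elementary bound $\|e^{iA}-e^{iB}\|\leq\|A-B\|$ for bounded self-adjoint $A,B$ (itself a consequence of Duhamel's formula $e^{iA}-e^{iB}=i\int_0^1 e^{i\tau A}(A-B)e^{i(1-\tau)B}\,d\tau$) yields $\|U(H_t)-U(H_s)\|\leq 2\pi C'|t-s|$. Since $P$ is bounded, $t\mapsto PU(H_t)P$ is then Lipschitz in operator norm on $I$. Combined with the Fredholm property from Corollary \ref{cor:rel}, the local constancy of the index closes the argument.

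The main technical point is verifying the norm-continuity of $U(H_t)$: everything else is either assumed or is the standard homotopy invariance of the Fredholm index. The hypothesis that $B(s,t)$ is bounded (rather than merely relatively bounded with respect to $H$) is exactly what makes the Helffer-Sj\"ostrand estimate above harmless, since the perturbation $H_t-H_s$ enters only multiplicatively and does not interact with the resolvent bounds. If one wanted to weaken this to a relatively bounded perturbation, one would need to insert resolvent factors and rely on the more delicate symbol-class control from Proposition \ref{prop:traceclass}, but that refinement is not needed here.
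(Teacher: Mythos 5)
Your overall strategy is the same as the paper's: establish norm-continuity of $t\mapsto U(H_t)$ via the Helffer--Sj\"ostrand formula and then invoke the homotopy invariance of the Fredholm index. The paper, however, applies the Helffer--Sj\"ostrand difference formula directly to $W=U-I$, whereas you apply it to $\varphi$ and then insert a Duhamel estimate to pass from $\varphi(H_t)$ to $e^{i2\pi\varphi(H_t)}$. This detour hides a genuine technical issue.

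The function $\varphi\in\fS[0,1,\alpha-E_0,\alpha+E_0]$ is \emph{not} compactly supported: it is identically $0$ to the left and identically $1$ to the right. While $\varphi'$ is compactly supported, so that one can arrange $\bar\partial\tilde\varphi$ to have compact support, the validity of the representation
\[
\varphi(H_t)-\varphi(H_s)=-\frac1\pi\dint_{\Cm}\bar\partial\tilde\varphi(z)(z-H_t)^{-1}(H_t-H_s)(z-H_s)^{-1}\,d^2z
\]
is \emph{not} a corollary of the standard Helffer--Sj\"ostrand theorem you cite, which requires the function itself to lie in a class decaying at infinity. For a mere switch function the formula for $\varphi(H)$ alone is false (compare with $\varphi\equiv1$, where the integral vanishes but $\varphi(H)=I$), so the difference formula requires a separate argument (typically a truncation and limiting procedure that is not entirely routine because the spectra of $H_t,H_s$ are unbounded). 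The paper circumvents the whole issue by observing that $U(\lambda)=e^{i2\pi\varphi(\lambda)}$ equals $1$ both for $\lambda\leq\alpha-E_0$ \emph{and} for $\lambda\geq\alpha+E_0$, so $W=U-I$ is genuinely compactly supported, and then $\bar\partial\tilde U=\bar\partial\tilde W$ falls into the standard framework. This also makes your Duhamel step unnecessary. Replace the estimate on $\|\varphi(H_t)-\varphi(H_s)\|$ by the direct estimate on $\|W(H_t)-W(H_s)\|$ via Helffer--Sj\"ostrand applied to $W$, and the remainder of your argument (Lipschitz continuity of $t\mapsto PU(H_t)P$ together with local constancy of the index on Fredholm operators) goes through exactly as you wrote it.
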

%%%%%%%%%%%%%%%%%
\begin{proof}
  Let $H_t$ be as above and consider the unitary $U(H_t)$. By the Helffer-Sj\"ostrand formula \eqref{eq:hs}, we have
\[
  U(H_t)-U(H_s) = W(H_t)-W(H_s) = -\frac1\pi \dint_{\Cm^2} \bar\partial \tilde U (z-H_t)^{-1}(H(t)-H(s))(z-H_s)^{-1} d^2z.
\]
By hypothesis, $H(t)-H(s)=(t-s)B(t,s)$ with $B(t,s)$ bounded. Since $(z-H_t)$ is bounded by $|{\rm Im}z|^{-1}$ uniformly in $t$ (at least locally) and $\bar\partial \tilde U\leq C_N |{\rm Im}z|^N$for a well-chosen almost analytic extension $\tilde U$, we find that $U(H_t)-U(H_s)$ is bounded in the operator norm by a constant times $(t-s)$. This ensures that $t\mapsto U(H_t)$ is continuous in the operator norm and hence so is the Fredholm operator $PU(H_t)P$. By continuity of the index of Fredholm operators \cite[Chapter 19]{H-III-SP-94}, the index is independent of $t$.
\end{proof}

The above result addresses the question of stability under perturbation $H_V=H+V$. Assuming that $V$ is bounded and that the index $I[H+tV]$ is defined for $0\leq t\leq 1$, the above proposition directly implies that $I[H]=I[H+V]$. The above result also allows us to address stability in the semiclassical regime. Let $H_h=H_0+\mu(hy)$ with $\mu$ multiplication by a bounded (matrix-valued) function and $h>0$. Let us assume that $\mu(y)$ takes constant values for $y$ large and $-y$ large. Then $\mu(hy)-\mu(ly)=(h-l)y\mu'(\tilde hy)$ for $\tilde h$ between $h$ and $l$. By assumption on $\mu$, $y\mu'(\tilde hy)$ is bounded and proposition \ref{prop:stabt} applies. We collect the above results as:
%%%%%%%%%%%%%%%%COROLLARY 4.6
\begin{corollary}\label{cor:stabt}
  Let $H_{h,t}=H_0+\mu(hy)+tV$ be an operator with $H_0$ independent of $y$, $V$ a bounded multiplication operator, and $\mu(y)$ a smooth domain wall taking constant values outside of a compact interval. Assume that $H_{h,t}$ is as in corollary \ref{cor:rel} for all $h_0\leq h\leq h_1$ and $0\leq t\leq 1$. Then the index $I[H_{h,t}]$ is an integer independent of $h$ and $t$ in these ranges.
\end{corollary}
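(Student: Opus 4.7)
The plan is to apply Proposition \ref{prop:stabt} twice, treating the two parameters $h$ and $t$ separately. Since the standing assumption already guarantees that $H_{h,t}$ falls under Corollary \ref{cor:rel} for every $(h,t)\in[h_0,h_1]\times[0,1]$, the quantity $I[H_{h,t}]=\mathrm{Index}\,PU(H_{h,t})P$ is a well-defined integer at every point of the rectangle; constancy on the rectangle follows from constancy along horizontal and vertical segments.

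For the $t$-variable at fixed $h$, I would observe that
\[
H_{h,t}-H_{h,s}=(t-s)V,
\]
so Proposition \ref{prop:stabt} applies with $B(s,t)=V$, which is a bounded multiplication operator by hypothesis and therefore trivially uniformly bounded on $I^2=[0,1]^2$. This yields independence of $t$.

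For the $h$-variable at fixed $t$, I would write
\[
H_{h,t}-H_{l,t}=\mu(hy)-\mu(ly)=(h-l)\int_0^1 y\,\mu'\big((l+s(h-l))y\big)\,ds.
\]
The hypothesis that $\mu$ is a smooth domain wall taking constant values outside a compact interval guarantees that $\mu'$ has compact support; hence for any $\tau>0$ the function $y\mapsto y\mu'(\tau y)$ is bounded by $\tau^{-1}\sup_{z}|z\mu'(z)|$, a quantity that is finite and, for $\tau$ ranging over any compact subset of $(0,\infty)$, uniformly bounded. Consequently $B(h,l):=\int_0^1 y\mu'((l+s(h-l))y)\,ds$ is multiplication by a bounded (matrix-valued) function uniformly in $(h,l)\in[h_0,h_1]^2$ (since $h_0>0$), and Proposition \ref{prop:stabt} applies to yield independence of $h$.

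The only subtle point, and the one I would be most careful about, is the uniform operator-norm bound for $B(h,l)$ on the full $h$-interval: one really needs $h_0>0$ so that the scaling argument inside $\mu'(\tau y)$ stays away from the degenerate $\tau=0$ regime where $y\mu'(\tau y)$ could blow up. Once this uniform bound is in hand, the continuity of the index of Fredholm operators under norm-continuous deformations (invoked inside Proposition \ref{prop:stabt} via the Helffer--Sj\"ostrand formula for $U(H_{h,t})$) finishes the argument, and combining the two one-parameter constancies gives the claim on the full rectangle.
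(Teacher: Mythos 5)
Your proof is correct and follows essentially the same route as the paper's: for the $t$-deformation you apply Proposition \ref{prop:stabt} with $B(s,t)=V$, and for the $h$-deformation you apply the mean value theorem to $\mu(hy)-\mu(ly)$ and use compact support of $\mu'$ to bound $y\mu'(\tau y)$ uniformly for $\tau\in[h_0,h_1]$. Your integral form of the mean value theorem and the explicit bound $|y\mu'(\tau y)|\le \tau^{-1}\sup_z|z\mu'(z)|$ are a slight tidying of the paper's pointwise $\mu(hy)-\mu(ly)=(h-l)y\mu'(\tilde h y)$, and your emphasis on the role of $h_0>0$ makes explicit a point the paper leaves implicit; otherwise the two arguments coincide.
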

%%%%%%%%%%%%%%%%

\medskip

The main difficulty in applying the corollary \ref{cor:rel} is therefore to show the existence of a positive definite operator $G_\alpha$ satisfying either (H1) or (H2). If one such operator $G_\alpha$ may be found for each $H_{h,t}$ in corollary \ref{cor:stabt}, then $I[H_{h,t}]$ is an integer independent of $h$ and $t$. It thus remains to construct such an operator $G_\alpha$. The construction is reasonably explicit in the favorable situation when (H2) is satisfied. 
%%%%%%%%%%
\begin{proposition}\label{prop:strongelliptic}
  Let $\alpha\in\Rm$ be fixed and $H$ a self-adjoint  unbounded differential operator on $L^2(\Rm^2;\Mm_n)$ such that $(I+(H-\alpha)^2)^{-1}$ is a \pdo\ with symbol in $S^0(m)$ with  $m=(\aver{\zeta}+\aver{\xi})^{-s}$ for $s>0$ and such that $H-\alpha$ has constant coefficients when $y\geq R$ and when $y\leq -R$ for some $R\geq0$. Let $H_\pm-\alpha$ be the constant coefficient operators such that $(H-\alpha)f_\pm=(H_\pm-\alpha)f_\pm$ for any test function $f_+$ supported on $y\geq R$ and $f_-$ supported on $y<-R$. We assume that $H_\pm-\alpha$ have a spectral gap $[-m,m]$ for some $m>0$. Let now $\varphi(H)$ be a smooth switch function in $\fS[0,1,\alpha-m,\alpha+m]$. 

Then there is a differential operator $G_\alpha$ with no spectrum in $(-\infty,m^2]$ and such that hypothesis (H2) in proposition \ref{prop:traceclass} holds. As a consequence, \eqref{eq:rel} holds.
\end{proposition}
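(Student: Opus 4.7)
The plan is to construct $G_\alpha$ by splicing the bulk operators $(H_\pm-\alpha)^2$ onto the half spaces $\{\pm y\geq R\}$ where $H$ already agrees with $H_\pm$, and by inserting a large positive multiplication operator in the transition region $\{|y|\leq R\}$. Choose a smooth quadratic partition of unity $\chi_+^2+\chi_-^2+\chi_0^2=1$ on $\Rm_y$ with $\mathrm{supp}\,\chi_\pm\subset\{\pm y\geq R\}$, $\chi_0$ compactly supported, and all derivatives $\partial_y^k\chi_j$ compactly supported (achieved, e.g., by taking $\chi_\pm$ identically one for $\pm y\geq R+1$). Fix a constant $M>m^2$ and define
\begin{equation*}
G_\alpha := \chi_+(H_+-\alpha)^2\chi_+ + \chi_-(H_--\alpha)^2\chi_- + M\chi_0^2.
\end{equation*}
Each of the first two summands is a non-negative symmetric differential operator of the form $A_\pm^*A_\pm$ with $A_\pm=(H_\pm-\alpha)\chi_\pm$, and the third is bounded; the sum is therefore self-adjoint on its natural domain.

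The spectral lower bound follows immediately from the bulk gap $(H_\pm-\alpha)^2\geq m^2 I$. For $f$ in the common domain,
\begin{equation*}
\aver{G_\alpha f, f} = \|(H_+-\alpha)\chi_+ f\|^2+\|(H_--\alpha)\chi_- f\|^2+M\|\chi_0 f\|^2 \geq m^2\|f\|^2+(M-m^2)\|\chi_0 f\|^2,
\end{equation*}
using $\sum_j\|\chi_j f\|^2=\|f\|^2$. Hence $\mathrm{spec}(G_\alpha)\subset[m^2,\infty)$, so $G_\alpha$ has no spectrum in $(-\infty,E_0^2]$ for any $E_0<m$. Since Proposition \ref{prop:stabsigma} renders $\sigma_I$ independent of the particular $\varphi\in\fS[0,1,\alpha-m,\alpha+m]$, we may restrict to a $\varphi$ whose derivative is compactly supported in $[\alpha-E_0,\alpha+E_0]$ for some $E_0<m$, matching the setting of Corollary \ref{cor:rel}.

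The symbolic part of (H2) is verified using the locality of the Weyl product for differential operators. Because $\sigma$ together with all of its $(x,y)$-derivatives coincides with that of $\sigma_\pm$ on the open set $\{\pm y>R\}\supset\mathrm{supp}\,\chi_\pm$, we have $\chi_\pm\sharp\tau_\alpha\sharp\chi_\pm=\chi_\pm\sharp\tau_{\pm,\alpha}\sharp\chi_\pm$ at every phase-space point. Expanding $\tau_\alpha=\sum_j\chi_j^2\tau_\alpha$ and using $\chi_\pm^2\tau_\alpha=\chi_\pm^2\tau_{\pm,\alpha}$ yields
\begin{equation*}
\tau_\alpha-\tilde\tau_\alpha = \chi_0^2(\tau_\alpha-M)+r,
\end{equation*}
where $r$ collects the Moyal corrections of $\chi_\pm\sharp\tau_{\pm,\alpha}\sharp\chi_\pm-\chi_\pm^2\tau_{\pm,\alpha}$. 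Each term in $r$ pairs a derivative of $\chi_\pm$, supported in the compact transition region, with a fiber derivative of $\tau_{\pm,\alpha}$ that strictly lowers the $(\xi,\zeta)$-order; the Moyal expansion is finite because $H$ is a differential operator. Both $\chi_0^2(\tau_\alpha-M)$ and $r$ therefore lie in $S(\aver{y}^{-\infty}(\aver{\xi}+\aver{\zeta})^{2n})$, so (H2) holds. Together with the assumed symbolic decay of $(I+(H-\alpha)^2)^{-1}$, Corollary \ref{cor:rel} gives \eqref{eq:rel}.

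The delicate step is the symbolic verification: one must ensure that the corrections generated by sandwiching $(H_\pm-\alpha)^2$ between the cutoffs $\chi_\pm$ retain rapid $y$-decay and do not inflate the order in $(\xi,\zeta)$. The crucial structural input is the hypothesis that $H$ is a genuine differential operator, which forces the Weyl expansion to terminate so that every surviving term carries the compact $y$-support of some $\partial^\alpha\chi_\pm$; without this finiteness the tails of the Moyal series could in principle fall into a larger symbol class.
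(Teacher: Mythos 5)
Your proposal is correct and follows essentially the same route as the paper: a quadratic partition of unity in $y$, sandwiching the constant-coefficient bulk operators $(H_\pm-\alpha)^2$ by the outer cutoffs, inserting a positive constant times the inner cutoff squared, and deducing the spectral lower bound from $\|A\chi f\|^2\geq m^2\|\chi f\|^2$ together with $\sum\|\chi_j f\|^2=\|f\|^2$. The paper uses a two-piece partition $\phi_1^2+\phi_2^2=1$ and defines $G_\alpha=\phi_2(H-\alpha)^2\phi_2+m^2\phi_1^2$ (so no free constant $M$), and it dismisses the symbol estimate with ``the symbols agree for $|y|$ large''; your three-piece version and the explicit accounting of the finite Moyal corrections carrying compact $y$-support are a bit more granular but not a genuinely different argument.
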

%%%%%%%%%
\begin{proof}
By assumption, $(I+(H-\alpha)^2)^{-1}$ has a (Weyl) symbol with the required decaying assumption. It remains to construct $G_\alpha$. Let $(\phi_1(y),\phi_2(y))$ be a partition of unity such that $\phi_1^2(y)+\phi_2^2(y)=1$ and such that $\phi_1(y)$ is compactly supported and equal to $1$ on $(-R,R)$. We then construct
\[
  G_\alpha = \phi_2(H-\alpha)^2\phi_2 +m^2\phi_1^2.
\]
The (Weyl) symbols of $(H-\alpha)^2$ and $G_\alpha$ agree for $|y|$ sufficiently large (outside of the support of $\phi_1$). The hypothesis on $\tau_\alpha-\tilde\tau_\alpha$ is therefore clear. Moreover, we find that for any function $f\in L^2(\Rm^2;\Rm^n)$ that
\[
  (f,G_\alpha f) = (f\phi_2,(H-\alpha)^2f\phi_2) + m^2 (f\phi_1,f\phi_1)\geq m^2\|f\|^2
\]
since $(H-\alpha)$ is given by $(H_\pm-\alpha)$ on the support of $f\phi_2$ and these operator have a spectral gap in $[-m,m]$. Here, $(\cdot,\cdot)$ and $\|\cdot\|$ are the inner product and norm on $L^2(\Rm^2)\otimes\Cm^n$. This shows the existence of $G_\alpha$ with the required properties. We deduce from Corollary \ref{cor:rel} that \eqref{eq:rel} holds.
\end{proof}
%%%%%%%

It is straightforward to apply the above result to the $2\times2$ system \eqref{eq:s22} assuming that $m(y)=\pm m_0$ away from a compact domain and $-|m_0|<\alpha<|m_0|$ so that $E_0=|m_0|-|\alpha|$. The result also clearly holds with $D_y$ replaced by $hD_y$ for any $h>0$. It continues to hold if $H$ is replace by $H+V$ for any compactly supported function $V(x,y)$. Indeed, for any of these operators, we find that $(I+(H-\alpha)^2)^{-1}$ has symbol in $S((\aver{\xi}+\aver{\zeta})^{-2})$.

The $3\times3$ system in \eqref{eq:s33} is more challenging. The reason is the presence of the flat band $E_0(\xi)=0$ observed when $f$ is constant. There is therefore no reason for $(I+(H-\alpha)^2)^{-1}$ to be a negative-order \pdo. Only when the coefficients are sufficiently slowly varying can one expect to leverage the presence of a spectral gap. 

The construction of $G_\alpha$ under hypothesis (H1) is done microlocally and proceeds as follows.
%%%%%%%%%%%%%%%%%%%
\begin{proposition}\label{prop:Galpha}
  Let $H={\rm Op}^w(\sigma(x,y,\xi,\zeta))$ be a differential operator and assume that for some $\alpha>0$, $(H-\alpha)^2={\rm Op}^w(\tau_\alpha(x,y,\xi,\zeta))$ is such that as a symmetric matrix, $\tau_\alpha(x,y,\xi,\zeta)\geq E_0^2>0$ uniformly in $|(y,\xi,\zeta)|\geq R$ for some $R>0$ and in  $x\in\Rm$.

Then for any $\delta>0$, there exists $G_\alpha={\rm Op}^w(\tilde \tau_\alpha(x,y,\xi,\zeta))$ with $\tilde \tau_\alpha(x,y,\xi,\zeta) \geq (E_0-\delta)^2$ uniformly in $(x,y,\xi,\zeta)$ and $\tilde\tau_\alpha=\tau_\alpha$ when $|(y,\xi,\zeta)|\geq R$.

Let $h>0$ and define $H_h={\rm Op}^w(\sigma(x,y,h\xi,h\zeta))$ as well as   $G_{h,\alpha}={\rm Op}^w(\tilde\tau_\alpha(x,y,h\xi,h\zeta))$.  Then there is $h_0>0$ (sufficiently small) and a constant $C$ independent of $0<h\leq h_0$ such that $\tilde\tau_\alpha(x,y,h\xi,h\zeta)=\tau_\alpha(x,y,h\xi,h\zeta)$ when $|(y,\xi,\zeta)|\geq R/h$ and $G_{h,\alpha}\geq F^2:=(E_0-\delta-Ch_0)^2>0$. 

Let $\varphi\in\fS[0,1,\alpha-F,\alpha+F]$ and define $\tilde H_h={\rm Op}^w(\sigma(hx,hy,\xi,\zeta))$ for $0<h\leq h_0$. Then hypothesis (H1) in Proposition \eqref{prop:traceclass} holds for $H_h$ and Lemma \ref{lem:scale} implies that \eqref{eq:rel} holds for $H$ replaced by $\tilde H_h$.
\end{proposition}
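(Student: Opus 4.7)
\medskip\noindent
\textbf{Proof proposal.} The plan is to first construct the approximating symbol $\tilde\tau_\alpha$ by a cutoff modification of $\tau_\alpha$, then use a semiclassical sharp G\aa rding inequality for the rescaled symbol to get the operator lower bound on $G_{h,\alpha}$, next check that hypothesis (H1) of Proposition \ref{prop:traceclass} really applies to $H_h$ (i.e.\ that the Weyl symbol of $(H_h-\alpha)^2$ and the symbol of $G_{h,\alpha}$ coincide outside a compact set), and finally transport the conclusion to $\tilde H_h$ via Lemma \ref{lem:scale}.

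For the first step, choose a smooth cutoff $\chi(y,\xi,\zeta)$ with $0\leq\chi\leq 1$, $\chi=1$ for $|(y,\xi,\zeta)|\leq R-\eta$, and $\chi=0$ for $|(y,\xi,\zeta)|\geq R$, and set $\tilde\tau_\alpha := \tau_\alpha + M\chi I$ with $\eta$ small and $M>0$ large. Outside $|(y,\xi,\zeta)|\geq R$ the two symbols coincide by construction. On the core region $|(y,\xi,\zeta)|\leq R-\eta$, the uniform boundedness of $\tau_\alpha$ in $x$ (coming from the \pdo\ structure of $H$) makes $\inf\tau_\alpha$ finite, and $M$ can be chosen so that $\tau_\alpha+MI\geq (E_0-\delta)^2$. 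On the transition annulus $R-\eta<|(y,\xi,\zeta)|<R$, uniform continuity of $\tau_\alpha$ at $|(y,\xi,\zeta)|=R$ yields $\tau_\alpha\geq E_0^2-\eps$ with $\eps\to 0$ as $\eta\to 0$, which for $\eta$ sufficiently small gives $\tilde\tau_\alpha\geq\tau_\alpha\geq (E_0-\delta)^2$.

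For the semiclassical step, observe that $\tilde\tau_\alpha(x,y,h\xi,h\zeta) - (E_0-\delta)^2 I \geq 0$ pointwise, and that each $(\xi,\zeta)$-derivative of this rescaled symbol produces a factor of $h$, so it is a genuine semiclassical symbol. The semiclassical sharp G\aa rding inequality for matrix-valued symbols then gives $G_{h,\alpha}\geq (E_0-\delta)^2-Ch$ as an operator, with $C$ controlled by finitely many derivatives of $\tilde\tau_\alpha$ and independent of $h\in(0,h_0]$. Setting $F:=E_0-\delta-Ch_0$ and taking $h_0$ small yields $G_{h,\alpha}\geq F^2>0$. The symbol identity $\tilde\tau_\alpha(x,y,h\xi,h\zeta)=\tau_\alpha(x,y,h\xi,h\zeta)$ for $|(y,\xi,\zeta)|\geq R/h$ is the elementary observation $|(y,h\xi,h\zeta)|^2\geq h^2|(y,\xi,\zeta)|^2\geq R^2$ for $h\leq 1$.

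The last, and only delicate, point is that (H1) requires \emph{exact} equality of symbols outside a compact set, not equality up to $O(h)$. For a differential operator $H={\rm Op}^w(\sigma)$ the Moyal product $(\sigma(x,y,h\xi,h\zeta)-\alpha)\sharp(\sigma(x,y,h\xi,h\zeta)-\alpha)$ is a finite sum in which every correction past the pointwise matrix product involves at least one $(x,y)$-derivative of $\sigma$. In the applications of interest these $(x,y)$-derivatives are supported in the compact region where the coefficients of $\sigma$ (essentially the domain-wall profile $\mu(y)$) vary, so the Weyl symbol of $(H_h-\alpha)^2$ coincides with $\tau_\alpha(x,y,h\xi,h\zeta)$ outside a compact set in $(y,\xi,\zeta)$. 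Combined with the identity of the previous paragraph, this yields (H1) for $H_h$ with an enlarged radius, and Corollary \ref{cor:rel} gives \eqref{eq:rel} for $H_h$. Since $\tilde H_h$ is obtained from $H_h$ by the invertible linear change of variables $(x,y)\mapsto (hx,hy)\in{\rm GL}(2,\Rm)$, Lemma \ref{lem:scale} transports all traces and indices in \eqref{eq:rel} to $\tilde H_h$, concluding the proof. The main obstacle throughout is precisely this exact-vs-approximate symbol matching required by (H1); this is what forces us to restrict to differential operators whose coefficients are constant outside a compact $(x,y)$-set, and it is why the formulation goes through $\tilde H_h$ rather than directly through $H_h$.
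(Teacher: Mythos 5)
Your proof follows the same route as the paper's: construct $\tilde\tau_\alpha$ by a cutoff modification of $\tau_\alpha$, apply the semiclassical sharp G\aa rding inequality to the rescaled symbol, note the symbol identity for $|(y,\xi,\zeta)|\geq R/h$, invoke (H1), and pass to $\tilde H_h$ via Lemma~\ref{lem:scale}. Your explicit construction $\tilde\tau_\alpha = \tau_\alpha + M\chi I$ is a bit more detailed than the paper's (which merely asserts a smooth extension exists), but equivalent.

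Where your argument and the paper's both become delicate is the assertion that (H1) holds for $H_h$ exactly. The Weyl symbol of $(H_h-\alpha)^2$ is $\big((\sigma-\alpha)\sharp_h(\sigma-\alpha)\big)(x,y,h\xi,h\zeta)$, whereas $\tau_\alpha(x,y,h\xi,h\zeta)=\big((\sigma-\alpha)\sharp_1(\sigma-\alpha)\big)(x,y,h\xi,h\zeta)$; for a first-order matrix-valued $\sigma$ these differ by $\tfrac{1-h}{2i}\{\sigma-\alpha,\sigma-\alpha\}$, which involves commutators of the $(\xi,\zeta)$-derivatives of $\sigma$ (the constant matrices $\gamma_1,\gamma_4$) with its $(x,y)$-derivatives ($f'(y),\partial V$). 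You flag this correctly, but your conclusion that the discrepancy vanishes ``outside a compact set in $(y,\xi,\zeta)$'' does not follow: this correction is compactly supported in $(x,y)$ yet \emph{constant} in $(\xi,\zeta)$, so its support in $(y,\xi,\zeta)$ is a slab, not a compact set, and (H1) as literally stated is not met. The paper's own proof has the same imprecision---it writes $(H_h-\alpha)^2={\rm Op}^w(\tau_\alpha(x,y,h\xi,h\zeta))$ without distinguishing $\sharp_1$ (used to define $\tau_\alpha$) from the $\sharp_h$ that actually governs the composition in $H_h$. A clean way out is to fold the $O(1-h)$ correction term directly into the definition of $G_{h,\alpha}$ (it is a bounded multiplication operator, so the G\aa rding lower bound survives with a modified constant and the outside-of-a-ball symbol identity then holds on the nose); alternatively one can weaken (H1) to tolerate an $(x,y)$-compactly supported and $(\xi,\zeta)$-bounded residue and verify that the trace-class argument of Proposition~\ref{prop:traceclass} still goes through. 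Your instinct that the exact-versus-approximate symbol matching is the crux is right; the resolution simply needs more care than either your write-up or the paper's gives it.
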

%%%%%%%%%%%%%%%%%%%%
This result thus states that \eqref{eq:rel} holds for Hamiltonians $\tilde H_h={\rm Op}^w(\sigma(hx,hy,\xi,\zeta))$ with spatial coefficients of the form $c(hx,hy)$ for $h$ sufficiently small.
%%%%%%
\begin{proof}
  By hypothesis, we can construct a matrix-valued symbol  $\tilde\tau_\alpha=\tau_\alpha\geq E_0^2$ when $|(y,\xi,\zeta)|\geq R$. This symbol can then be extended to $\Rm^4$ smoothly such that $\tilde\tau_\alpha\geq (E_0-\delta)^2$. The positivity of the symbol is not sufficient to ensure that the operator $G_\alpha={\rm Op}^w(\tilde \tau_\alpha(x,y,\xi,\zeta))$ itself is positive. %We need to use a G\aa rding inequality as recalled in \eqref{eq:garding}. 
%This is the objective of the $h-$dependent constructions. 
Replacing $H$ by $H_h$, with $(H_h-\alpha)^2={\rm Op}^w(\tau_\alpha(x,y,h\xi,h\zeta))$, we observe that $\tilde\tau_\alpha(x,y,h\xi,h\zeta)=\tau_\alpha(x,y,h\xi,h\zeta)$ when $|(y,h\xi,h\zeta)|\geq R$ and hence certainly when $|(y,\xi,\zeta)|\geq R/h$. Moreover, we still have that $\tilde\tau_\alpha(x,y,h\xi,h\zeta)\geq  (E_0-\delta)^2$.  The G\aa rding inequality \eqref{eq:garding} allows us to conclude that $G_{h,\alpha}\geq (E_0-\delta-Ch)^2$ and hence is bounded below by a positive constant when $h$ is sufficiently small.

We therefore obtain that (H1) in Proposition \ref{prop:traceclass} holds for $H_h$ and hence that the index $I[{\rm Op}^w(\sigma(x,y,h\xi,h\zeta))]$ is defined by corollary \ref{cor:rel}. We now  invoke Lemma \ref{lem:scale} to observe that $I[{\rm Op}^w(\sigma(x,y,h\xi,h\zeta))]=I[{\rm Op}^w(\sigma(hx,hy,\xi,\zeta))]$, which proves that \eqref{eq:rel} holds for $H$ replaced by $\tilde H_h$.
\end{proof}
%%%%%%%%%%%%

%We now invoke Lemma \ref{lem:scale} to observe that $I[{\rm Op}^w(a(x,y,h\xi,h\zeta))]=I[{\rm Op}^w(a(hx,hy,\xi,\zeta))]$ whenever either one is defined. The above proposition then shows that once the coefficients in $H$ are chosen so that $\tau_\alpha(x,y,\zeta,\xi)$ is bounded below outside of a ball in $(y,\xi,\zeta)$, then we can construct $G_{h,\alpha}$ with a spectral gap $E_0-\delta-Ch$ for the operator $H^h={\rm Op}^w(a(hx,hy,\xi,h\zeta))$, i.e. for coefficients $c(x,y)$ in the differential equation replaced by the slowly varying coefficients $c(hx,hy)$. It then remains to construct a density of states $\varphi'(H)$ such that $\varphi'(H)$ is supported in $(-F,F)$ with $F=E_0-\delta-Ch$.

\medskip
In the remainder of this section, we apply the above result to the system \eqref{eq:s33}
\begin{equation}\label{eq:H33}
  H = D_x\gamma_1 + D_y \gamma_4 - f(y) \gamma_7 + V(x,y) 
\end{equation}
with $f(y)$ a smooth switching function in $\fS[-f_0,f_0]$ and  $h>0$. We assume that $V(x,y)$ is multiplication by a smooth compactly supported function (still called $V(x,y)$). Let $\tau_\alpha$ be the symbol of $(H-\alpha)^2$ for $\alpha\in(-f_0,0|)\cup(0,f_0)$. $H_\alpha$ has a bulk gap given by $E_0={\rm min}(|\alpha|,f_0-|\alpha|)$. Our objective is to find constraints on $f$ and $V$ such that the matrix $\tau_\alpha(x,y,\xi,\zeta)$ is bounded uniformly below by a positive constant when $|(y,\xi,\zeta)|\geq R$. The construction of the operator $G_\alpha$ then proceeds as in the above proposition.

%\lambda below is a terrible notation with $\lambda_{1,2,3}$ being eigenvalues. Change \lambda to \rho
Let $(y,\xi,\zeta)=\rho(y_0,\xi_0,\zeta_0)$ with $|(y_0,\xi_0,\zeta_0)|=1$ and the scaling factor $\rho\in\Rm_+$. Consider $|y_0|^2\geq\frac12$ first. Then for $\rho$ sufficiently large, the symbol of $(H_{h,V}-\alpha)^2$ becomes constant with eigenvalues bounded below by ${\rm min}(|\alpha|,f_0-|\alpha|)$ as expected.

The region $|y_0|^2\leq\frac12$ is more challenging. We have $(\xi,\zeta)\to\infty$ as $\rho\to\infty$ in that sector. However, only two out of three eigenvalues of the symbol of  $(H-\alpha)^2$ are bounded below now. The third eigenvalue remains bounded and large fluctuations in $V(x,y)$ or $f(y)$ may force it to vanish or become arbitrarily small. In such a case, we cannot construct a positive definite approximation of $(H-\alpha)^2$. Constraints on $V$ and $f$ thus need to be imposed. The Hermitian operator  $(H-\alpha)^2$ is given explicitly by
\[
\left(\begin{matrix}  W_{11}^2+ \delta_x \delta_x^* + \delta_y \delta_y^* & W_{11} \delta_x+\delta_x W_{22} + \delta_y g^* & W_{11}\delta_y+\delta_xg+\delta_y W_{33}  \\
*  & \delta_x^* \delta_x + W_{22}^2 + |g|^2  & \delta_x^*\delta_y + W_{22}g + g W_{33}  \\ * & * & \delta_y^*\delta_y+|g|^2+W_{33}^2 \end{matrix} \right),
\]
where we have defined $W=V-\alpha$ for $V=(V_{i,j})_{1\leq i,j\leq3}$, $\delta_x=D_x+V_{12}$, $\delta_y=D_y+V_{13}$, and $g=if_h+V_{23}$.  The full Weyl symbol of $(H-\alpha)^2$ is obtained by using identities implying that the symbol of $D_y g(y)$ is  $\zeta g(y) -\frac i2 \partial_yg(y)$ and that of $g(y)D_y$ is  $\zeta g(y) +\frac i2 \partial_yg(y)$.
We wish to show that the eigenvalues $\lambda_{1,2,3}$ of the above symbol are positive independently of the scaling factor $\rho\geq R$ for $(y,\xi,\zeta)=\rho(y_0,\xi_0,\zeta_0)$ and $|(\xi_0,\zeta_0)|^2\geq\frac12$.

Let us assume that $V_{12}$ and $V_{13}$ are real-valued and $V_{23}$ is purely imaginary. Assuming constant coefficients (independent of $(x,y)$), the symbol of $(H-\alpha)^2$ is the square of
\[
\left(\begin{matrix}  W_{1} & \tilde\xi & \tilde\zeta \\ \tilde\xi & W_{2} & g \\ \tilde\zeta & -g & W_3 \end{matrix} \right),
\]
where $\tilde\xi$ and $\tilde\zeta$ are the symbols of $\delta_x$ and $\delta_y$, respectively, and $W_j$ stands for $W_{jj}$. Therefore, $|(\tilde\xi,\tilde\zeta)|\to\infty$ as $\rho\to\infty$. The invariants of the above matrix satisfy
\[
  \lambda_1\lambda_2\lambda_3 = -W_3\tilde\xi^2-W_2\tilde\zeta^2 + O(1),\quad \sum_{i\not=j} \lambda_i\lambda_j= -(\tilde\xi^2+ \tilde\zeta^2) + O(1),\quad \lambda_1+\lambda_2+\lambda_3 = O(1),
\]  
where $O(1)$ means independent of $\rho$. This implies that  two eigenvalues are asymptotically equal to $\pm\rho$ while the third eigenvalue is asymptotically given by $\frac{W_3\xi^2+W_2\zeta^2}{\rho^2}$. When $V_{22}=V_{33}=0$, the latter term equals $-\alpha\not=0$. In order for that term not to vanish, we need $V_{22}$ and $V_{33}$ to be sufficiently small compared to $\alpha$. Large variations in $V_{22}$ and $V_{33}$ prevent the approximation of $(H-\alpha)^2$ by a positive definite operator. Note that we do not have any constraint on $V_{11}$. Similar calculations show that the imaginary part of $V_{12}$ and $V_{13}$ and the real part of $V_{23}$ also need to be small for the above eigenvalues to remain bounded away from $0$. 

The full symbol of $(H-\alpha)^2$ is given by the square of the above matrix plus contributions coming from differentiations of the coefficient such as $\partial_{x,y} g$ or $\partial_{x,y}V_{i,j}$. For instance, the term $\partial_y g$, which involves $f'(y)$, modifies the determinant and the sum of squares of eigenvalues by a quantity of order $(\xi^2+\zeta^2)\zeta^2$, which is leading-order in $\rho^4$. Such terms have to be sufficiently small for the symbol of $(H-\alpha)^2$ to have positive eigenvalues uniformly. 

Let $\eps>0$ and assume the coefficients in \eqref{eq:H33} are of the form $f:=f(\eps y)$ scalar-valued and $V:=V(\eps y)$ Hermitian-valued.  Let $\sigma_\eps$ be the Weyl symbol of $H={\rm Op}^w(\sigma_\eps)$. We introduce the following hypothesis:
\\[2mm]
(H$3\times 3$): The coefficients $V_{22}$, $V_{33}$, ${\rm Im} V_{12}$, ${\rm Im} V_{13}$ and ${\rm Re} V_{23}$ are sufficient small in $S(1)$ and $\eps$ is sufficiently small. 
\\[2mm]
Under (H$3\times 3$), the above calculations show that the symbol of $(H+V-\alpha)^2$ has eigenvalues bounded away from $0$ uniformly in $|(y,\xi,\zeta)|>R$ for $R$ sufficiently large. Therefore,  $H={\rm Op}^w(\sigma_\eps)$  satisfies the hypotheses of Proposition \ref{prop:Galpha}. Let $F=E_0-\delta$ for $\delta>0$ and consider a density of states built on $\varphi\in\fS[0,1,\alpha-F,\alpha+F]$. Then Proposition \ref{prop:Galpha} implies that \eqref{eq:rel} holds provided that $\eps$ is further reduced to $h_0\eps$. Relabeling $\eps$ the small term $\eps h_0$, we thus deduce that \eqref{eq:rel}  holds for any Hamiltonian
$H_\eps = D_x\gamma_1 + D_y \gamma_4 - f(\eps y) \gamma_7 + V(\eps x,\eps y)$
satisfying (H$3\times3$).

This shows in particular that the interface conductivity in \eqref{eq:rel} is well defined provided that $f(y)$ is a sufficiently slowly varying domain wall. We saw in section \ref{sec:ii} that \eqref{eq:rel} also held for $f(y)=f_0\sgn{y}$. However, this result came from a sufficiently explicit knowledge of the spectral decomposition of $H[\mu(y)]$. The result we just established applies to any smooth $f(y)$ that has sufficiently slow variations. 

Numerical simulations in \cite{QB-NUMTI-2021} show that the value of $\sigma_I$ is indeed quite stable against smooth perturbations in  $V_{11}$, $V_{12}$, $V_{13}$ and ${\rm Im}V_{23}$ while even small variations in the other components of $V$ rapidly destabilize it.

%
%
%
%% 
%%%
\subsection{Fedosov-H\"ormander formula and Bulk-Interface correspondence}
\label{sec:smc}
%%%
%%
%
%
%

This section presents a general bulk-interface correspondence for the family of differential operators
\begin{equation}\label{eq:Hh}
   H_h = D_x \gamma_1 + \gamma_2(hD_y,y)
\end{equation}
for $\gamma_1$ a Hermitian matrix in $\Mm(\Cm^n)$ and $\gamma_2(D_y,y)$ a differential operator taking values in $\Mm(\Cm^n)$. We also define $H_V=H_h+V$ a perturbed operator. We set $H:=H_1$.
In the applications considered in this paper, $\gamma_2(hD_y,y)=hD_y\gamma_2+\mu(y)\gamma_3$ but the specific structure is not important so long as (h2) below holds. This class of models finds applications outside of the systems considered here, for instance for the $n-$replica models that appear in the analysis of Floquet topological insulators \cite{BM-FTI-21}. Generalizations to a larger class of models mixing the $x$ and $y$ derivatives and involving higher-order derivatives are worked out in \cite{QB-NUMTI-2021}.

We assume that \eqref{eq:rel} applies uniformly in $0<h\leq1$ and that all traces involved there are computable as integrals along of the diagonal of the Schwartz kernel of the trace-class operators. This holds thanks to Proposition \ref{prop:strongelliptic} for the system \eqref{eq:s22} and thanks to Proposition \ref{prop:Galpha} for the system \eqref{eq:s33} provided that $\mu(y)=f(\eps y)$ with $\eps$ sufficiently small. The preceding section also indicated which perturbations $V$ were allowed for the two systems \eqref{eq:s22} and \eqref{eq:s33}. 

\medskip

Let $\alpha\in\Rm$ fixed. Assuming $\alpha$ is in a bulk band-gap, we assigned two invariants so far. One is the conductivity $\sigma_I$ in \eqref{eq:rel}. The other one is the bulk-difference invariant $W_\alpha$ and its relation to Chern numbers in \eqref{eq:cW}. We now introduce a third invariant as the index of the operator $H_h-\alpha-i\nu(x)$ for well-chosen domain walls $\nu(x)$. The objective of the bulk-interface correspondence is to show that these invariants are all equal and different calculations of the same topological charge.

\medskip

The operator $H_h={\rm Op}^w_h(\sigma)$ with $S^0(m)\ni \sigma = \xi\gamma_1+\gamma_2(\zeta,y)$ for the order $m$ that makes $H_h$ self-adjoint as an unbounded operator on $L^2(\Rm^2)\otimes\Cm^n$.  We define $H_h[\xi]$ in \eqref{eq:Hxi} as the Fourier transform $x\to\xi$ of $H_h$, which is invariant by translation along the $x$ axis.  We collect our assumptions as follows:
\\[2mm]  %%%% HYPOTHESES %%%
(h1)  $2\pi\sigma_I$ is an integer independent of $0<h\leq1$  and $i[H_h,P] \varphi'(H_h)$ is a trace-class operator with trace given by the diagonal integral of its Schwartz kernel. Here, $\varphi$ is a smooth switch function in $\fS[0,1,\alpha-E_0,\alpha+E_0]$ for $\alpha\in\Rm$ and $E_0>0$ and $P$ is a multiplication operator by a smooth function $\chi$ in $\fS[0,1,x_0-\beta,x_0+\beta]$ for $\beta>0$. Then $[H_h,P]=-i\chi'(x)\gamma_1$ with $\chi'(x)$ compactly supported.
\\[2mm] 
%%%%%%%%%%%%%%%%%%
(h2) Let $z=\alpha+i\omega$ and $\sigma_z(y,\zeta;\xi)$ be the semiclassical symbol of $z-H_h[\xi]={\rm Op}_h(\sigma_z)$ in the variables $(y,\zeta)$ with $(\alpha,\omega,\xi)$ seen as parameters. Let $\sigma_\alpha(\omega,y,\xi,\zeta)=\sigma_{\alpha+i\omega}(y,\zeta;\xi)$ be the Weyl symbol at $h=1$ of $\alpha+i\omega-H$.
We assume that $\sigma_\alpha$ is invertible and $\sigma_\alpha^{-1}d_{\omega,y,\xi,\zeta}\sigma_\alpha$ is uniformly bounded (one-form with bounded coefficients) for $\omega^2+y^2+\xi^2+\zeta^2\geq R^2$ sufficiently large. Finally, we assume that $\sigma_\alpha^{-1}d_{\omega,y,\xi,\zeta}\sigma_\alpha\to0$ as $|\zeta|\to\infty$.
%%%%%%%%%% END HYPOTHESES %%%%%

\begin{remark}
 (i) Condition (h1) corresponds to what was proved in section \ref{sec:icpdo}. Condition (h2) is very similar to what is necessary to construct bulk-difference invariants. Indeed by the same diagonalization as in \eqref{eq:diagH}, we find 
\[
  \sigma_\alpha = \alpha+ix - \sum_{j=1}^n h_j(\xi,y,\zeta) \Pi_i(\xi,y,\zeta),\qquad 
  \sigma_\alpha^{-1} = \sum_{j=1}^n (\alpha+ix-h_j(\xi,y,\zeta))^{-1} \Pi_i(\xi,y,\zeta)
\]
where we see that $\sigma_\alpha$ is invertible if all $\alpha+ix-h_j(\xi,y,\zeta)\not=0$ for $(x,\xi,y,\zeta)$ outside of a compact set.  The case $x=0$  implies that $\alpha$ lives within the bulk gap of the system.

(ii) In this paper, we considered two forms of mass terms $\mu(y)$: either bounded domain walls or $\lambda y$. The same applies to the confining term $x$ above, which may be replaced by any domain wall $\nu(x) \in \fS[-|\alpha|-\delta,|\alpha|+\delta]$ for $\delta>0$ or a term of the form $\nu(x) = |\alpha| \arctan x$ with a gap sufficiently large to include the energy level $\alpha$. The above invertibility condition then takes the form $\alpha+i\nu(x)-h_j(\xi,y,\zeta)\not=0$ outside of a compact set. 

(iii)  Let $\nu(x)=x$ or $\nu(x) \in \fS[-|\alpha|-\delta,|\alpha|+\delta]$ for $\delta>0$.
For  system \eqref{eq:s22}, condition (h1) is $\alpha+i\nu(x)\not=\pm\sqrt{\xi^2+\zeta^2+m^2(y)}$. For $m(y)\in \fS[-|m_0|,|m_0|]$, this condition is satisfied when $|\alpha|<|m_0|$ (but not for $|\alpha|>|m_0|$).  For the geophysical system \eqref{eq:s33}, the same conditions hold with $m(y)$ replaced by $f(y)$ with the additional constraint in (h1)  that $\alpha+i\nu(x)\not=0$, i.e, $\alpha\not=0$ when $\nu(x)=0$.
We verify that these conditions are satisfied for $\alpha$ in a bulk band gap, i.e., $\alpha\in (-|f_0|,0)\cup  (0,|f_0|)$ for smooth $f(y)\in\fS[-f_0,f_0]$ and $\alpha\not=0$ for $f(y)=\lambda y$, $\lambda\not=0$.
\end{remark}

\medskip

While the main objective of the bulk interface correspondence in Theorem \ref{thm:bic} below is to relate the interface conductivity $\sigma_I$ to the bulk-difference invariant $W_\alpha$, it turns out that both invariants are connected by a third one, a Fredholm operator naturally related to $H_h$ whose index captures the topological charge. The Fredholm operator is
\begin{equation}
\label{eq:F}
  F_h = H_h - \alpha -i \nu(x)  % = -{\rm Op}^w(\sigma_\alpha(x,y,\xi,h\zeta))
\end{equation}
as alluded to in the above remark with some choices for $\nu(x)$ given there. We set $F:=F_1$.

We assume $\nu(x)$ bounded to simplify the presentation and define $F_h$ as an unbounded operator from ${\mathcal D}(H)$  to $L^2(\Rm^2)\otimes \Cm^n$ where the domain of $H_h$ (which makes it self-adjoint as an unbounded operator on $L^2(\Rm^2)\otimes \Cm^n$) is independent of $h$. There is in fact no reason for any operator of the form \eqref{eq:F} to be Fredholm, that is an operator that admits left and right inverses modulo compact operators. In the preceding section, we introduced two hypotheses (H1) and (H2) on the symbol of $H$. We state corresponding hypotheses on that of $F_h$ when $\nu(x)\in \fS[-\delta-|\alpha|,|\alpha|+\delta]$ to simplify; similar results hold for $\nu(x)=x$ and unbounded domain walls $\mu(y)$ after appropriate changes of the domain ${\mathcal D}(F)$, which we do not consider here in detail. 
Note that we can always choose $\nu\in\fS[-\delta-|\alpha|,|\alpha|+\delta]$ such that $\nu(x)=x$ for $|x|\leq R$ so that the Weyl symbol of $F_h$ is given by $-\sigma_\alpha(x,y,\xi,\zeta)$ defined in (h2) when restricted to the vicinity of the sphere $x^2+y^2+\xi^2+\zeta^2=R^2$. Only those values appear in the Fedosov-H\"ormander formula \eqref{eq:bic0} below.  The hypotheses are:
\\[2mm]
\noindent (H1') Let $F_h$ be as in \eqref{eq:F} and assume that (h2) holds for $0< h\leq h_0$. 
\\[2mm]
\noindent (H2') Let $m=(1+\xi^2+\zeta^2)^{-s}$ for $s>0$. Assume $(I+F^*F)^{-1}={\rm Op}^w(a)$ with $a\in S(m)$ and (h2) holds. 
\\[2mm]
Hypothesis (h2) may be seen as an ellipticity condition at infinity. In the favorable case (H2') of a regularizing operator, then (h2) is sufficient to obtain a Fredholm operator. For systems such as \eqref{eq:s33}, where regularization does not occur for all components, a smallness condition in (H1') is necessary to ensure the Fredholm structure, in parallel to (H1) in section \ref{sec:icpdo}.

\begin{proposition}\label{prop:fredholm}
 Let $F_h$ be the differential operator defined in \eqref{eq:F}.  Let us assume that: either (H1') holds with $h_0$ sufficiently small;  or (H2') holds. Then $F_h$ is a Fredholm operator from ${\mathcal D}(H)$  to $L^2(\Rm^2)\otimes \Cm^n$.
\end{proposition}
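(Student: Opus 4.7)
The plan is to construct a parametrix $Q_h$ for $F_h$ and show that both $F_hQ_h-I$ and $Q_hF_h-I$ are compact on $L^2(\Rm^2)\otimes\Cm^n$. The driving observation is that (h2) is an ellipticity-at-infinity condition on the Weyl symbol of $F_h$, which up to a sign equals $\sigma_\alpha$ evaluated at $\omega=\nu(x)$. Since $\nu(x)$ is bounded and agrees with $x$ on a compact set, the invertibility and bounds in (h2) transfer to invertibility of the full symbol of $F_h$ outside a large ball in $(x,y,\xi,\zeta)$.

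First, I would pick a smooth cutoff $\chi\in C^\infty(\Rm^4;[0,1])$ that vanishes on $\{x^2+y^2+\xi^2+\zeta^2\leq R^2\}$ and equals $1$ outside a slightly larger ball, and set $q(x,y,\xi,\zeta) = -\chi\cdot\sigma_\alpha^{-1}(\nu(x),y,\xi,\zeta)$, $Q_h := {\rm Op}_h^w(q)$. The Weyl composition formula then produces $F_hQ_h = {\rm Op}^w(\chi I) + R_h$ with $R_h$ the Moyal remainder. The operator ${\rm Op}^w(\chi-1)$ has compactly supported symbol and is therefore Hilbert-Schmidt, so it suffices to control $R_h$. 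By the expansion of the Moyal product, $R_h$ is built from Poisson brackets and higher commutators whose terms always contain a factor of the form $\sigma_\alpha^{-1}\partial\sigma_\alpha$, which (h2) keeps uniformly bounded at infinity and drives to $0$ as $|\zeta|\to\infty$.

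The two hypotheses then diverge. Under (H1'), the Calder\'on-Vaillancourt theorem yields $\|R_h\|_{L^2\to L^2}=O(h)$, so for $h\leq h_0$ small enough $I+R_h$ is invertible by Neumann series; then $Q_h(I+R_h)^{-1}$ is a right parametrix modulo compacts, and the analogous left construction yields Fredholmness. Under (H2'), no semiclassical smallness is available; instead, the regularizing hypothesis on $(I+F^*F)^{-1}$ exhibits $R_h$ as a composition of a bounded operator with one whose Weyl symbol lies in $S((1+\xi^2+\zeta^2)^{-s})$, and combined with the $|\zeta|\to\infty$ decay of $\sigma_\alpha^{-1}d\sigma_\alpha$ given by (h2) forces $R_h$ into a symbol class producing compact operators on $L^2(\Rm^2)\otimes\Cm^n$. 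The main obstacle is precisely this second case: decay of the symbol only in momentum does not by itself imply compactness because translation invariance in $(x,y)$ is not broken, so the delicate step is to combine the regularizing bound on $(I+F^*F)^{-1}$ with the position-side decay supplied by (h2) through $\sigma_\alpha^{-1}\partial\sigma_\alpha$, upgrading momentum decay to full phase-space decay and thereby securing Fredholmness without any smallness parameter.
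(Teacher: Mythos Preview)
Your treatment of (H1') is essentially the paper's argument: build a phase-space parametrix $q=\chi f^{-1}$, use the Moyal expansion to get $GF_h={\rm Op}^w_h(\chi)+h\,{\rm Op}^w_h(r_h)$ with $r_h\in S^0(1)$, invert $I+h\,{\rm Op}^w_h(r_h)$ by Neumann series for $h$ small, and absorb ${\rm Op}^w_h(1-\chi)$ as a compact remainder. That part is fine.

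The gap is in (H2'). You try to push the same parametrix through and then argue that the Moyal remainder $R_h$ is compact, but the mechanism you invoke is not there. Hypothesis (h2) gives boundedness of $\sigma_\alpha^{-1}d\sigma_\alpha$ outside a large ball and decay only as $|\zeta|\to\infty$; it does \emph{not} supply position-side decay, so your sentence ``position-side decay supplied by (h2) through $\sigma_\alpha^{-1}\partial\sigma_\alpha$'' is incorrect. Nor do you explain how $(I+F^*F)^{-1}$ would enter the Moyal remainder $R_h=f\sharp q-fq$ at all; that resolvent is a hypothesis on a different operator and does not appear in the composition $F_hQ_h$. Without either smallness in $h$ or genuine phase-space decay of $R_h$, you have no reason for $R_h$ to be compact, and the argument stalls exactly where you flag it as ``delicate.''

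The paper avoids this entirely under (H2') by abandoning the symbolic parametrix and working with $F^*F$ directly. Since the coefficients of $F$ are constant outside a compact spatial set $\Omega$, one forms $G=F^*F+\chi^2(x,y)$ with $\chi^2\in C^\infty_0(\Rm^2)$ equal to $1$ on $\Omega$; then $G$ is positive definite (hence invertible) and, by the hypothesis on $(I+F^*F)^{-1}$, the inverse $G^{-1}$ has symbol in $S((\aver{\xi}+\aver{\zeta})^{-s})$. Now $G^{-1}F^*F=I-G^{-1}\chi^2$, and $G^{-1}\chi^2$ is compact because it combines spatial compact support of $\chi^2$ with momentum decay of $G^{-1}$. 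This gives a left inverse of $F$ modulo compacts; the right inverse is built the same way. The point is that the needed spatial localization comes from the domain-wall structure of the coefficients, not from (h2), and it is injected by hand through $\chi^2(x,y)$ rather than extracted from a Moyal remainder.
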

\begin{proof} 
Assume (H2').  By assumptions, the differential operator $F^*F$ is a constant-coefficient operator outside of a compact domain $\Omega$, which is invertible. Let $G=F^*F+\chi^2(x,y)$ with $\chi^2\in C^\infty_0(\Rm^2)$ with $\chi^2=1$ on $\Omega$. Then $G$ is positive definite and hence invertible with $G^{-1}\in S(m)$ by assumption. We thus have $G^{-1}F^*F=I+G^{-1}\chi^2$ and we find that $G^{-1}\chi^2$ is compact since $\chi^2$ is compactly supported in space and $G^{-1}$ is smoothing. We have thus constructed a left inverse up to a compact operator. We can similarly construct a right inverse and conclude that $F$ is Fredholm \cite[Corollary 19.1.9]{H-III-SP-94}.

%\tb{There should be much better to do for general \pdo s. Same for the preceding section.} 

Let us now consider the case (H1'). We can no longer invoke a compactness argument alone and thus need a smallness condition to obtain the Fredholm property. Let $F_h={\rm Op}_h^w(f)$. Let $\chi+\psi=1$ with $\psi(x,y,\xi,\zeta)\in C^\infty_0(\Rm^4)$ compactly supported and equal to $1$ on a sufficiently large domain that  $g=\chi f^{-1}\in S(m^{-1})$ is defined. Let $G={\rm Op}_h^w(g)$. By semiclassical calculus \cite[Chapters 7\&8]{dimassi1999spectral}, we find that $GF={\rm Op}^w_h(\chi)+h{\rm Op}^w_h(r_h)$ with $r_h\in S^0(1)$. Therefore, for $0\leq h\leq h_0$, $I+h{\rm Op}^w_h(r_h)$ is invertible and $(I+h{\rm Op}^w_h(r_h))^{-1}GF=I-{\rm Op}^w_h(\psi)$. However, ${\rm Op}^w_h(\psi)$ is a compact operator so that $F$ admits a left inverse up to a compact perturbation. We construct a right-inverse similarly and conclude that $F_h$ is Fredholm for $0< h\leq h_0$.  
\end{proof}
It remains to show that (H1') applies to \eqref{eq:s33} when the coefficients (including appropriate perturbations $V$) vary sufficiently slowly and that (H2') applies to \eqref{eq:s22}. This is done along the lines of the derivation in the preceding section. We leave the details to the reader. That $F$ is a Fredholm operator extends to unbounded domain walls as in \cite[Chapter 19]{H-III-SP-94} as well as operators with coefficients whose derivatives vanish at infinity; see for instance \cite{grushin1970pseudodifferential}. 

\medskip

After these preliminary results and hypotheses, we state the main result of this section:
%%%%%%%%%%%%%%%
\begin{theorem}[Bulk-interface correspondence.] \label{thm:bic}
  Let $H$ be given by \eqref{eq:Hh} with $h=1$. Assume (h1)-(h2) above for $\alpha\in\Rm$. Then
\begin{equation}\label{eq:bic0}
   2\pi \sigma_I = W_\alpha = \dfrac{1}{24\pi^2} \dint_\Sigma {\rm tr}  (\sigma_\alpha^{-1}d\sigma_\alpha)^{\wedge 3} = {\rm Index} (F),
\end{equation}
where $\Sigma=\{(\omega,y,\xi,\zeta)\in\Rm^4;\ \omega^2+y^2+\xi^2+\zeta^2=R^2\}$ with the orientation on $\Rm^4$ given by $d\omega\wedge d\xi \wedge dy \wedge d\zeta>0$.
\end{theorem}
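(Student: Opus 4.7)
The plan is to establish the chain \eqref{eq:bic0} by proving each of the three equalities separately, with the identity $2\pi\sigma_I = {\rm Index}(F)$ forming the technical core.

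I would begin with the rightmost equality connecting ${\rm Index}(F)$ to the surface integral over $\Sigma$: this is an application of the Fedosov--H\"ormander index formula (Chapter 19 of H\"ormander III) to the operator $F$. Proposition \ref{prop:fredholm} has already established $F$ as Fredholm under either (H1') or (H2'), and (choosing $\nu(x)=x$ inside a ball of radius $R$ with smooth tapering outside) the Weyl symbol of $F$ agrees with $-\sigma_\alpha$ on and near $\Sigma$. The ellipticity at infinity required by the classical formula is a reformulation of (h2), in particular the uniform boundedness of the logarithmic derivative $\sigma_\alpha^{-1}d\sigma_\alpha$. The normalization $\tfrac{1}{24\pi^2}$ and the orientation on $\Sigma$ inherited from $d\omega\wedge d\xi\wedge dy\wedge d\zeta>0$ then match those in the classical statement.

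Next I would prove the middle equality $W_\alpha = \frac{1}{24\pi^2}\int_\Sigma {\rm tr}(\sigma_\alpha^{-1}d\sigma_\alpha)^{\wedge 3}$ by a Stokes/cobordism argument. The three-form $\omega_3:={\rm tr}(\sigma_\alpha^{-1}d\sigma_\alpha)^{\wedge 3}$ is the pullback of a closed biinvariant form on $GL_n(\Cm)$ and is therefore closed on the open region where $\sigma_\alpha$ is invertible, which by (h2) contains the complement of a bounded set in $\Rm^4$. Because $H$ reduces to the translation-invariant Hamiltonians $H^\pm$ when $\pm y$ is large, the portions of $\Sigma$ above and below a suitable equator in the $y$-direction are cobordant, inside the invertibility region, to two copies of the bulk manifold $\Sm^1\times\Rm^2$ compactified as in the definition of $W_\alpha$; the orientations of the two hemispheres are opposite, forcing a difference rather than a sum, which recovers exactly the two-plane expression \eqref{eq:Walpha}. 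Careful bookkeeping of signs confirms the convention of \eqref{eq:cW}.

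The hardest step, and the main obstacle, is the first equality $2\pi\sigma_I = {\rm Index}(F)$. My approach would be to start from $2\pi\sigma_I = 2\pi\,{\rm Tr}\, i[H,P]\varphi'(H)$ (trace-class with trace computed by diagonal integration of the Schwartz kernel, thanks to (h1)) and apply the Helffer--Sj\"ostrand representation
\[
\varphi'(H) = -\tfrac{1}{\pi}\dint_{\Cm}\bar\partial\tilde\varphi'(z)\,(z-H)^{-1}\,d^2z
\]
together with the identity $[G(z),P]=-G(z)[H,P]G(z)$ for the resolvent $G(z)=(z-H)^{-1}$. After cyclic rearrangement and an integration by parts in $z=\alpha+i\omega$, the trace takes the form of a phase-space integral of the Weyl symbol of a product of resolvents, which upon contour deformation to a large circle compactifies into the three-form ${\rm tr}(\sigma_\alpha^{-1}d\sigma_\alpha)^{\wedge 3}$ integrated against a cycle at infinity in $\Rm^4$, matching the Fedosov--H\"ormander integrand over $\Sigma$. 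The principal obstacle is the combinatorial bookkeeping of the Weyl compositions: one must track the antisymmetrization over the one-forms $d\omega,dy,d\xi,d\zeta$, recover the $3!$ and $(2\pi i)^{-2}$ normalizations, and control semiclassical remainders. The integer-valued $h$-independence built into (h1) is a useful leverage, as it forces subleading contributions to cancel in the limit and reduces the computation to matching the leading-order symbolic density.
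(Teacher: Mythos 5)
Your three-part decomposition mirrors the paper's own structure: Proposition~\ref{prop:sa} provides the $\sigma_I$--to--symbol-integral link, Corollaries~\ref{cor:sfi} and~\ref{cor:bic} supply the Stokes-type bookkeeping for the closed form $\tr(\sigma_\alpha^{-1}d\sigma_\alpha)^{\wedge 3}$, and the Fedosov--H\"ormander formula \cite[Theorem 19.3.1]{H-III-SP-94} is cited at the end. The first two of your steps --- identifying the Fedosov--H\"ormander integral with $\mathrm{Index}(F)$, and the cobordism/Stokes argument connecting $\Sigma$ to the two compactified bulk planes with opposite orientations --- are exactly what the paper does, and you correctly locate the mechanism (the form $\tr(\sigma_\alpha^{-1}d\sigma_\alpha)^{\wedge 3}$ is closed wherever $\sigma_\alpha$ is invertible, and the decay hypotheses in (h2) are what allow the cycle $\Sigma$ to be deformed away to $|\omega|$, $|\xi|$, $|\zeta|\to\infty$ and reduced to two hemispheres at $y=\pm\infty$).

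Where your outline has a real gap is precisely where the paper does all its work: the passage from $\sigma_I=\Tr\,i[H,P]\varphi'(H)$ to a boundary integral of the symbol. Your plan to use $[G(z),P]=-G(z)[H,P]G(z)$ followed by ``cyclic rearrangement'' and ``combinatorial bookkeeping'' does not identify the structural facts that make the calculation close. The paper never uses the resolvent--commutator identity: because $H$ has the form \eqref{eq:Hh} with $D_x$ appearing only through $D_x\gamma_1$ and with $y$-only coefficients, $[H,P]=-i\chi'(x)\gamma_1$ is a mere multiplication operator, so one can integrate in $x$ using translation invariance and Fourier transform to reduce the two-dimensional trace to $\int_\Rm\Tr_y\,\gamma_1\varphi'(H[\xi])\,d\xi$, a one-dimensional semiclassical trace with $\xi$ as a parameter. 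Then the crucial algebraic input is $\gamma_1=\partial_\xi H[\xi]$, whence $\partial_\xi r_z=r_z\sharp_h\gamma_1 r_z$ and $\partial_\lambda r_z=-r_z\sharp_h r_z$, giving $\gamma_1\partial_\lambda r_z=-\gamma_1r_z\sharp_h r_z+r_z\sharp_h\gamma_1 r_z-\partial_\xi r_z$. After the Moyal expansion $a\sharp_h b=ab+\tfrac{h}{2i}\{a,b\}+O(h^2)$ this is a total divergence in $(\xi,y,\zeta)$ at order $h$, and integrating over a growing box puts it in boundary form, with $r_z$ replaced by $\sigma_z^{-1}$ to leading order. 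Without these specific identities there is no reason for your generic resolvent expansion to organize itself into the Chern--Simons three-form. You do correctly flag that the $h$-independence from (h1) is what forces the $h^{-1}$ boundary term and the $O(h)$ remainders to drop out --- that observation is indeed how the paper closes the argument --- but it only rescues the computation once the leading-order density has been shown to be the claimed divergence; it cannot substitute for step (ii) above. You would also need the $\omega$-truncation controlled by the resolvent bound of Lemma~\ref{lem:rz} and the shrinking-support argument $\varphi'\to\varphi'_\rho$ to justify evaluating the limit at $\lambda=\alpha$.
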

%%%%%%%%%%%%%%
%%%%
We recall that $\sigma_I$ is the interface invariant while $W_\alpha$ is the bulk-difference invariant.
The last index is the $L^2-$index of the unbounded Fredholm operator $F=H-\alpha-i\nu(x)$ on $L^2(\Rm^2)\otimes\Cm^n$ with (Weyl) symbol at $h=1$ given by $-\sigma_\alpha(\nu(x),y,\xi,\omega)$. The invariance properties of the integral in \eqref{eq:bic} show that it takes the same values for $\sigma_\alpha(\nu(x),y,\xi,\omega)$ and $\sigma_\alpha(x,y,\xi,\omega)$ \cite[Chapter 19.3]{H-III-SP-94}.
%\tb{ In Bott-Seiley, it is stated that the latter operator is Fredholm iif its symbol is invertible on sufficiently large spheres. To explore further.}

The rest of this section is devoted to a proof of this result and some corollaries.
We will first need properties of the symbol of (semiclassical) resolvents, which we summarize as follows. 
\begin{lemma}\label{lem:rz}
   Let $H_h={\rm Op}_h(a)$ with $a\in S^0(m)$. Let $z=\lambda+i\omega\in\Cm$ with $\omega\not=0$. Then $(z-H_h)^{-1}$ is a bounded operator and there exists an analytic function $z\to r_z=r_z(y,\zeta;h)$ such that $(z-H_h)^{-1}={\rm Op}_h(r_z)$. The coefficient $r_z\in S^0(1)$ satisfies
\[
  |r_z| \leq C | \omega|^{-1} {\rm max} (1,h^{\frac32}|\omega|^{-3})
\]
for a constant $C$ independent of $z\in Z\subset\Cm$ a compact set and $0<h\leq1$.

%Finally, $z\to r_z$ is analytic on $Z$.

%\tb{See below for behavior as $\omega\to0$.}
\end{lemma}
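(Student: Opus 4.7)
The plan is to construct the symbol $r_z$ as a semiclassical parametrix starting from the naive matrix inverse $b_0:=(z-a)^{-1}$, and then to analyze separately the two regimes $|\omega|\geq h^{1/2}$ and $|\omega|<h^{1/2}$ in order to obtain the two branches of the $\max$ in the stated bound.

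\emph{Principal symbol.} Because $H_h$ is self-adjoint (inherited from the setting of \eqref{eq:Hh}), the matrix symbol $a$ is Hermitian-valued pointwise, so for $\omega\neq 0$ the inverse $b_0(y,\zeta;h):=(z-a(y,\zeta;h))^{-1}$ exists with $\|b_0\|\leq |\omega|^{-1}$. Differentiating $(z-a)b_0=I$ iteratively, together with $a\in S^0(m)$, yields $|\partial_{y,\zeta}^\beta b_0|\leq C_\beta |\omega|^{-1-|\beta|}$, so $b_0\in S^0(1)$ with seminorms that degrade as $|\omega|\to 0$; matrix-analyticity of $z\mapsto b_0$ off the real axis is immediate.

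\emph{Parametrix.} The Moyal product gives $(z-a)\sharp_h b_0=I-h\,e_1$ with $e_1\in S^0(1)$ and $|\partial^\beta e_1|\leq C_\beta |\omega|^{-2-|\beta|}$; iterating produces a formal series $b\sim\sum_{k\geq 0}h^k b_k$ with $|\partial^\beta b_k|\leq C_{k,\beta}|\omega|^{-1-2k-|\beta|}$. A standard Borel resummation (see \cite{dimassi1999spectral}, Chapter~7) yields an honest symbol $b\in S^0(1)$ such that ${\rm Op}_h(z-a)\,{\rm Op}_h(b)=I+h^\infty R$ for a smoothing $R$. Since $\omega\neq 0$, $z-H_h$ is $L^2$-invertible with operator-norm bound $|\omega|^{-1}$ by self-adjointness, and Neumann inversion of $I+h^\infty R$ identifies $(z-H_h)^{-1}={\rm Op}_h(r_z)$ for a well-defined $r_z\in S^0(1)$, which inherits analyticity in $z$.

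\emph{Two regimes.} When $|\omega|\geq h^{1/2}$, successive terms in the expansion are controlled by the factor $h|\omega|^{-2}\leq 1$, the leading term $b_0$ dominates, and $|r_z|\leq C|\omega|^{-1}$, which is the first branch of the $\max$. When $|\omega|<h^{1/2}$, the expansion no longer converges term-by-term in a pointwise sense; instead, one invokes Beals' semiclassical characterization of pseudodifferential operators (see \cite{dimassi1999spectral}, Chapter~8): symbol $L^\infty$ bounds are controlled by a fixed finite number of commutator norms against the generators $x_j$ and $hD_j$. Each commutator $[A,(z-H_h)^{-1}]=(z-H_h)^{-1}[A,H_h](z-H_h)^{-1}$ introduces one factor $h$ and two resolvent factors contributing $|\omega|^{-2}$. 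Combined with the a priori operator bound $\|(z-H_h)^{-1}\|\leq |\omega|^{-1}$ and the Calder\'on-Vaillancourt correspondence between symbol seminorms and operator norms, this produces the bound $Ch^{3/2}|\omega|^{-4}$ in this regime, which is the second branch.

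The main obstacle is the small-$|\omega|$ regime: matching exactly the exponent $h^{3/2}|\omega|^{-3}$ in the $\max$ requires careful bookkeeping of (i) how many commutators Beals' criterion demands to control the symbol $L^\infty$ from operator norms in the two phase-space variables, and (ii) how the $h$-scaling of those commutators interacts with the resolvent bounds. A naive count gives either worse powers of $h$ or larger losses of $|\omega|^{-1}$; optimizing the commutator count against the semiclassical Calder\'on-Vaillancourt constants, uniformly for $z$ in a compact set, is the delicate technical point.
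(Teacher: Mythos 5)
Your overall plan is close to what the paper does — both arguments ultimately rest on Beals' semiclassical characterization from Chapter~8 of Dimassi--Sj\"ostrand — but there is a genuine gap, and you have identified it yourself. The entire quantitative content of the lemma is the bound $|r_z|\leq C h^{3/2}|\omega|^{-4}$ in the regime $|\omega|<h^{1/2}$, and you explicitly say that "optimizing the commutator count against the semiclassical Calder\'on--Vaillancourt constants\ldots is the delicate technical point" without carrying it out. A sketch that names the obstacle but leaves it open is not a proof. The paper's route is simply to cite \cite[Proposition 8.6]{dimassi1999spectral}, which states precisely the bound $|r_z|\leq C|\omega|^{-1}\max(1,h^{(2d+1)/2}|\omega|^{-(2d+1)})$ with $d$ the number of space variables; taking $d=1$ gives the claimed exponents. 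So the estimate you are struggling to reproduce by hand is already established in the literature, and the clean proof is to invoke it rather than to re-derive it. Your parametrix construction for the regime $|\omega|\geq h^{1/2}$ is a reasonable alternative presentation, but it buys nothing in the hard regime, which is where the work actually is.

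Two smaller points. First, your remark that analyticity of $z\mapsto r_z$ is "immediate" and is "inherited" is not an argument; the paper gives one: $r_z$ is obtained from the Schwartz kernel of $(z-H_h)^{-1}$ by an explicit integral transform (the Weyl symbol formula), and $\bar\partial_z(z-H_h)^{-1}=0$, so the symbol is analytic in $z$ away from the real axis. Second, you should state explicitly that the hypothesis guaranteeing $(z-H_h)^{-1}$ is bounded for $\omega\neq 0$ is the self-adjointness of $H_h$ (equivalently, $a$ Hermitian-valued), which the lemma leaves implicit but which your argument, like the paper's, uses from the outset.
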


\begin{proof}
  We follow \cite[Chapters  7\&8]{dimassi1999spectral} and \cite{bolte2004semiclassical} for the extension to matrix-valued operators. $z-H_h={\rm Op}_h(z-a)$ with $\omega\not=0$ so that $z-a$ invertible. The Beals' criterion applied in \cite[(8.10)]{dimassi1999spectral} shows that $(z-H_h)^{-1}={\rm Op}_h (r_z)$ for $r_z\in S^0(1)$. This bound is uniform on any compact domain away from $\omega=0$.  \cite[Proposition 8.6]{dimassi1999spectral} shows that $|r|$ is bounded by the maximum of $|\omega|^{-1}$ and $h^{\frac{2d+1}2}|\omega|^{-(2d+2)}$ where dimension $d=1$ in our applications.
Finally, $r_z$ may be written as an appropriate transform of the Schwartz kernel of $(z-H_h)^{-1}$ \cite[Chapters  7]{dimassi1999spectral}, which is analytic since $\bar\partial (z-H_h)^{-1}=0$. This implies the analyticity of $z\to r_z$ on $Z$. 
\end{proof}

The above lemma is applied to the operator $H_h[\xi]$. For $\Cm\ni z=\lambda+i\omega$ with $\omega\not=0$, we define $r_z$ such that  
\begin{equation}\label{eq:rz}
(z-H_h[\xi])^{-1}={\rm Op}_h(r_z)
\end{equation}
with $r_z=r_z(y,\zeta;\xi;h)\in S^0(1)$ (uniformly in $\xi$) a symbol in the variables $(y,\zeta)$ with $(z,\xi)$ as parameters.

The bulk of the proof of the above theorem is to relate $2\pi\sigma_I$ to the symbols $\sigma_z$ and $\sigma_\alpha$ as follows. 
%%%%%%%%%%
\begin{proposition}\label{prop:sa} 
%%%%%%%%%%%
Under hypotheses (h1)-(h2), we have %for $z=\alpha_1+i\omega$ that
%%%%%%%%
\begin{equation}
\label{eq:sa0}
  2\pi \sigma_I  = \dfrac{i}{8\pi^2} \lim_{M_{1,2,3}\to\infty} \dint_{\Rm\times \partial \R} \eps_{ijk} \tr (\sigma_\alpha^{-1} \partial_i \sigma_\alpha  \sigma_z^{-1} \partial_j\sigma_\alpha  \sigma_\alpha^{-1}\nu_k)  d\omega d\Sigma,
\end{equation}
where the variables labeled $1,2,3$ are identified with $\xi,\zeta,y$; $\nu_k$ is the $k$th component of the outward unit normal to the boundary of the rectangle $\R=\prod_{l=1}^3(-M_l,M_l)$; and $d\Sigma$ is the Euclidean measure on that boundary. Also, $\eps_{ijk}$ is the totally antisymmetric tensor such that $\eps_{111}=1$. 
\end{proposition}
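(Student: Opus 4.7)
The plan: re-express $2\pi\sigma_I$ as a phase-space integral of the Weyl symbol of the resolvent $R(z)=(z-H)^{-1}$ via Helffer--Sj\"ostrand calculus, then apply Stokes' theorem to reduce to the claimed boundary integral, which emerges as the restriction of the closed Chern--Simons 3-form $\tr(\sigma_\alpha^{-1}d\sigma_\alpha)^{\wedge 3}$ to $\Rm\times\partial\R$.

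The first step is to substitute the Helffer--Sj\"ostrand representation $\varphi'(H)=-\pi^{-1}\int_{\Cm}\bar\partial\tilde\varphi'(z)R(z)\,d^2z$ (for $\tilde\varphi'$ an almost analytic extension) into $2\pi\sigma_I=2\pi\Tr(\chi'(x)\gamma_1\varphi'(H))$, using $[H,P]=-i\chi'(x)\gamma_1$ from (h1). Translation invariance of $H$ in $x$ decomposes $R(z)=\mathcal{F}^{-1}\int^{\oplus}R_z[\xi]\,d\xi\,\mathcal{F}$; together with $\int\chi'(x)\,dx=1$ and the Weyl symbol $R_z[\xi]={\rm Op}^w_h(r_z)$ from (h2), the trace reduces to an integral over $(y,\zeta,\xi)$ of $\tr(\gamma_1 r_z)$. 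Using $\gamma_1=-\partial_\xi\sigma_z$, the Moyal identity $r_z\sharp\sigma_z=I$, and its differential $\partial_i r_z=-r_z\sharp\partial_i\sigma_z\sharp r_z$, the integrand rewrites purely in terms of $\sigma_z^{-1}$ and $\partial_i\sigma_z$. The Helffer--Sj\"ostrand measure $\bar\partial\tilde\varphi'(z)\,d^2z$ collapses---by holomorphy of $r_z$ in $z$ away from the real axis---to an integration over $z=\alpha+i\omega$ along $\omega\in\Rm$, introducing the fourth derivative $\partial_\omega\sigma_\alpha=i$. Cyclicity of the matrix trace and antisymmetrization then assemble the terms into the 4-form $\tr(\sigma_\alpha^{-1}d\sigma_\alpha)^{\wedge 4}$ on $(\omega,\xi,\zeta,y)$-space, with the correct numerical prefactor.

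By the Maurer--Cartan identity $d(\sigma^{-1}d\sigma)+(\sigma^{-1}d\sigma)^{\wedge 2}=0$, this 4-form is exact: $\tr(\sigma^{-1}d\sigma)^{\wedge 4}=-d\tr(\sigma^{-1}d\sigma)^{\wedge 3}$. Stokes' theorem on $\R=\prod_{l=1}^3(-M_l,M_l)$ then reduces the spatial 3-volume integration to a boundary integration over $\partial\R$; on each face with outward unit normal $\nu$, the pullback of the Chern--Simons 3-form yields the antisymmetric combination $\eps_{ijk}\tr(\sigma_\alpha^{-1}\partial_i\sigma_\alpha\,\sigma_\alpha^{-1}\partial_j\sigma_\alpha\,\sigma_\alpha^{-1})\nu_k$ of \eqref{eq:sa0}, with the prefactor $i/(8\pi^2)$ arising from the combinatorial factor $3$ inside $\tr(\sigma^{-1}d\sigma)^{\wedge 3}$ together with the overall $(2\pi)^{-2}$ Weyl normalization. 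The $\omega$-integration remains on $\Rm$, and the limits $M_{1,2,3}\to\infty$ together with the vanishing of boundary pieces at $\omega=\pm\infty$ are both controlled by the decay of $\sigma_\alpha^{-1}d\sigma_\alpha$ stipulated in (h2).

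The main obstacle is the algebraic bookkeeping in the second step: correctly tracking the semiclassical Moyal corrections to $r_z$ and verifying that the subleading pieces combine into the single Chern--Simons density, rather than proliferating. A conceptually cleaner alternative avoids this by routing through the Fredholm operator $F=H-\alpha-i\nu(x)$ of \eqref{eq:F}: prove first (by a homotopy between the spatial confinement $\nu(x)$ and the $\omega$-regularization in the resolvent) that $2\pi\sigma_I=\mathrm{Index}(F)$, and then invoke the Fedosov--H\"ormander index formula \cite[Chapter~19]{H-III-SP-94} for $\mathrm{Index}(F)$ to obtain \eqref{eq:sa0} directly. In either approach, (h2) is crucial in taming boundary contributions at infinity.
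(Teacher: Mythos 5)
There is a genuine gap in the central step of your argument. You claim to assemble the bulk integrand into the $4$-form $\tr(\sigma_\alpha^{-1}d\sigma_\alpha)^{\wedge 4}$ on $(\omega,\xi,\zeta,y)$-space and then invoke $\tr(\sigma^{-1}d\sigma)^{\wedge 4}=-d\,\tr(\sigma^{-1}d\sigma)^{\wedge 3}$ to get the boundary term by Stokes. But for any smooth invertible matrix-valued $\sigma$, the $4$-form $\tr(\sigma^{-1}d\sigma)^{\wedge 4}$ is \emph{identically zero}: writing $A=\sigma^{-1}d\sigma$, cyclicity of the trace combined with the antisymmetry of the wedge product gives $\tr(A^{\wedge 4})=-\tr(A^{\wedge 4})$. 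Indeed the paper uses precisely this vanishing (in the proof of the corollaries) to conclude that $\tr(\sigma_\alpha^{-1}d\sigma_\alpha)^{\wedge 3}$ is \emph{closed}. So the identity you write is the vacuous statement $0=0$, and Stokes applied to it would give a vanishing boundary contribution, contradicting $\sigma_I\neq 0$. The boundary integral in \eqref{eq:sa0} is not obtained by pulling a bulk $4$-form to the boundary; it arises at a different level.

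The paper's actual route, which you gesture at but do not carry out, works directly with the Moyal algebra. One writes $\gamma_1\partial_\lambda r_z = -\gamma_1 r_z\sharp_h r_z + r_z\sharp_h\gamma_1 r_z - \partial_\xi r_z$; the first two terms combine to leading order in $h$ into $ih\,\{\gamma_1 r_z, r_z\}$, a Poisson bracket that is manifestly in divergence form in $(y,\zeta)$, while the remaining $-\partial_\xi r_z$ is already a $\xi$-derivative. Integrating this over the rectangle $\R$ is what produces a pure boundary integral over $\partial\R$; there is no interior $4$-form anywhere in the argument. Separately, your treatment of the Helffer--Sj\"ostrand smearing is too quick: passing from the $d^2z$-integral against $\bar\partial\widetilde{\varphi'}$ to the single $\omega$-line integral evaluated at $\lambda=\alpha$ requires the truncation to $|\omega|\geq\delta$ (to get uniform symbol bounds), the analytic continuation $r(z)=-i\int_0^\infty\partial_\lambda r_{z+it}\,dt$, a Stokes argument in the complex $z$-plane on $Z_\pm$, and finally the rescaling $\varphi\to\varphi_\rho$ to localize at $\lambda=\alpha$ while preserving the (integer) value of $2\pi\sigma_I$. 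These steps are where the analyticity you invoke actually gets used, and they cannot be collapsed into a one-line ``the measure collapses to $z=\alpha+i\omega$.'' The Fredholm/index alternative you sketch reverses the logical order: in the paper, \eqref{eq:sa0} is established first and then identified with $\mathrm{Index}(F)$ via the Fedosov--H\"ormander formula, so it cannot be used to prove the proposition without a separate, independent proof of $2\pi\sigma_I=\mathrm{Index}(F)$.
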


%%%%%%%%
By hypothesis (h2), the matrix $\sigma^{-1}_\alpha$ is indeed defined when the coefficients $M_\alpha$ are sufficiently large.  The above result shows that the conductivity, and hence all quantities appearing in \eqref{eq:rel} may be written in terms of the symbol $-\sigma_\alpha$ of $H-\alpha-i\omega$. %We will make the parallel with \cite[Theorem 19.31]{H-III-SP-94} more explicit below.
%%%%%%%%

\begin{proof}
 Recall that $[H_h,P]=-i\chi'(x)\gamma_1$.  By hypothesis (h1), traces may be computed as (Lebesgue) integrals along diagonals. Integrating in $x$ and denoting by ${\rm Tr}_y$ the matrix trace and integration of the Schwartz kernel along the diagonal in the variable $y$, we have (by Fubini)
\[
2\pi \sigma_I = 2\pi {\rm Tr} \chi'(x) \gamma_1 \varphi'(H_h) = 2\pi \dint_\Rm {\rm Tr}_y  \chi'(x) \gamma_1 (\varphi'(H_h))(x,x) dx.
\]
Since $\varphi'(H_h)$ is invariant by translation, then $\varphi'(H_h)(x,x)=\varphi'(H_h)(0,0)$ and up to $2\pi$ equals the integral over the dual variable $\xi$ so that (dropping $y$ in ${\rm Tr}\equiv {\rm Tr}_y$ moving forward), 
\[
  2\pi \sigma_I = {\rm Tr}\dint_{\Rm} \gamma_1 \varphi'(H_h[\xi]) d\xi = \dint_{\Rm}  {\rm Tr} \gamma_1 \varphi'(H_h[\xi]) d\xi.
\]
Note that $\gamma_1=\partial_\xi H_h[\xi]$ and the above integrand is formally given by $\partial_\xi \varphi(H_h[\xi])d\xi=d\varphi(H_h[\xi])$ so that $2\pi\sigma_I$ depends only on the behavior of $H_h[\xi]$ for $|\xi|$ large. However, $\varphi$, unlike $\varphi'$, is not compactly supported and the above expression is more suitable for the sequel.

We now use the Helffer-Sj\"ostrand formula and  an almost analytic extension  $\widetilde{\varphi'}(z)$ of $\varphi'(\lambda)$ to write $\varphi'(H_h[\xi])$ as a superposition of resolvent operators of $H_h[\xi]$; see \eqref{eq:hs} and  \cite{davies_1995,dimassi1999spectral}
\[
 2\pi \sigma_I = {\rm Tr} \frac{-1}\pi\dint_{\Cm\times\Rm } \gamma_1 \bar\partial \widetilde{\varphi'}(z) (z-H_h[\xi])^{-1} d^2z d\xi =  \frac{-1}\pi{\rm Tr}  \dint_{\Cm\times\Rm } \bpar \widetilde{\varphi'} (z) \gamma_1 {\rm Op}_h(r_z) d^2z d\xi.
\]
Here, $d^2z=d\lambda d\omega$ for $z=\lambda+i\omega$ and $\bar\partial=\frac12\partial_\lambda+\frac i2\partial_\omega$. The resolvent has a semiclassical symbol $r_z\in S^0(1)$ defined in \eqref{eq:rz} with properties collected in Lemma \ref{lem:rz}. The above may also be recast by Fubini, using hypothesis (h1), as
$\frac{-1}\pi\int_\Rm {\rm Tr} \ {\rm Op}_h \big( \int_{\Cm} \bar\partial \widetilde{\varphi'} (z) \gamma_1r_z d^2z \big) d\xi$.
%The symbol $r_z(y,\zeta;\xi;h)$ is uniformly bounded by $|\omega|^{-4}$ according to Lemma \ref{lem:rz}. This is compensated by $|\bar\partial \varphi'(z)|\leq C|\omega|^4$ when $n\geq4$. 
We thus apply \cite[Theorem 9.4]{dimassi1999spectral} to write the trace, which may be written as an integral of its Schwartz kernel, or equivalently for a semi-classical operator in terms of its symbol as:
\begin{equation}\label{eq:condtrace}
 2\pi\sigma_I =\frac{-1}{2\pi^2 h}\dint_{\Cm\times\Rm^3} \bar\partial \widetilde{\varphi'}(z)\ \tr\  \gamma_1 r_z(y,\zeta;\xi;h)  d^2z d\R.
\end{equation}
Here, tr is matrix trace and $d\R:=dy d\zeta d\xi$.

By construction of the almost analytic extension $\widetilde{\varphi'}(z)$,  the above integration in $z$ is performed over a bounded domain $Z$. Moreover, since $|r_z|$ is bounded by $C|\omega|^{-4}$ uniformly in $h$, which is compensated by $|\bpar \tilde\varphi'(z)|\leq C |\omega|^4$, then for each $0<\eps\to0$, we find $0\leq\delta=\delta(\eps)\to0$ such that the above integral over $|\omega|\leq\delta$ is bounded by $\eps$ uniformly in $0<h$. Let us call $Z_\delta= Z\cap\{|\omega|\geq\delta\}$.  We wish to estimate
\[
  I_\delta = \frac{-1}{2\pi^2 h}\dint_{Z_\delta \times\Rm^3} \bar\partial \widetilde{\varphi'}(z)\ \tr  \gamma_1 r_z(y,\zeta;\xi;h)  d^2z d\R.
\] 

The advantage of the above truncation is that $r_z\in S^0(1)$ uniformly in $z\in Z_\delta$ now since $\delta$ is independent of $h$.
%We introduce $r(z):=r_z$ to better display the dependence in $(\lambda,\omega)$.  
The function $z\to r_z$ is analytic away from $\omega=0$ so that $\partial_\lambda r +i\partial_\omega r=0$. Moreover $r_z\to0$ as $|\omega|\to\infty$ since $\|(z-H_h[\xi])\|^{-1}\leq |\omega|^{-1}$. We can therefore write for $\omega>0$
\[
 r(z) = -\dint_0^\infty \partial_\omega r_{z+it} dt = -i \dint_0^\infty \partial_\lambda r_{z+it}dt.
\]
The domain $\omega<0$ is treated similarly by observing that $r(\bar z)=\bar r(z)$. 
We now focus on $\partial_\lambda r_z(y,\zeta;\xi;h)$ for $|\omega|\geq\delta$. Since $r_{z+it}\in S^0(1)$ uniformly, this allows us to obtain asymptotic expansions in $h$ that hold uniformly in $|\omega|\geq\delta$. We then perform an integration in $(\xi,y,\zeta)$ to display the topological properties of the resolvent and obtain a function that no longer has any singularity at $\omega=0$ thanks to the assumption that $\sigma_\alpha$ is invertible outside of a large sphere in $(\omega,y,\xi,\zeta)$. 

By analyticity, $\partial_\lambda r_z=\partial_z r_z$. We wish to perform an expansion of the integral in $(\xi,y,\zeta)$ of $\gamma_1 \partial_\lambda r_z$. At the operator level, we find $\partial_z(z-H_h)^{-1}=-(z-H_h)^{-2}$, which at the level of symbols translates into $\partial_\lambda r_z=-r_z\sharp_h r_z$. 
We also deduce from $\gamma_1=\partial_\xi H_h[\xi]$ that $\partial_\xi(z-H_h)^{-1}=(z-H_h)^{-1} \partial_\xi H_h (z-H_h)^{-1}=(z-H_h)^{-1}\gamma_1 (z-H_h)^{-1}$ that at the level of symbols, $\partial_\xi r_z = r_h\sharp_h \gamma_1 r_z$.

We thus find
\[
  \gamma_1 \partial_\lambda r_z = -\gamma_1 r_z \sharp_h r_z  + r_z\sharp_h \gamma_1 r_z -\partial_\xi r_z.
\]
Let us introduce the rectangle $\R=\prod_{l=1}^3(-M_l,M_l)$ where we identify the variables $(1,2,3)$ with $(\xi,\zeta,y)$. Integrals such as $I_\delta$ above are seen as the limit as  $\R\to\infty$ with the meaning that $M_l\to\infty$ for each $l=1,2,3$. 
%We define $d\R:=d\xi dy d\zeta$. 
We are therefore interested in the quantity for $\omega>0$
\[
  \Phi_\R(z) := \frac1h  \dint_{\R} {\rm tr} \gamma_1 \partial_\lambda r_z d\R = \frac1h  \dint_{\R} {\rm tr} [-\gamma_1 r_z \sharp_h r_z  + r_z\sharp_h \gamma_1 r_z -\partial_\xi r_z] d\R.
\]
We want to show that the above trace is asymptotically in divergence form and thus an integral over $\partial\R$.  For two symbols $a$ and $b$, we find \cite{dimassi1999spectral,zworski2012semiclassical}
\[
  a\sharp_h b = ab + \frac{h}{2i}\{a,b\} + O(h^2)
\]
with $\{a,b\}=\partial_\zeta a\partial_yb-\partial_y a \partial_\zeta b$. By cyclicity of the trace, $\tr ab=\tr ba$ and $\tr\{a,b\}=-\tr\{b,a\}$ so that $\tr a\sharp_hb -\tr b\sharp_h a =- ih \tr\{a,b\} + O(h^2)$. Here and below, all terms $O(h^k)$ are so in the sense of $S^0(1)$ symbols \cite{dimassi1999spectral}.  This implies
\[
   -\gamma_1 r_z \sharp_h r_z  + r_z\sharp_h \gamma_1 r_z = ih {\rm tr} \{\gamma_1r_z,r_z\} +O(h^2).
\]
The integral $I_\delta$ involves a factor $h^{-1}$ so that only $i{\rm tr} \{\gamma_1r_z,r_z\}$ survives in the limit $h\to0$. It thus remains to integrate ${\rm tr} [ih  \{\gamma_1r_z,r_z\} -\partial_\xi r_z]$ over $\R$.

We now verify that $2\{a,b\}=\partial_\zeta(a\partial_yb-\partial_ya b)-\partial_y(a\partial_\zeta b-\partial_\zeta ab)$ in divergence form in these variables.  We thus find that  %%%%%%\tb{ CHECK DETAILS} %%%%%%%%%%%
\[
   {\rm tr} \{\gamma_1r_z,r_z\}= \partial_y \tilde a_y  -\partial_\zeta \tilde a_\zeta, \quad \tilde a_\tau=\frac12[\gamma_1 r_z,\partial_\tau r_z], \quad \tau=y,\zeta.
\]
This means that the integral of ${\rm tr} [ih  \{\gamma_1r_z,r_z\} -\partial_\xi r_z]$ over $\R$ may be written as
\[
   \Phi_\R(z) = \frac 1h {\rm tr} \int_{\partial \R} [ ih (\tilde a_y \nu_y -\tilde  a_\zeta \nu_\zeta) + r_z\nu_\xi] d\Sigma + O(h),
\]
with $\nu_\tau$ the $\tau$th component of the normal unit vector to $\R$ at the boundary and $d\Sigma$ the surface measure there. Provided that $M_l$ are large enough, $\sigma_z^{-1}$ is then defined by assumption (h2) for all $\lambda$ sufficiently close to $\alpha$ by continuity. That only values of $\lambda$ close to $\alpha$ are in $Z$ is ensured by possibly shrinking the support of $\varphi'$ and replacing it by $\varphi_\rho'(\lambda) = \frac1\rho \varphi'(\frac1\rho(\lambda-\alpha)+\alpha)$. We know from Proposition \ref{prop:stabsigma} that $\sigma_I$ is independent of $0<\rho\leq 1$. 

The inverse $\sigma_z^{-1}$ now exists uniformly in $\omega$, even $|\omega|\leq\delta$, still by assumption. This allows us to approximate $r_z$ using ${\rm Op}^w_h(r_z) {\rm Op}^w_h(\sigma_z) = I$ so that  $I = r_z \sharp_h \sigma_z$. By  semiclassical calculus of Weyl operators this  implies that 
\[
   I = r_z \sigma_z + \frac{h}{2i}\{r_z,\sigma_z\} + O(h^2).
\]
From this, we deduce that in the $S^0(1)$ sense
\begin{equation}\label{eq:rzapp}
  r_z = \sigma_z^{-1} + \frac{ih}{2} \{ r_z,\sigma_z\} \sigma_z^{-1} + O(h^2) = \sigma_z^{-1} + \frac{ih}{2} \{ \sigma^{-1}_z,\sigma_z\} \sigma_z^{-1} + O(h^2).
\end{equation}
To leading order, we may therefore replace $r_z$ by $\sigma_z^{-1}$ in the terms $\tilde a_\tau$ above, defining the corresponding terms $a_\tau$ and obtain
\[
  \Phi_\R(z) =  {\rm tr} \int_{\partial \R} [ i (a_y \nu_y - a_\zeta \nu_\zeta) +  \frac{i}{2} \{ \sigma^{-1}_z,\sigma_z\} \sigma_z^{-1}\nu_\xi + \frac1h \sigma_z^{-1}\nu_\xi] d\Sigma + O(h).
\]
We now write, using that $\gamma_1=-\partial_\xi \sigma_z$ is a constant matrix,
\[
  2\tr \{\sigma_z^{-1},\gamma_1 \sigma_z^{-1}\} = \tr [\partial_\zeta ([\partial_y\sigma_z^{-1},\partial_\xi\sigma_z]\sigma_z^{-1}) - \partial_y ([\partial_\zeta\sigma_z^{-1},\partial_\xi\sigma_z] \sigma_z^{-1})].
\]
The last term equals $\partial_y ([\partial_\xi\sigma_z^{-1},\partial_\zeta\sigma_z] \sigma_z^{-1})$. We also use the identity $\partial_i\sigma_z^{-1}=-\sigma_z^{-1}\partial_i\sigma_z \sigma_z^{-1}$ to find eventually that
\[
\Phi_\R(z) =  \frac i2 {\rm tr} \int_{\partial \R} [ \eps_{ijk} \sigma_z^{-1} \partial_i\sigma_z
 \sigma_z^{-1} \partial_j\sigma_z \sigma_z^{-1}\nu_k +  \frac1h \sigma_z^{-1}\nu_\xi] d\Sigma + O(h).
\]
%\tb{There must be a more direct route. Details to work out later.}
We thus obtain that
\[\begin{array}{rcl}
  \Psi_\R(z) &:=& \dfrac 1h {\rm tr} \dint_\R \gamma_1 r_z d\R = -i \dint_0^\infty \Phi_\R(z+it)dt 
 \\ &=& \dfrac12 \dint_{\partial\R \times[0,\infty)}  [ \eps_{ijk} \sigma_{z+it}^{-1} \partial_i\sigma_{z+it}
 \sigma_{z+it}^{-1} \partial_j\sigma_{z+it} \sigma_{z+it}^{-1}\nu_k +  \frac1h \sigma_{z+it}^{-1}\nu_\xi] d\Sigma dt.
\end{array}
\]
We have $I_\delta=\lim_{\R\to\infty} I_{\delta,\R}$ with 
\[
  I_{\delta,\R} =\frac{-1}{2\pi^2} \dint_{Z_\delta} \bar\partial \widetilde{\varphi'}(z) \Psi_\R(z) dz.
\]
This integral is close to an integer as $\delta\to0$ and $\R\to\infty$ since $2\pi\sigma_I$ is known to be an integer. The term proportional to $h^{-1}$ in $\Psi_R(z)$ must therefore (exactly) vanish. We still call $\Psi_\R(z)$ the term without his component.
Since $\Psi_\R(z)$ is a smooth function in $z$ close to $\alpha$ even when $\omega$ is small, then up to an error of order $\eps$,
\[
 I_{\delta,\R} =\frac{-1}{2\pi^2} \dint_{Z} \bar\partial \widetilde{\varphi'}(z) \Psi_\R(z) dz + o(1).
\]
Since all terms involved (from the beginning) are analytic in $z$, then so is $\Psi_R(z)$ away from the real axis $\omega=0$. Let $Z=\overline{Z_+\cup Z_-}$, where $Z_\pm=Z\cap\{\pm\omega>0\}$.
Using $\bpar  \widetilde{\varphi'}(z) \Psi_\R(z) = \bpar( \widetilde{\varphi'}(z) \Psi_\R(z))$, we have by the Stokes formula
\[
   \dint_{Z_\pm} \bar\partial \widetilde{\varphi'}(z) \Psi_\R(z) d^2z
   = \frac{-i}2\dint_{\partial Z_{\pm}} \widetilde{\varphi'}(z) \Psi_\R(z) dz,
\]
where $dz=\pm d\lambda$ on the part of $\partial Z_{\pm}$ on the real axis, $\widetilde{\varphi'}(z)=\varphi'(\lambda)$ while $\Psi_\R(z)$ is defined as the limit from $\omega>0$ while $\Psi_\R(\bar z)=\bar \Psi_R(z)$ for $\omega<0$. Overall, we therefore find 
\[
  o(1)+ I_\R = \frac{i}{4\pi^2}  \dint_\Rm \varphi'(\lambda) [\Psi_R(\lambda+i0)-\bar\Psi_R(\lambda-i0)] d\lambda.
\]
Each of the $\Psi_R$ terms  is an integral over a half line in $\omega$ while the difference involves a line integral in $\omega$.

As mentioned earlier, this term is independent of $\varphi_\rho'(\lambda) = \frac1\rho \varphi'(\frac1\rho(\lambda-\alpha)+\alpha)$. As $\rho$ to $0$, and using the smoothness of $\lambda\mapsto \Psi_R(\lambda+i0)$, we evaluate the above at $\lambda=\alpha$ with $\int _\Rm\varphi'_\rho(\lambda)d\lambda=1$ to deduce in the limit $\R\to\infty$ the equality of integers:
\[
  2\pi \sigma_I  = \lim_{\R\to\infty} \frac{i}{8\pi^2} \dint_{\partial\R \times \Rm} {\rm tr} [ \eps_{ijk} \sigma_\alpha^{-1} \partial_i\sigma_\alpha \sigma_\alpha^{-1} \partial_j\sigma_\alpha \sigma_\alpha^{-1}\nu_k ] d\Sigma  d\omega.
\]
This concludes the proof of the proposition. 
\end{proof}
%%%%%%

We now deduce several corollaries from the preceding result. The most natural result is the following relation between the interface conductivity and the Fedosov-H\"ormander formula for the index of the operator $i\omega-H$.  
%%%%%%%%
\begin{corollary}[Spectral flow Index]
\label{cor:sfi}
%%%%%%
  Let $\Sigma$ be a surface in the variables $(\xi,\zeta,y,\omega)$  such that $\sigma_\alpha^{-1}$ is defined on and outside of the surface. Then
\begin{equation}\label{eq:sfi}
  2\pi \sigma_I = \dfrac{1}{24\pi^2} \dint_\Sigma {\rm tr}  (\sigma_\alpha^{-1}d\sigma_\alpha)^{\wedge 3}.
\end{equation}
\end{corollary}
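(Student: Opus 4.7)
The plan is to recognize the integrand in Proposition \ref{prop:sa} as the pull-back of the closed Cartan 3-form $\omega_3:=\tr(\sigma_\alpha^{-1}d\sigma_\alpha)^{\wedge 3}$ to the particular surface $\partial\R\times\Rm$ (seen in the ambient 4-space with coordinates $(\omega,y,\xi,\zeta)$), and then to use Stokes' theorem to deform this reference surface to the arbitrary $\Sigma$ of the statement.

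First I would verify that $\omega_3$ is closed on the open set where $\sigma_\alpha$ is invertible. Setting $A=\sigma_\alpha^{-1}d\sigma_\alpha$, the identity $d\sigma_\alpha^{-1}=-\sigma_\alpha^{-1}d\sigma_\alpha\,\sigma_\alpha^{-1}$ yields the Maurer--Cartan relation $dA=-A\wedge A$, and cyclicity of the trace gives $d\tr(A^{\wedge 3})=3\tr(dA\wedge A\wedge A)=-3\tr(A^{\wedge 4})=0$, the last vanishing because $\tr(A^{\wedge 4})$ is simultaneously invariant under a cyclic rotation of the matrix slots and antisymmetric under the same cyclic rotation of the form slots.

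Next I would match the integrand of Proposition \ref{prop:sa} with $\omega_3$ pulled back to $\partial\R\times\Rm$. Since $\sigma_\alpha=\alpha+i\omega-(\text{symbol of }H[\xi])$, we have $\partial_\omega\sigma_\alpha=iI_n$, so that $A_\omega=i\sigma_\alpha^{-1}$. On a face of $\partial\R$ with outward spatial unit normal $\nu_k$, the coefficient of $d\omega\wedge dx^i\wedge dx^j$ in $\omega_3$ (after collecting the six signed trace permutations) is $3\tr(A_\omega[A_i,A_j])$; substituting $A_\omega=i\sigma_\alpha^{-1}$ and rearranging cyclically under the trace produces exactly
\[
3i\,\eps_{ijk}\tr\bigl(\sigma_\alpha^{-1}\partial_i\sigma_\alpha\,\sigma_\alpha^{-1}\partial_j\sigma_\alpha\,\sigma_\alpha^{-1}\bigr)\nu_k,
\]
so that $\tfrac{i}{8\pi^2}\cdot\tfrac{1}{3i}=\tfrac{1}{24\pi^2}$ recasts Proposition \ref{prop:sa} as $2\pi\sigma_I=\tfrac{1}{24\pi^2}\lim_{\R\to\infty}\int_{\partial\R\times\Rm}\omega_3$.

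Finally, given any $\Sigma$ on and outside of which $\sigma_\alpha^{-1}$ is defined, I would enclose $\Sigma$ inside a large box $[-M,M]\times\R$, so that the cobordism between $\Sigma$ and $\partial([-M,M]\times\R)$ lies entirely in the open set where $\omega_3$ is defined. Stokes' theorem and $d\omega_3=0$ then give equality of the two boundary integrals; sending $M$ and $\R$ to infinity and using hypothesis (h2) to kill the contributions from the receding pieces at $|\omega|,|\zeta|\to\infty$ yields \eqref{eq:sfi}. The main obstacle is precisely this last limiting step: one must verify that $(\sigma_\alpha^{-1}d\sigma_\alpha)^{\wedge 3}$ is integrable on the compactification and that the outer far-field surface integrals tend to zero as the box grows. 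This follows from combining the explicit $i\omega$-dependence of $\sigma_\alpha$ (yielding $|\omega|^{-3}$ decay in the $\omega$ direction) with the boundedness of $\sigma_\alpha^{-1}d\sigma_\alpha$ and its vanishing as $|\zeta|\to\infty$ guaranteed by (h2); once these estimates are in place, the remainder is a standard Stokes-type deformation of a closed de Rham class.
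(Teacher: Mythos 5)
Your proof is correct and follows essentially the same route as the paper: recognize the integrand of Proposition~\ref{prop:sa} as $\tfrac{1}{3}$ times the closed Cartan form $\tr(\sigma_\alpha^{-1}d\sigma_\alpha)^{\wedge 3}$ via $\partial_\omega\sigma_\alpha=i$, establish closedness through $\tr(\sigma_\alpha^{-1}d\sigma_\alpha)^{\wedge 4}=0$, and then deform from the reference cylinder $\partial\R\times\Rm$ to the arbitrary closed surface $\Sigma$ by Stokes' theorem, using (h2) to control the receding faces. The only difference is that you spell out the Maurer--Cartan identity and the permutation count leading to the factor $3$ more explicitly than the paper, which simply cites H\"ormander's index-theorem argument for these steps.
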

Here and below, the orientation of $\Rm^4$ is $d\xi\wedge d\zeta\wedge dy\wedge d\omega=d\omega\wedge d\xi\wedge dy \wedge d\zeta>0$.
%%%%%%%%
\begin{corollary}[Bulk Interface Correspondence]
\label{cor:bic}
%%%%%%%
 We find that 
\begin{equation}\label{eq:bic}
    2\pi \sigma_I = \lim_{y\to\infty} \dfrac{i}{8\pi^2} \dint_{\Rm^3} \tr [\sigma_\alpha^{-1}\partial_\xi\sigma_\alpha,\sigma_\alpha^{-1}\partial_\zeta\sigma_\alpha] \sigma_\alpha^{-1}  \big|_{-y}^y d\xi d\zeta d\omega = -W_\alpha = \sum_{h_i<\alpha}c_i.
\end{equation}
\end{corollary}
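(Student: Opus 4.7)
The plan is to deduce the bulk-interface correspondence directly from the spectral flow index formula of Corollary \ref{cor:sfi}, by choosing the surface $\Sigma$ to be the boundary of a slab in the $y$-direction and then tracking orientations to reconstruct the bulk-difference invariant $W_\alpha$. Concretely, I would fix $Y>R_0$ large enough that $H$ has reached its bulk form $H^\pm$ for $\pm y\geq Y$, take a large ball $B_R$ in $(\omega,\xi,\zeta)$, and apply \eqref{eq:sfi} to the closed surface $\Sigma_{Y,R}=\partial\Omega_{Y,R}$ with $\Omega_{Y,R}=[-Y,Y]_y\times B_R$. Since $\tr(\sigma_\alpha^{-1}d\sigma_\alpha)^{\wedge 3}$ is a closed form on the open set where $\sigma_\alpha$ is invertible (by the standard Maurer-Cartan computation $d(\sigma_\alpha^{-1}d\sigma_\alpha)=-(\sigma_\alpha^{-1}d\sigma_\alpha)^{\wedge 2}$), the value $2\pi\sigma_I$ is independent of $R$ and of $Y>R_0$.

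Next, I would split $\Sigma_{Y,R}$ into the two planar caps $\{y=\pm Y\}\cap B_R$ and the lateral cylinder $[-Y,Y]\times\partial B_R$ and evaluate each piece. On the cap $y=Y$, the restriction of the 3-form uses only the variables $(\omega,\xi,\zeta)$; the crucial simplification comes from $\partial_\omega\sigma_\alpha=iI$, so that $\sigma_\alpha^{-1}\partial_\omega\sigma_\alpha=i\sigma_\alpha^{-1}$. Expanding $(\sigma_\alpha^{-1}d\sigma_\alpha)^{\wedge 3}$ antisymmetrically and using cyclicity of the trace in the form $\tr[\sigma_\alpha^{-1},\sigma_\alpha^{-1}\partial_\xi\sigma_\alpha]\sigma_\alpha^{-1}\partial_\zeta\sigma_\alpha=\tr[\sigma_\alpha^{-1}\partial_\xi\sigma_\alpha,\sigma_\alpha^{-1}\partial_\zeta\sigma_\alpha]\sigma_\alpha^{-1}$, one obtains
\[
  \tr(\sigma_\alpha^{-1}d\sigma_\alpha)^{\wedge 3}\big|_{y=Y}=3i\,\tr[\sigma_\alpha^{-1}\partial_\xi\sigma_\alpha,\sigma_\alpha^{-1}\partial_\zeta\sigma_\alpha]\sigma_\alpha^{-1}\,d\omega\wedge d\xi\wedge d\zeta.
\]
With the $\Rm^4$ orientation $d\omega\wedge d\xi\wedge dy\wedge d\zeta$, the boundary orientation of the cap $y=Y$ is $+d\omega\wedge d\xi\wedge d\zeta$ and that of $y=-Y$ is $-d\omega\wedge d\xi\wedge d\zeta$, which together with the prefactor $1/(24\pi^2)$ produces exactly the $i/(8\pi^2)$ factor and the $|_{-y}^{y}$ structure appearing in \eqref{eq:bic}.

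The step I expect to be the main obstacle is showing that the lateral piece $[-Y,Y]\times \partial B_R$ contributes $o(1)$ as $R\to\infty$. On this cylinder, $y$ is bounded while at least one of $|\omega|,|\xi|,|\zeta|$ is of order $R$. Hypothesis (h2) provides uniform boundedness of $\sigma_\alpha^{-1}d\sigma_\alpha$ at infinity and genuine decay only in the $|\zeta|\to\infty$ direction, so the parts of the cylinder where $|\xi|$ or $|\omega|$ dominate require a separate argument. For these, I would use the explicit asymptotic $\sigma_\alpha\sim i\omega I - H_\infty(\xi,\zeta,y)$, whence $\sigma_\alpha^{-1}=O(|\omega|^{-1})$ while $d\sigma_\alpha$ is bounded, giving $(\sigma_\alpha^{-1}d\sigma_\alpha)^{\wedge 3}=O(|\omega|^{-3})$ on the $|\omega|$-large pieces, and a similar asymptotic analysis using homogeneity of the principal symbol of $H$ for the $|\xi|$-large pieces. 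This is where the specific structure of $H$ matters: for the systems \eqref{eq:s22}-\eqref{eq:s33} the needed decay can be verified directly, and more generally it is a consequence of the ellipticity at infinity already used in Proposition \ref{prop:fredholm}.

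Finally, to identify the resulting integral with $-W_\alpha$, I would rewrite $G=(z-H)^{-1}=\sigma_\alpha^{-1}$, so $G^{-1}=\sigma_\alpha$ and $dG^{-1}\,G=d\sigma_\alpha\cdot\sigma_\alpha^{-1}$, whence $\tr(dG^{-1}G)^{\wedge 3}=\tr(\sigma_\alpha^{-1}d\sigma_\alpha)^{\wedge 3}$ by cyclicity under the trace. The two caps $y=\pm Y$, after the one-point compactification of $\omega\in\Rm$ into $\Sm^1$ and of $(\xi,\zeta)\in\Rm^2$ in each half space, are identified with the two planes $P_\pm$ of the construction $\Sm^2\simeq(P_+\sqcup P_-)/\!\!\sim$ from Section \ref{sec:bulk}; the orientation conventions there place the global minus sign of $2\pi\sigma_I=-W_\alpha$, and Lemma \ref{lem:equivCW} then gives $-W_\alpha=\sum_{h_i<\alpha}c_i$, completing the chain of equalities in \eqref{eq:bic}.
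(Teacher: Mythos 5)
Your proposal is correct and follows essentially the same route as the paper: start from the spectral-flow formula of Corollary \ref{cor:sfi}, use closedness of $\tr(\sigma_\alpha^{-1}d\sigma_\alpha)^{\wedge 3}$ together with Stokes to deform to a slab, verify that the $y$-caps reproduce the commutator integrand (via $\partial_\omega\sigma_\alpha=iI$ and cyclicity), argue that the lateral contributions vanish at infinity, and then identify the result with $-W_\alpha$ and $\sum_{h_i<\alpha}c_i$ through Lemma \ref{lem:equivCW}. The only cosmetic differences are that you use a ball $B_R$ in $(\omega,\xi,\zeta)$ instead of the paper's rectangle $\R_4$ and that you fix $Y$ first before sending $R\to\infty$, which slightly streamlines the decay estimates.
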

%%%%%%%%
This is the bulk-interface correspondence, stating that the interface conductivity equals the bulk-difference invariant. We use the notation $X|_{-y}^y=X(y)-X(-y)$. We also have the relation to the spectral asymmetry:
%%%%%%%
\begin{corollary}[Spectral Asymmetry]\label{cor:sa}
 %Let us assume that $\sigma^{-1}$ decays sufficiently rapidly on the sphere at infinity in $(\omega,y,\zeta)$ at fixed values of $\xi$. 
Let us assume that the $3-$form $\tr (\sigma_\alpha^{-1}d\sigma_\alpha)^{\wedge 3}$ integrated over bounded domain in $(\omega,\xi,\zeta)$ at fixed values of $y$ converges to $0$ as $|y|\to\infty$. 
Then we find that:
\begin{equation}\label{eq:sa}
    2\pi \sigma_I = \lim_{\xi\to\infty} \dfrac{i}{8\pi^2} \dint_{\Rm^3} \tr [\sigma_\alpha^{-1}\partial_\zeta\sigma_\alpha,\sigma_\alpha^{-1}\partial_y\sigma_\alpha] \sigma_\alpha^{-1} \big|_{-\xi}^\xi dyd\zeta d\omega.
\end{equation}
\end{corollary}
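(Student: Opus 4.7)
The plan is to apply the boundary-integral reduction of Proposition \ref{prop:sa} while retaining a different face of the rectangle $\R=\prod_{l=1}^3(-M_l,M_l)$ than in Corollary \ref{cor:bic}. From Proposition \ref{prop:sa},
\[
2\pi \sigma_I  = \dfrac{i}{8\pi^2} \lim_{M_{1,2,3}\to\infty} \dint_{\Rm\times \partial \R} \eps_{ijk} \tr (\sigma_\alpha^{-1} \partial_i \sigma_\alpha \, \sigma_\alpha^{-1} \partial_j\sigma_\alpha \, \sigma_\alpha^{-1}\nu_k)  d\omega d\Sigma,
\]
with $(1,2,3)\leftrightarrow(\xi,\zeta,y)$. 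The six faces of $\partial\R$ split into three pairs indexed by the value of $k$. Corollary \ref{cor:bic} keeps only the $k=3$ pair ($y=\pm M_3$) after sending $M_1,M_2\to\infty$; here I keep instead the $k=1$ pair ($\xi=\pm M_1$) and show that the remaining four faces give no contribution.

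First I would dispose of the $k=2$ faces $\zeta=\pm M_2$. Hypothesis (h2) states $\sigma_\alpha^{-1}d\sigma_\alpha\to 0$ as $|\zeta|\to\infty$, so the integrand vanishes pointwise there; the outer $\omega$-integration is controlled by $\sigma_\alpha\sim i\omega$ for large $|\omega|$, which gives an $O(|\omega|^{-3})$ bound on the three-factor trace and makes the $\omega$-integral uniformly convergent. These two contributions therefore vanish as $M_2\to\infty$.

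Next I would dispose of the $k=3$ faces $y=\pm M_3$. Up to combinatorial constants and the outward-normal sign, the integrand restricted to such a face is the pullback of the closed $3$-form $\tr(\sigma_\alpha^{-1}d\sigma_\alpha)^{\wedge 3}$ to the hyperplane $y=\text{const}$. The hypothesis of the corollary asserts precisely that this pullback, integrated over bounded domains in $(\omega,\xi,\zeta)$, tends to $0$ as $|y|\to\infty$; with $\xi$ bounded by $M_1$ and $\zeta$ by $M_2$ at this stage of the iterated limit, the remaining region $|\omega|$ large is killed by the uniform $O(|\omega|^{-3})$ bound above. Hence these two contributions also vanish.

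Only the $k=1$ faces $\xi=\pm M_1$ survive. With outward normal $\nu_1=\pm 1$, the contraction $\eps_{ij1}$ forces $(i,j)\in\{(2,3),(3,2)\}$ and gives
\[
\eps_{ij1}\tr(\sigma_\alpha^{-1}\partial_i\sigma_\alpha\,\sigma_\alpha^{-1}\partial_j\sigma_\alpha\,\sigma_\alpha^{-1})
=\tr[\sigma_\alpha^{-1}\partial_\zeta\sigma_\alpha,\sigma_\alpha^{-1}\partial_y\sigma_\alpha]\,\sigma_\alpha^{-1}.
\]
The two faces carry opposite signs from the outward normal, producing the bracket $\big|_{-M_1}^{M_1}$; relabelling $M_1$ by $\xi$ and sending $\xi\to\infty$ yields \eqref{eq:sa}. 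The main obstacle is the vanishing of the $k=3$ faces: unlike the decay in $\zeta$ or $\omega$, the one-form $\sigma_\alpha^{-1}d\sigma_\alpha$ need not vanish at $|y|\to\infty$ on bounded $(\omega,\xi,\zeta)$ since the symbol there still encodes the bulk topologies at $y=\pm\infty$, which is precisely why the cancellation is imposed as a separate hypothesis rather than derived from (h2).
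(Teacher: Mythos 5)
Your proof is correct and follows essentially the same approach as the paper: decompose the boundary of the large box into its pairs of faces, retain the $\xi=\pm M_1$ pair, and argue that the $\zeta$-faces vanish by (h2), the $y$-faces vanish by the corollary's added hypothesis, and the unbounded $\omega$-direction is controlled by the $O(|\omega|^{-3})$ decay of the integrand. The paper phrases the same argument in terms of first passing from the cylinder $\Rm_\omega\times\partial\R_3$ to the boundary $\partial\R_4$ and then invoking closedness of $\tr(\sigma_\alpha^{-1}d\sigma_\alpha)^{\wedge 3}$ via Stokes, but the face-by-face bookkeeping and the identification of the extra hypothesis with the failure of $y$-decay of $\sigma_\alpha^{-1}d\sigma_\alpha$ are exactly as in your write-up.
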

%%%%%%%
This is the three-dimensional winding number of  the Green's function used in \cite{essin2011bulk,volovik2009universe}.
%%%%%%%
\begin{proof}
We first note that $i=\partial_\omega\sigma_\alpha$ with $\sigma_\alpha=\sigma_\alpha(\omega,y,\xi,\zeta)$. The result of Proposition \ref{prop:sa} states that $\sigma_I$ is up to a constant the limit as $\R\to\infty$ of the integral of $\tr (\sigma_\alpha^{-1} d\sigma_\alpha)^3$ over the cylinder $\Rm\times \R$. A direct calculation (as in the proof of the index theorem in \cite[Chapter 19.3]{H-III-SP-94}) shows that 
\[
  \tr (\sigma_\alpha^{-1} d\sigma_\alpha)^4 =0,
\]
which implies that $\tr (\sigma_\alpha^{-1} d\sigma_\alpha)^3$ is a closed form. 

Note that the latter form equals $3$ times the form written using commutators as in the proposition; this explains the factor $24=8\times3$ in the first corollary. For $M_i$ sufficiently large, let $\Omega$ be the volume with disconnected boundary terms $\Sigma$ and $\partial \R_4$, where $\R_4=\prod_{l=1}^4(-M_l,M_l)$ this time also including the fourth variable identified with $\omega$. Then the Stokes theorem implies that the integral of $\tr (\sigma_\alpha^{-1} d\sigma_\alpha)^3$ over $\Sigma$ and that over $\partial \R_4$ (with the same orientation) are equal. 

To prove Corollary \ref{cor:bic}, we need to approximate the integral over the cylinder $\Rm_\omega\times\partial \R_3$ where $\R_3=\prod_{l=1}^3(-M_l,M_l)$ by an integral over the boundary of a four-dimensional rectangle $\R_4=\prod_{l=1}^4(-M_l,M_l)$. This means that, as $M_4=M_\omega\to\infty$, we want the integral of $(\sigma_\alpha^{-1} d_{\xi,\zeta,y}\sigma_\alpha)^{\wedge 3}$ over the rectangle $\R_3$ for a fixed value $\omega_0$ of $\omega$ to be small as $\omega_0\to\infty$. From (h2), $\sigma_\alpha^{-1}d\sigma_\alpha$ is bounded (uniformly as a matrix-valued one form) and $\partial_\xi\sigma_\alpha=-\gamma_1$ bounded while $\sigma_\alpha^{-1}$ goes to $0$ (uniformly in $(y,\xi,\zeta)$ as a $N\times N$ matrix) as $\omega_0\to\infty$.  Passing to the limit in $\R\to\infty$ gives the spectral flow index in Corollary \ref{cor:bic}. 

Note that the above derivation shows that the limit as $\R\to\infty$ is replaced by the integral of the trace of the matrix-valued three-form over any closed (three-dimensional) surface $\Sigma$.

 The other corollaries are obtained similarly by appropriately deforming $\Sigma$ to two surfaces since $\tr (\sigma_\alpha^{-1} d\sigma_\alpha)^3$ is a closed form. To prove Corollary \ref{cor:sa}, which aims to compute the invariant by integrals at fixed values of $\xi$,  we need to ensure that the integral of $(\sigma_\alpha^{-1} d_{\zeta,y,\omega}\sigma_\alpha)^{\wedge 3}$ over any (lateral) rectangle (involving an integration in $\xi$) goes to $0$ as $(\omega,y,\zeta)\to\infty$. To prove Corollary \ref{cor:sfi}, which aims to compute the invariant by integrals at fixed values of $y$, we need to ensure that the integral of $(\sigma_\alpha^{-1} d_{\xi,\zeta,\omega}\sigma_\alpha)^{\wedge 3}$ over any (lateral) rectangle (involving an integration in $y$) goes to $0$ as $(\omega,\xi,\zeta)\to\infty$. 

We already know that the integrals over (lateral) rectangles with constant values of $\omega$ converge to $0$ as $|\omega|\to\infty$. The same argument shows the same result for integrals over (lateral) rectangles at constant values of $|\xi|\to\infty$. By hypothesis (h2) on the behavior of $\sigma_\alpha$ in  $\zeta$, we also deduce the result for  integrals over (lateral) rectangles at constant values of $|\zeta|\to\infty$. This proves Corollary \ref{cor:bic}. To prove Corollary \ref{cor:sa}, it therefore remains to obtain that the integral over a fixed rectangle in the variables $(\omega,\xi,\zeta)$ goes to $0$ as $|y|\to\infty$. This is in fact incorrect in general and this is why it was added as an assumption in the corollary.
\end{proof}

The results \eqref{eq:sfi} and \eqref{eq:bic} conclude the proof of Theorem \ref{thm:bic} except for the last equality involving the index of the Fredholm operator $F$. Under conditions (H1') or (H2'), we proved in proposition \ref{prop:fredholm} that $F$ was indeed Fredholm. That its index is given by the formula in the theorem can be found or proved as in \cite[Theorem 19.3.1]{H-III-SP-94}. This concludes the proof of  Theorem \ref{thm:bic}. $\square$

%%%%%%%%%%
\begin{proposition} %%%PROPOSITION 4.17
 For the  systems \eqref{eq:s22} and \eqref{eq:s33}, the assumptions in Corollary \ref{cor:sa} hold when $m(y)$ and $f(y)$, respectively, converge to $\infty$ at $\infty$.
\end{proposition}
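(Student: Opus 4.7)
The goal is to show that the $d\omega\wedge d\xi\wedge d\zeta$ component of $\tr(\sigma_\alpha^{-1}d\sigma_\alpha)^{\wedge 3}$, integrated over any fixed bounded set $K\subset\Rm^3_{\omega,\xi,\zeta}$, tends to $0$ as $|y|\to\infty$. The plan is to establish a pointwise decay estimate on $K$ and then conclude by dominated convergence. For both systems, $\partial_\omega\sigma_\alpha=iI$, $\partial_\xi\sigma_\alpha=-\gamma_1$ (or $-\sigma_1$), and $\partial_\zeta\sigma_\alpha=-\gamma_4$ (or $-\sigma_2$) are constant matrices, so the only source of behavior in $y$ comes from $\sigma_\alpha^{-1}$.

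For the $2\times 2$ system \eqref{eq:s22}, the symbol $\sigma_\alpha$ has eigenvalues $\alpha+i\omega\mp\sqrt{\xi^2+\zeta^2+m(y)^2}$. On the bounded set $K$, both eigenvalues have modulus bounded below by $|m(y)|-C_K$, hence $\|\sigma_\alpha^{-1}\|\leq C/|m(y)|$ for $|m(y)|$ large. The integrand is therefore bounded by $C'/|m(y)|^3$ uniformly on $K$, and the integral is $O(|m(y)|^{-3})\to 0$.

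The $3\times 3$ system \eqref{eq:s33} is more delicate because of the flat band: $\sigma_\alpha$ has eigenvalues $\alpha+i\omega-h_i$ with $h_1=-\sqrt{\xi^2+\zeta^2+f(y)^2}$, $h_2\equiv 0$, $h_3=+\sqrt{\xi^2+\zeta^2+f(y)^2}$. The middle factor $\lambda_2=1/(\alpha+i\omega)$ does not decay when $|f(y)|\to\infty$ (note $\alpha\neq 0$ since $\alpha$ lies in a bulk gap), so $\|\sigma_\alpha^{-1}\|$ does not go to zero. The approach is to use the spectral decomposition $\sigma_\alpha^{-1}=\sum_i\lambda_i\Pi_i$ with $\Pi_i=\psi_i\psi_i^*$ and expand
\[
\tr\bigl(\sigma_\alpha^{-1}A_1\,\sigma_\alpha^{-1}A_2\,\sigma_\alpha^{-1}A_3\bigr)=\sum_{l,m,n}\lambda_l\lambda_m\lambda_n\,(\psi_l^*A_1\psi_m)(\psi_m^*A_2\psi_n)(\psi_n^*A_3\psi_l),
\]
for $(A_1,A_2,A_3)$ a permutation of $(iI,-\gamma_1,-\gamma_4)$, then antisymmetrize. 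Any term with at least one index in $\{1,3\}$ contains a factor $\lambda_{1,3}=O(|f(y)|^{-1})$ on $K$ and is therefore negligible as $|f(y)|\to\infty$. The only potentially dangerous term is the pure $(l,m,n)=(2,2,2)$ contribution, which reduces to $\lambda_2^3(\psi_0^*A_1\psi_0)(\psi_0^*A_2\psi_0)(\psi_0^*A_3\psi_0)$.

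The key observation for handling the $(2,2,2)$ term—and the main obstacle of the proof—is the Feynman--Hellmann identity: since $h_2\equiv 0$ is a flat band, $\psi_0^*(\partial_j H)\psi_0=\partial_j h_2=0$ for $j=\xi,\zeta$, i.e., $\psi_0^*\gamma_1\psi_0=\psi_0^*\gamma_4\psi_0=0$. Two of the three scalar factors in the $(2,2,2)$ product thus vanish identically, giving zero contribution. Combining this with the $O(|f(y)|^{-1})$ bound on all other terms shows the integrand is uniformly bounded by $C/|f(y)|$ on $K$, and integration gives a quantity $O(|f(y)|^{-1})\to 0$ as $|y|\to\infty$. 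The same argument applies to every permutation arising from the wedge product, concluding the proof.
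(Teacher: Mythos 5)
Your proof is correct and follows the same strategy as the paper — show the integrand of the $d\omega\wedge d\xi\wedge d\zeta$ component decays pointwise on the bounded set $K$ as $|y|\to\infty$ — but supplies an important detail that the paper's one-sentence proof glosses over. The paper simply asserts that the integrand ``goes to $0$ when $|\mu(y)|$ converges to $\infty$''; this is immediate for the $2\times 2$ system from $\|\sigma_\alpha^{-1}\|=O(1/|m(y)|)$, but for the $3\times 3$ system the flat band makes $\|\sigma_\alpha^{-1}\|$ bounded away from $0$, so the assertion requires justification. Your observation that the only term surviving the $O(1/\kappa)$ estimate is the pure $\Pi_0$-contribution, and that it vanishes identically because $\psi_0^*\gamma_1\psi_0=\psi_0^*\gamma_4\psi_0=0$ (Feynman--Hellmann applied to the flat band $h_0\equiv 0$), is exactly the missing ingredient. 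One could equivalently note that the limit projector $\Pi_0^\infty=e_1e_1^*$ satisfies $\Pi_0^\infty\gamma_1\Pi_0^\infty=\Pi_0^\infty\gamma_4\Pi_0^\infty=0$, so the commutator $[\sigma_\alpha^{-1}\gamma_1,\sigma_\alpha^{-1}\gamma_4]\sigma_\alpha^{-1}$ in the paper's formulation vanishes in the limit; both phrasings capture the same structural cancellation. Your version makes the cancellation mechanism transparent at finite $y$ rather than only in the $|y|\to\infty$ limit, which is a small but genuine improvement over what the paper records.
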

\begin{proof}
  In both cases, at fixed values of $y$ the integral of $\tr[\sigma_\alpha^{-1}\partial_\xi\sigma_\alpha,\sigma_\alpha^{-1}\partial_\zeta\sigma_\alpha]\sigma_\alpha^{-1}$ over a bounded domain at fixed values of $y$ involves an integrand that goes to $0$ when $|\mu(y)|$ converges to $\infty$.
\end{proof}
The above result states that while Corollaries \ref{cor:sfi} and \ref{cor:bic} always hold, the regularized spectral asymmetry \eqref{eq:sa} or \eqref{eq:heursf} in \cite{essin2011bulk,gurarie2011single} holds only when the range of the domain wall is the whole of $\Rm$ (in the topologically non-trivial case). A calculation for system \eqref{eq:s22} shows that \eqref{eq:sa} does not hold when $m(y)$ takes values between $-m_0<0$ and $m_0$, say. Indeed, the bulk-difference invariant $c_+$ is given by \cite{B-BulkInterface-2018}
\[
  \dfrac{2}{4\pi} \dint_{\Rm^2} \dfrac{m}{(\xi^2+\zeta^2+m^2)^{\frac32}} d\xi d\zeta=\sgn{m},
\]
whereas the spectral asymmetry `invariant' would be given by
\[
  \dfrac{-2}{4\pi} \dint_{\Rm^2} \dfrac{m'(y) \xi}{(\xi^2+\zeta^2+m^2)^{\frac32}} d\zeta dy =  \dfrac{-2}{4\pi} \dint_{\Rm\times[-m_0,m_0]} \dfrac{\sgn{m_0}\xi}{(\xi^2+\zeta^2+m^2)^{\frac32}} d\zeta dm.
\]
Calculated for $\xi>0$ large enough, it is equals to $c_+$ only in the limit $m_0\to+\infty$ and in fact converges to $0$ as $\xi\to\infty$. 

\medskip

For \eqref{eq:s33}, we thus find that for smooth and sufficiently slowly varying profiles of the Coriolis force $f(y)$, the bulk-interface correspondence \eqref{eq:bic} always holds. This was one of the main motivations for the theory presented in this section. 

The notion of index related to a spectral flow presented in Appendix \ref{sec:bic} finds a mathematical explanation in Corollary \ref{cor:sfi}. The spectral flow, or equivalently the spectral asymmetry, is indeed represented as the index of an extended operator taking the form $ix-H$, where $\omega$ and the original variable $x$ may be identified. Now, $-ix+\gamma_1 D_x$ may be written in the Fourier domain as $\partial_\xi + \gamma_1\xi$, which is the augmented operator introduced in Appendix \ref{sec:bic} below.
%%%%%%%% SEE IF WE KEEP THE ABOVE PARAGRAPH AND APPENDIX D.

%
%%
\section{Conclusions}
\label{sec:conclu}

This paper proposes two methods to compute the topologically quantized interface conductivity $2\pi\sigma_I$ of Hamiltonians of the form $H[\mu(y)]+V$ where $\mu(y)\in\fS[\mu_-,\mu_+]$ implements a domain wall and $V$ models a large class of (local) perturbations.

The first method presented in section \ref{sec:ii} is a direct application of the heuristics of spectral flows. It requires a diagonalization of the domain wall problem and shows that the conductivity is given by the sum of the winding numbers of the branches of absolutely continuous spectrum that cross a spectral region of interest. The mathematical backbone of the derivation borrows from the machinery of non-commutative geometry to construct a Fredholm operator of the form $PU(H)P$ whose index is given by the Fedosov formula $-{\rm Tr}[P,U]U^*$ and as a topological object is immune to large classes of perturbations $V$.  This immunity is the main reason for the practical interest of such theories and gives a quantitative meaning to the perceived topological protection of transport phenomena observed numerically or experimentally in several areas of materials science and geophysical fluid flows.

 The main drawback of the method is that it requires in practice a sufficiently explicit expression of the spectral decomposition of $H[\mu(y)]$. Its main advantage is that it applies even when $\mu(y)$ displays discontinuities. It allows one to justify the surprising result for the system \eqref{eq:s33} that the interface conductivity may take different quantized values for different profiles in the same topological class of domain walls $\fS[\mu_-,\mu_+]$. 

\medskip

The second method is based on computing a bulk invariant involving only the constant-coefficient Hamiltonians $H[\mu_\pm]$ and relating it to the interface conductivity $2\pi\sigma_I$. We introduced a notion of bulk-difference invariant in section \ref{sec:bulk} that appropriately combines the projectors associated to both bulk Hamiltonians $H[\mu_\pm]$ and is computed either by means of Chern numbers on the sphere or three-dimensional winding numbers of associated Green's functions. These invariants are typically easier to compute than the ones introduced in section \ref{sec:ii}; see \cite{BM-FTI-21} for an application of Floquet topological insulators, where the computations of these bulk invariants is possible while the spectral decomposition of the domain walls seems intractable. 

Once these bulk invariants are computed, they are related to the interface conductivity by a general principle called here a bulk-interface correspondence. While the latter does not always hold as we just mentioned, we showed in section \ref{sec:BI} that it applied to a large class of differential operators when their coefficients are sufficiently smooth, and for the system \eqref{eq:s33} sufficiently slowly varying. %In the latter case, the lack of regularization was compensated by an application of a semi-classical G\aa rding inequality to ensure the presence of a spectral gap.

The correspondence is proved by relating both indices to a third one given as the index of a Fredholm differential operator constructed from $H[\mu(y)]$ by adding a second domain wall $i \nu(x)$ (and hence enforcing localization) in the direction of propagation along the interface. We first introduced some sufficient (pseudo-)differential conditions ensuring that the conductivity was quantized. We then related the conductivity to an integral involving only the symbol of the Hamiltonian $H[\mu(y)]$ and identified this integral with the Fedosov-H\"ormander formula for the index of the differential operator.

Our class of operators was restricted to the form \eqref{eq:Hh} that includes generalizations of models such as \eqref{eq:s22} and \eqref{eq:s33} or the Floquet model in \cite{BM-FTI-21}. The method extends to a larger class of topologically nontrivial differential operators such as those appearing in superconductor theory \cite{volovik2009universe}; see \cite{QB-NUMTI-2021}.

\section*{Acknowledgment}
This research was partially supported by the Office of Naval Research, Grant N00014-17-1-2096 and by the National Science Foundation, Grants DMS-1908736 and EFMA-1641100.

\appendix

%%%
%%
%

%
%%
%%%
\section{Index of interface Hamiltonian}
\label{sec:appindex}
%%%
%%
%

The proof of Theorem \ref{thm:intunpert} is similar to those derived in \cite{B-BulkInterface-2018} to compute the index of interface Hamiltonians. We highlight the main differences. It is shown in \cite{B-BulkInterface-2018} that under the assumptions on the spectral decomposition of $U(H)$, then $PU(H)P_{|{\rm Ran}P}$ is a Fredholm operator and
\[
 -{\rm Index}(PUP) = {\rm Tr} [P,U]U^* = {\rm Tr} [P,W](I+W^*) = {\rm Tr} [P,W]W^* + {\rm Tr}[P,W]
\]
since $[P,W]$ is trace-class. Moreover, as in \cite{B-BulkInterface-2018,brislawn1988kernels}, the above traces and given by the integrals along the diagonals of their Schwartz kernels. The trace of $[P,W]$ is seen to vanish since $(p(x)-p(y))_{|y=x}=0$. The Schwartz kernel of $W$ is given by
\[
  w(x-x';y,y') = \dint_{\Rm} \dsum_j W(E_j(\xi)) \psi_j(y,\xi)\psi_j^*(y',\xi) \dfrac{e^{i(x-x')\xi}}{2\pi} d\xi.
\]
The kernel of $W^*$ is given by $w^*(x'-x;y',y)$ with $w^*_{ij}=\bar w_{ji}$. The kernel of $[P,W]W^*$ is thus given by
\[
  t(x,x';y,y') = \dint_{\Rm^{d+1}} (\chi(x)-\chi(x")) w(x-x";y,y")w^*(x'-x";y',y")dx"dy",
\]
where $\chi(x)$ is the kernel of the projection $P$. Therefore, $T:={\rm Tr} [P,W]W^*$ is computed by
$T = \int_{\Rm^{d+1}} t(x,x;y,y) dxdy$.
Using the change of variables $(x,x")\to(z,x")=(x-x",x")$ with $dxdx"=dzdx"$, and computing $\int_\Rm (\chi(x"+z)-\chi(x"))dx"=z$, we obtain
\[
  T ={\rm tr} \dint_{\Rm^{2d+1}} z  w(z;y,y') w^*(z;y,y') dz dy dy'.
\]
Using the Fourier transform from $z$ to $\xi$ yields by Parseval
\[
  T  = \dfrac{-{\rm tr}}{2\pi i } \dint_{\Rm^{2d+1}} \partial_\xi \hat w(\xi;y,y') \hat w^*(\xi;y,y') d\xi dy dy',
\]
%% HERE WE USE $x\equiv i\partial_\xi$. THERE IS A TYPO IN THE JMP PAPER: NEEDS TO BE CORRECTED IN ARXIV VERSION.
where $\hat w(\xi;\cdot)$ is the component-wise Fourier transform of $w(x;\cdot)$ given by
\[
  \hat w(\xi;y,y') = \dsum_j W(E_j(\xi)) \psi_j(y,\xi) \psi^*_j(y',\xi).
\]
The derivative $\partial_\xi$ applies to $W\circ E_j$ and to $\psi_j(y,\xi) \psi^*_j(y',\xi)$. At a fixed $\xi$, consider the latter contribution, which is given by
\[
  \tau(\xi) := \dint \dsum_{j,k} {\rm tr} \partial_\xi[\psi_j(y,\xi)\psi_j^*(y',\xi)] \psi_k(y',\xi) \psi_k^*(y,\xi) dy dy'.
\]
We show that $\tau(\xi)=0$.  Indeed, we distribute $\partial_\xi$ over the product, exchange $y$ and $y'$ in the second contribution to get (dropping the $\xi$-dependence to simplify notation)
\[
  {\rm tr} \dint\dsum_{j,k} [\partial_\xi\psi_j(y)\psi_j^*(y') \psi_k(y')\psi_k^*(y) + \psi_j(y')\partial_\xi\psi_j^*(y).\psi_k(y)\psi_k^*(y')] dy dy'.
\]
Applying traces to these products of rank-one matrices yields (with $\bar \psi$ as a column vector)
\[
  \tau(\xi) = \dint \dsum_{j,k} [\partial_\xi \psi_j(y)\cdot \bar \psi_k(y) \bar \psi_j(y')\cdot \psi_k(y')  + \partial_\xi \bar\psi_j(y)\cdot \psi_k(y) \psi_j(y')\cdot \bar \psi_k(y') ] dy dy'.
\]
By orthogonality of the eigenvectors, only the terms $j=k$ survive the integration and then
$\tau(\xi) = \dint \dsum_j \partial_\xi |\psi_j(y)|^2 dy = \dsum_j \partial_\xi \dint |\psi_j(y)|^2 dy=0$. 
As a consequence,
\[
  T = \frac{-\rm tr}{2\pi i} \dsum_{j,k} \dint \partial_\xi W\circ E_j (\xi)  \psi_j(y,\xi) \psi_j^*(y',\xi)W^*\circ E_j (\xi) \psi_k(y',\xi) \psi_k^*(y,\xi) dy dy' d\xi.
\]
Taking traces again and integrating in $y$ and $y'$ yields
\[
 T = \frac{-1}{2\pi i} \dsum_j  \dint \partial_\xi W\circ E_j(\xi) W^*\circ E_j (\xi)  d\xi = -\dsum_j  {\mathcal W}_1(W\circ E_j).
\]
Here, ${\mathcal W}_1(f)$ is the winding number of a (compactly supported) function $f$. Since the integral of $W'$ vanishes, the winding number of $W\circ E_j$ and that of $U\circ E_j$ are the same. This concludes the proof of Theorem \ref{thm:intunpert}.

\medskip

%
%%
%%%
%\section{Piece-wise constant Coriolis force.}\label{sec:piecewise}
%%%
%%
%

%
%
\section{Calculations of conductivities in systems \eqref{eq:s22} and \eqref{eq:s33}}
\label{sec:appli}
We show that the corresponding operators are unbounded self-adjoint operators on $L^2(\Rm^2)\otimes\Cm^n$ and compute the interface conductivity in some tractable cases.

\medskip
\noindent{\bf The $2\times2$ system \eqref{eq:s22}}. We briefly recall and slightly extend results computed in \cite{B-BulkInterface-2018} for this system so that they can be contrasted with the more challenging $3\times3$ system. The domain of definition of $H$ making it self-adjoint is given by ${\mathcal D}(H):=(H+\epsilon i)^{-1}{\mathcal H}$ with ${\mathcal H}= L^2(\Rm^2)\otimes\Cm^2$. Here $\epsilon=\pm1$. Following \cite[Chapters 4\&8]{dimassi1999spectral}, we verify that the above domain is independent of $\epsilon$. Indeed the source problem, with $\epsilon=1$, may be recast as
\[
  (m+i)\psi_1 + (D_x-iD_y)\psi_2 = s_1,\quad (D_x+iD_y)\psi_1 + (i-m)\psi_2 = s_2
\]
which after elimination of one variable yields, for instance,
\[
   \Big( (m+i) + (D_x-iD_y) \dfrac{-1}{i-m} (D_x+iD_y) \Big) \psi_1 =  s_1-(D_x-iD_y)\frac{1}{i-m}s_2.
\]
This is an elliptic problem that implicitly defines a $m(y)-$dependent domain of definition. The same domain of definition applies to $\psi_2$. 

Now, for any bounded (multiplication or not) operator $V$, the problem $(H+V+\epsilon i)\psi=0$ implies that $\psi=0$ since $((H+V)\psi,\psi)$ is real and $(\epsilon i \psi,\psi)$ is purely imaginary. This shows that for such perturbations, $H_V=H+V$ is self-adjoint and the spectral calculus applies.

\medskip 

\noindent{\bf Spectral decomposition of system \eqref{eq:s22}.}
 Let us first consider $m(y)$ either of the form $m(y)=\lambda y$ or $m(y)$ a switch function in $\fS[m_-,m_+]$ with $m_-m_+<0$. Detailed calculations are in \cite{B-EdgeStates-2018,B-BulkInterface-2018}. We summarize the main relevant results.

Let us consider $H[\xi]$, the partial Fourier transform from $x$ to $\xi$, which we represent \cite{B-EdgeStates-2018,B-BulkInterface-2018} as $D_y\sigma_1+m(y)\sigma_2+\xi\sigma_3$ after a global unitary transform permuting $\sigma_{1,2,3}$ to $\sigma_{3,1,2}$. In case (i), $H[\xi]$ has compact resolvent, while in case (ii), its restriction to the interval $(-m_0,m_0)$ with $m_0={\rm min}(|m_+|,|m_-|)$, is also compact. As $\xi$ varies, this provides branches of absolutely continuous spectrum for the operator $H[m(y)]$.

In either case, we observe that solving $(H[\xi]-E)\psi=0$, i.e., finding the edge states,  amounts to solving
\[
  (\xi-E)\psi_1-i\fa \psi_2 =0,\quad i\fa^*\psi_1-(\xi+E)\psi_2 =0,\qquad \fa = \partial_y+m(y),\quad \fa^*=-\partial_y+m(y),
\]
which leads to trivial branches of spectrum corresponding to the shared (strictly) positive eigenvalues of $\fa\fa^*$ and $\fa^*\fa$, and one special branch of spectrum corresponding to the nontrivial kernel of $\fa$ when $m_-<0<m_+$ (the case we now consider). The branch is given by $\fa\psi_2=0$, $\psi_1=0$, and $E(\xi)=-\xi$, with group velocity $E'(\xi)=-1$ corresponding to a mode moving towards negative values along the $x$ axis. In other words, the operator $H[m(y)]$ admits the branch of continuous spectrum (parametrized by $E$)
$
  e^{-M(y)} e^{-iE x} \begin{pmatrix} 0\\1\end{pmatrix}
$
at least for values $|E|<m_0$ in case (ii), where $M$ is the antiderivative of $m$ with $M(0)=0$. 

The above calculations imply that $2\pi\sigma_I=-\sgn{m_+}$ when $m(y)\in\fS[\mu_-,\mu_+]$ and $m_+m_-<0$ (with $\sigma_I=0$ when $m_+m_->0$) while $2\pi\sigma_I=-\sgn{\lambda}$ when $m(y)=\lambda y$ (and would vanish for $m(y)=\lambda|y|$).

\medskip
\noindent
{\bf The $3\times3$ system \eqref{eq:s33}.} We start by analyzing its Hermitian structure.
We consider the $3\times3$ system with perturbation given by $H_V=H+V$. We define the domain of such an operator as ${\mathcal D}(H)=(H+\epsilon i)^{-1}{\mathcal H}\subset{\mathcal H}$ with ${\mathcal H}=L^2(\Rm^2)\otimes\Cm^3$ with $\epsilon=\pm1$ and then show that $H+V+\epsilon i$ is still surjective from ${\mathcal D}(H)$ to ${\mathcal H}$, or equivalently that the kernel of $H+V+\epsilon i$ on that domain is trivial.
The problem $(H+i)\psi=S$, with $\epsilon=1$ to simplify may be recast as
\[
 \left(\begin{matrix} i+D_x\sigma_1 & B\\ B^* & i  \end{matrix}\right) 
 \left(\begin{matrix} w\\v  \end{matrix}\right)  =  S = \left(\begin{matrix} S_w\\S_v  \end{matrix}\right) 
\]
where we have defined $w=(\eta,u)^t$ and $B=(D_y,if(y))^t$. We find
$  (i+D_x\sigma_1)^{-1} = -(1+D_x^2)^{-1} (i-D_x\sigma_1)$
so that the above system is equivalent to
\[
  (B^*(1+D_x^2)^{-1} (i-D_x\sigma_1)B + i) v = S_v +(1+D_x^2)^{-1} (i-D_x\sigma_1) S_w,
\]
along with $w=(1+D_x^2)^{-1} (S_w+(i-D_x\sigma_1)Bv)$. The equation for $v$ is therefore
$\big(B^*(i-D_x\sigma_1)B + i (1+D_x^2)\big) v = (1+D_x^2)S_v + (i-D_x\sigma_1) S_w$
or equivalently
\[
  \big(i (D_y^2+f^2) +f'(y)D_x + i (1+D_x^2)\big) v = (1+D_x^2)S_v + (i-D_x\sigma_1) S_w
\]
This is an elliptic problem with a unique solution, which provides a domain of definition for $v$ given by $(D_x^2+D_y^2+\aver{f}^2)^{-1}(1+D_x^2) L^2(\Rm^2)$. This propagates to a domain of definition for $(\eta,u)$ coming from $w=(1+D_x^2)^{-1} (S_w+(i-D_x\sigma_1)Bv)$. We observe that the spaces are independent of $\epsilon=\pm1$. 

For any bounded Hermitian perturbation $V$, the same spaces provide that $(H+V+\epsilon i)\psi=0$ implies that $\psi=0$. Indeed on that domain $((H+V)\psi,\psi)$ is real while $(\psi,\epsilon i\psi)$ is purely imaginary. This shows that $H_V=H+V$ is a self-adjoint operator on ${\mathcal D}(H)$ and the spectral calculus applies.

%%%%
\medskip
\noindent
{\bf Spectral decomposition of system \eqref{eq:s33}.}
Consider now the Hamiltonian $H[f(y)]$, which is invariant by translation along the $x-$axis. One difficulty not present in the $2\times2$ system is the presence essential spectrum at energy $0$. The bulk-interface correspondence predicts a number of protected interface modes equal to $2$, which is `often' the case, as we demonstrate in section \ref{sec:BI}, but not always. We now compute the winding number of branches of continuous spectrum for several profiles $f(y)$. Similar calculations have been worked out in \cite{delplace2017topological,souslov2019topological,tauber2019bulk,tauber2019anomalous} without explicitly looking to identify spectral branches.

We are looking at continuous branches of spectrum of $H[f(y)]$, i.e., discrete spectrum of $H[\xi]$ that is continuous in $\xi$. This requires solving for $H[\xi]\psi=E(\xi)\psi$ for $\psi=(\eta,u,v)^t$, which is given by the following system:
\begin{equation}\label{eq:33}
  \xi u + Dv =E\eta,\quad \xi\eta+ifv=Eu,\quad D\eta-if u=Ev,
\end{equation}
with $D=\frac1i\partial_y$.  We first make general remarks on such systems and then consider specific examples that can be solved explicitly. For concreteness, we consider the setting where $f'(y)\geq0$. We first look for solutions with $v=0$. We find from the other equations that $|E|=|\xi|$ and $u=\pm\eta$ when $E\not=0$ with $\partial_y \eta + fu=0$. This shows that $\eta=u$ provides a solution $u=e^{-F(y)}$ with $F'(y)=f(y)$ and $F(0)=0$, say, which is unique up to normalization. Note that $\fa u=0$ with $\fa=\partial_y+f$ an operator with non-trivial kernel and trivial co-kernel (and hence a Fredholm operator in appropriate topologies with index equal to $1$).  Note that $E=E(\xi)=\xi$ provides a branch of continuous spectrum. Moreover, $E'(\xi)=1$ corresponds to waves propagating toward positive values of $x$; these are the eastward propagating Kelvin waves in the geophysical application.

Still with $v=0$, it remains to look at the case $E=\xi=0$ where we find that $D\eta=ifu$ as the only constraint. There is an infinite number of solutions and $E=\xi=0$ corresponds to essential spectrum that needs to be avoided. 
%This is not a surprise as $E=0$ is also infinitely degenerate for the bulk problem.  
We thus assume $E\not=0$.

Let us now look for solutions with $v\not=0$. Eliminating $\eta$, assuming $\xi\not=0$, yields
\[
 (\xi D+ifE)v=(E^2-\xi^2)u,\quad (ED-if\xi)u=(E\xi+iDf)v,
\]
and further eliminating $u$ provides the following equation for $v$:
\[
  (D^2+f^2+\frac\xi E f')v =   (-\partial_y^2+f^2+\frac\xi E f')v=  (E^2-\xi^2)v.
\]
We verify from \eqref{eq:33} that the above equation still holds when $\xi=0$.

Let us assume that $E^2=\xi^2$.  Then $\xi=-E$ implies that the above equation for $v$ is $\fa^*\fa v=0$ whereas $\xi=E$ implies that $\fa\fa^*v=0$. The latter admits $v=0$ as a unique solution while the former admits the solution $v=e^{-F(y)}$. When $E=-\xi$, we then verify that
\[
   (\partial_y-f)u=-\fa^*u = \frac i\xi (\xi^2+f^2-f') v, \qquad -\fa^*\eta = \frac i\xi (\xi^2-f^2) v.
\]
These equations admit solution if and only if the right hand sides are orthogonal to the solutions in the kernel of $\fa$ by the Fredholm alternative. As a consequence, we observe that 
$
  \int_{\Rm} (\xi^2+f^2-f') v^2 dy = \int_{\Rm} (\xi^2-f^2) v^2 dy=0.
$
Only for specific $f$ are such constraints satisfied (they are for instance when $f=f_0\sgn{y}$). But in any case, they are satisfied for two values of $\xi$ at most and therefore only generate discrete spectrum in $H[f(y)]$ at most. Such discrete spectrum is irrelevant in our pursuit of branches of continuous spectrum and does not modify the value of Fredholm indices.

Therefore, $|E|\not=|\xi|$ when $v\not=0$. Once $v$ is found, then $u$ and $\eta$ are given by
\[
    u=\frac{1}{E^2-\xi^2} (\xi D + if E)v,\qquad \eta= \dfrac{1}{E}(\xi u+Dv).
\]
We are left with verifying that $u$ and $\eta$ thus defined are normalizable when $v$ is normalizable, which may depend on $f(y)$ but holds in the cases considered below.

We thus look for non-trivial solutions of the equation $ (-\partial_y^2+f^2+\frac\xi E f')v=  (E^2-\xi^2)v$. It does not seem possible to solve such an equation in closed form for arbitrary functions $f(y)$. We consider cases that admit closed-form expressions. 

We first consider the case $f(y)=f_0\sgn{y}$.  The solution $v$ is even in $y$ and given by $e^{-\mu y}$ for $y>0$ with $\mu>0$ such that $-\mu^2+f_0^2=E^2-\xi^2$. The jump conditions at $y=0$ read $v'(0-)-v'(0+)+\frac\xi E 2f_0=0$ or equivalently $\mu=-\frac\xi E f_0$. Plugging this into the above equation yields two solutions $E^2=f_0^2$ or $E^2=\xi^2$. We have already ruled out the latter. The former gives $E=-f_0\sgn{\xi}$ so that $\mu>0$. As a consequence, we observe that $E=\mp f_0$ is infinitely degenerate. The effect of the interaction of the bulk $0$ sheet and the varying $f$ is the presence of stationary waves since $E'(\xi)=0$. Moreover, we observe a spectral gap in $(-f_0,f_0)$ for the part of the spectrum corresponding to $v\not=0$. For energies $|E|<f_0$, we thus obtain only one branch of continuous spectrum given by the Kelvin waves. The number of interface modes is equal to $1$, not $2$ as (wrongly) predicted by the bulk interface correspondence. We refer the reader to \cite{tauber2019bulk,tauber2019anomalous} for other explicit calculations, obtained for instance in the presence of an odd viscosity contribution $\varepsilon$ so that $f$ in the equations is replaced by $f+\epsilon\Delta$.

Below, we analyze more general cases of $f(y)$ constant for $\pm y>0$. We find the surprising result that asymmetric edge states may appear when $0<f_-<f_+$ even though the bulk-difference invariant clearly vanishes in such a case.

First, we consider the practically relevant case $f(y)=\lambda y$ with $\lambda>0$ for concreteness. Now that $f'=\lambda$ is constant, we can relate the spectrum of the problem of interest to that of the quantum harmonic oscillator. We find that
$ \fa^*\fa v = (E^2-\xi^2-\lambda(1+\frac\xi E)) v$.
The spectrum of $\fa^*\fa$ is given by $2n\lambda$ so that %we have the equation
\[
   E_n^2-\xi^2-\lambda \frac\xi E_n = (2n+1)\lambda,\qquad E_n^3 = (\lambda(2n+1)+\xi^2)E_n+\lambda\xi.
\]
We look for all solutions to the above equation, except for the case $E^2=\xi^2$, which we ruled out and occurs only when $n=0$. These are all cubic equations, which can be solved explicitly. The case $n=0$ provides two solutions, one with positive energy and the other one with negative energy. These are called  Yanai waves. We verify that $E_0'(\xi)>0$ for such waves and they therefore also propagate eastward.

The rest of the solutions are composed of Rossby waves and Poincar\'e waves coming in continuous branches of spectrum crossing an energy level different from $0$ a finite and even number of times so that they are topologically trivial (with a vanishing winding number); see \cite{delplace2017topological,tauber2019bulk} as well. 

As a summary of the above calculations, we therefore find that $2\pi\sigma_I$ is given by $+1$ when $f(y)=f_0\sgn{y}$ with $f_0>0$  while it is given by $+2$ when $f(y)=\lambda y$ with $\lambda>0$.
\\[2mm]
\noindent{\bf Coriolis force taking two values on half lines.}  We finally consider the case with $f(y)=f_+$ for $y>0$ and $f(y)=f_-<f_+$ for $y<0$. Then $v$, assuming $E\not=0$, solves the following system:
\[
  -\partial_y^2 v + f^2 v = (E^2-\xi^2)v, \quad y\not=0; \qquad 
%\]
%\[
  -(v'(0^+)-v'(0^-)) + \frac\xi E (f_+-f_-) v(0) =0.
\]
The solutions are given by $v(y)=e^{-\mu_+ y}$ for $y>0$ and $v(y)=e^{\mu_- y}$ for $y<0$ with $\mu_\pm>0$ as a necessary condition for normalization. We thus obtain the system of equations for $(E,\mu_+,\mu_-)$ knowing $(\xi,f_+,f_-)$:
\[
  E^2-\xi^2 = f_+^2-\mu_+^2= f_-^2-\mu_-^2, \qquad \mu_++\mu_-+\frac\xi E(f_+-f_-)=0.
\]
We want  solutions such that $\mu_\pm>0$. $f_+-f_->0$ implies that $\frac\xi E<0$.  We introduce:
\[
 f_e=\frac12(f_++f_-),\qquad f_o=\frac12(f_+-f_-) >0.
\]
From $\mu_+^2-\mu_-^2=(\mu_++\mu_-)(\mu_+-\mu_-)=f_+^2-f_-^2$ we deduce that
$
   \mu_+-\mu_-  + \frac E\xi (f_++f_-)=0.
$ 
Let us define $\nu=\frac E \xi$. Then,
\[
  \mu_+ + \nu^{-1} f_o + \nu f_e=0 \quad \mbox{ so that } \quad \mu_+^2 = \nu^{-2} f_o^2 + f_of_e + \nu^{2} f_e^2 = f_+^2+\xi^2(1-\nu^{2}).
\]
This gives an equation for $\nu$, or equivalently $E$, using $f_+^2-2f_of_e=f_o^2+f_e^2$,
\[
  (f_e^2+\xi^2)\nu^4 - (f_o^2+f_e^2+\xi^2)\nu^2 + f_o^2=0 \ \mbox{ or } \ (f_e^2+\xi^2) (\nu^2-1)(\nu^2-\frac{f_o^2}{f_e^2+\xi^2}) =0.
\]
We know the existence of a branch $E(\xi)=\xi$ (Kelvin waves with $v=0$) and ruled out $E(\xi)=-\xi$.
With $|E|\not=|\xi|$, we know that $\xi/E<0$ so that, since $f_o>0$, the only admissible solution is
\begin{equation}\label{eq:Exi}
  E = \frac{-\xi f_o}{\sqrt{f_e^2+\xi^2}}.
\end{equation}
For such a branch, we verify that $\partial_\xi E<0$. 
The only valid solutions of interest satisfy that $\mu_\pm=-\nu f_o \mp \nu^{-1} f_e>0$, or equivalently,
\[
    f_o > \nu^2 |f_e|\quad\mbox{ or } \quad \xi^2\geq  |f_e|(f_o-|f_e|),
\]
where we used the above expression for $\nu$. An admissible branch of continuous spectrum inside the bulk gaps implies $|E|<M:={\rm min}(|f_-|,|f_+|)=|f_o-|f_e||$. We thus find the constraints
\[
   \xi^2\geq |f_e|(f_o-|f_e|),\qquad \mbox{ and } \qquad (2f_o-|f_e|)\xi^2<|f_e|(f_o-|f_e|)^2.
%\xi^2(f_o^2-M^2) = \xi^2|f_e|(2f_o-|f_e|)< f_e^2 M^2. %,\qquad M^2=f_e^2+f_0^2-2f_o|f_e|.
\]
We then consider three cases. (i) $|f_e|\geq 2f_o$ in which case both constraints are satisfied; (ii) $2f_o>|f_e| \geq f_o$ in which case the first constraint is always satisfied and the second one provides
\[
  \xi^2  < |f_e| \frac{(|f_e|-f_0|)^2}{2f_o-|f_e|} .%= : \xi_0^2.
\]
(iii) The third and last case is $|f_e|<f_o$ in which case the two constraints combined impose $2f_o-|f_e|<f_o-|f_e|$, which is not possible. 

We verify that $f_-<0<f_+$ belongs to case (iii). Only the Kelvin waves $E(\xi)=\xi$, as in the case $f_-=-f_+$, cross the bulk band gap. Surprisingly, as soon as $|f_e|\geq f_o$, which corresponds to $f_-< f_+\leq 0$ or $0\leq f_-<f_+$, then a branch of absolutely continuous spectrum crosses the intervals $(-M,0)$ and $(0,M)$ with $M=|f_e|-f_0$. Since $\partial_\xi E<0$, these branches have a winding number equal to $-1$.

We have thus considered the three different configurations: (i) when $f(y)=|\lambda|y$, then $I[H]=2$ as predicted by the bulk interface correspondence;  in the piece-wise constant case $f=f_-\chi_{y<0}+f_+\chi_{y>0}$ (ii) when $|f_e|<f_o$, then $I[H]=1$; and (iii) when $|f_e|>f_o$, then $I[H]=1-1=0$ as predicted by the bulk-interface correspondence, since $f_+$ and $f_-$ have the same sign.

\section{Equivalence of bulk-difference invariants}
\label{sec:appeq}
%%%
%%
%

\begin{proof}[Lemma \ref{lem:equivCW}]
The objective is to recast $W_\alpha=\frac{i}{8\pi^2}\int Td^2k$ in terms of the projectors $\Pi_j$, where we use that $\partial_\omega G^{-1}=i$ and
\[
 T=T(k) :=\dint \tr G[\partial_1 G^{-1}G,\partial_2 G^{-1} G] d\omega .
\] 
For $H=\sum_i h_i \Pi_i$ so that $G=\sum_i(z-h_i)^{-1}\Pi_i$, we thus need to estimate
\[
 \dsum_{i,j,k} {\rm tr} \dint \frac{\Pi_i (\partial_1H \Pi_j\partial_2 H - \partial_2H \Pi_j\partial_1 H) \Pi_k}{(z-h_i)(z-h_j)(z-h_k)} d\omega = \dsum_{i,j} {\rm tr} \dint \frac{\Pi_i (\partial_1H \Pi_j\partial_2 H - \partial_2H \Pi_j\partial_1 H) }{(z-h_i)^2(z-h_j)} d\omega
\]
since $\Pi_i\Pi_k=\delta_{ik}\Pi_i$ and the trace is cyclic. We now evaluate the integrals over $\omega$, which have a different form depending on whether $i=j$ or not. For the latter case, we have
\[
  \frac{1}{(z-h_i)^2(z-h_j)}  = \frac{b_1}{\alpha-h_i+i\omega} + \frac{b_2}{(\alpha-h_i+i\omega)^2} + \frac{b_3}{\alpha-h_j+i\omega}
\]
with $b_3=-b_1=(h_j-h_i)^{-2}$ and $b_2=(h_i-h_j)^{-1}$. It remains to evaluate the integrals using that $\int(\alpha+i\omega)^{-1}d\omega=\pi\sgn{\alpha}$ to obtain 
\[
  \pi \frac{1}{(h_i-h_j)^2} (\sgn{\alpha-h_j}-\sgn{\alpha-h_i}) %+ c_2\frac{1}{(h_i-h_j)(\alpha-h_i)^2}
\]
since $\int(\alpha+i\omega)^{-k}d\omega=0$ for $k\geq2$ integer by residue calculation. When $i=j$, we similarly obtain a vanishing contribution. Thus,
\[
T= \dsum_{i\not=j}  \pi (\sgn{\alpha-h_j}-\sgn{\alpha-h_i}) {\rm tr} \frac{\Pi_i (\partial_1H \Pi_j\partial_2 H - \partial_2H \Pi_j\partial_1 H) }{(h_i-h_j)^2}. 
\]
By cyclicity of the trace, we obtain
\begin{equation}\label{eq:w3trace}
 T= \dsum_{i<j}  2\pi (\sgn{\alpha-h_j}-\sgn{\alpha-h_i}) {\rm tr} \frac{\Pi_i (\partial_1H \Pi_j\partial_2 H - \partial_2H \Pi_j\partial_1 H) }{(h_i-h_j)^2}. 
\end{equation}
The above formula is convenient for explicit computations of the invariants. When the projectors $\Pi_i=\psi_i\otimes\psi_i$ are rank one, using that $\aver{\psi_i,\gamma\psi_j}=\overline{\aver{\psi_j,\gamma\psi_i}}$ for any Hermitian matrix $\gamma$, we observe that
\begin{equation}\label{eq:w3trace2}
 T= 2\pi \dsum_{i<j}  \dfrac{\sgn{\alpha-h_j}-\sgn{\alpha-h_i}}{(h_i-h_j)^2} 
2i \Im (\aver{\psi_i,\partial_1 H \psi_j}\aver{\psi_j,\partial_2 H \psi_i}).
\end{equation}

From the above expression for $H$, we deduce that $\partial_kH=\sum_l \partial_k h_l \Pi_l + h_l\partial_k \Pi_l$. For $i\not=j$, all terms involving $\partial_k h_l$ cancel as $\Pi_i\Pi_j\Pi_k=0$ so that
\[
  \dsum_{i<j}  2\pi (\sgn{\alpha-h_j}-\sgn{\alpha-h_i}) \sum_{k,l}\dfrac{h_kh_l}{(h_i-h_j)^2} 
  {\rm tr} \Pi_i (\partial_1\Pi_k \Pi_j\partial_2 \Pi_l - \partial_2\Pi_k \Pi_j\partial_1 \Pi_l) . 
\]
$\Pi_m\Pi_n=\delta_{mn}\Pi_n$ implies $\partial_m\Pi_k \Pi_l + \Pi_k \partial_m \Pi_l = \delta_{kl}\partial_m\Pi_k$ so that $(k,l)$ in the above formula have to equal $i$ or $j$. The sum over $(k,l)$ is thus
\[
   {\rm tr} \dfrac{h_i^2 \Pi_i \partial_1 \Pi_i \Pi_j \partial_2 \Pi_i + h_ih_j (\Pi_i\partial_1 \Pi_i \Pi_j \partial_2 \Pi_j + \Pi_i\partial_1 \Pi_j \Pi_j \partial_2 \Pi_i) +h_j^2 \Pi_i \partial_1 \Pi_j \Pi_j \partial_2 \Pi_j  }{(h_i-h_j)^2}
\]
minus the contribution exchanging the order of the derivatives. The above term is
\[
  {\rm tr}  \Pi_i \partial_1 \Pi_i \Pi_j \partial_2 \Pi_i = -{\rm tr} \Pi_i\partial_1 \Pi_i \Pi_j \partial_2 \Pi_j = {\rm tr}  \Pi_i \partial_1 \Pi_j \Pi_j \partial_2 \Pi_j =- \rm{ tr} \Pi_i\partial_1 \Pi_j \Pi_j \partial_2 \Pi_i
\]
since $\Pi_j^2=\Pi_j$ so that $\partial_k\Pi_j=\partial_k\Pi_j\Pi_j+\Pi_j\partial_k\Pi_j$  and ${\rm tr}\Pi_iA={\rm tr} A\Pi_i$.
We thus have 
\[
 T = 2\pi \dsum_{i<j}  (\sgn{\alpha-h_j}-\sgn{\alpha-h_i}) {\rm tr} \Pi_i (\partial_1\Pi_i\Pi_j\partial_2\Pi_i-\partial_2\Pi_i\Pi_j\partial_1\Pi_i).
\]
Therefore, 
\[
 T= -4\pi \dsum_{h_i<\alpha} \dsum_{h_j>\alpha} {\rm tr} \Pi_i (\partial_1\Pi_i \Pi_j \partial_2\Pi_i-\partial_2\Pi_i \Pi_j \partial_1\Pi_i),
\]
or using ${\rm tr} \Pi_i\partial_1\Pi_i\Pi_j \partial_2\Pi_i={\rm tr} \Pi_j \partial_2\Pi_i \Pi_i\partial_1\Pi_i$
\[
  T = 4\pi  \dsum_{h_i<\alpha}  {\rm tr}  (\dsum_{h_j>\alpha}  \Pi_j) (\partial_1\Pi_i \Pi_i \partial_2\Pi_i-\partial_2\Pi_i \Pi_i \partial_1\Pi_i).
\]
This is also, for $i_0$ the index so that $h_{i_0}<\alpha< h_{i_0+1}$,
\[
  T = 4\pi  \dsum_{i\leq i_o}  {\rm tr}  (I-\dsum_{j\leq i_0}  \Pi_j) (\partial_1\Pi_i \Pi_i \partial_2\Pi_i-\partial_2\Pi_i \Pi_i \partial_1\Pi_i).
\]
Now from
$
 {\rm tr} \Pi_j \partial_1 \Pi_i \Pi_i \partial_2 \Pi_i = {\rm tr} \Pi_i \partial_2 \Pi_j \Pi_j \partial_1 \Pi_j
$
for $i\not=j$ and ${\rm tr} [\Pi_i\partial_1\Pi_i,\Pi_i\partial_2\Pi_i]=0$, we deduce that the double sum for $(i,j)\leq i_0$ vanishes and
\[
  T = 4\pi  \dsum_{i\leq i_0}  {\rm tr}  (\partial_1\Pi_i \Pi_i \partial_2\Pi_i-\partial_2\Pi_i \Pi_i \partial_1\Pi_i) = -4\pi \dsum_{i\leq i_0}  {\rm tr}\Pi_i [\partial_1\Pi_i,\partial_2\Pi_i].
\]
This shows that for $\alpha$ in the $i_0$th gap, the winding number is given by the sum of the first $i_0$ Chern numbers as expected.
\end{proof}

\medskip\noindent
{\bf Application to bulk invariant calculations for the $3\times3$ system \eqref{eq:s33}.} We now use \eqref{eq:w3trace2} to compute the bulk-difference $W_\alpha$ invariant. We recast the system \eqref{eq:s33} as $H=\xi\gamma_1+\zeta\gamma_4-f\gamma_7$. This is diagonalized as follows. We find $I=\Pi_-+\Pi_0+\Pi_+$ for $H=\kappa(\Pi_+-\Pi_-)$ and $\Pi_j=\psi_j \psi_j^*$ rank one projectors given by
\begin{equation}\label{eq:eigen3}
  \psi_0 = \frac1\kappa \left(\begin{matrix}  if \\ \zeta \\ -\xi \end{matrix} \right) ,\quad
  \psi_\pm = \frac{1}\rho \left(\begin{matrix}  if\xi \pm \kappa\zeta \\ \xi\zeta \pm if\kappa \\ \zeta^2+f^2\end{matrix} \right)
\end{equation}
and $(\xi,\zeta)-$dependent eigenvalues given by
\[
   h_0=0,\qquad h_\pm = \pm \kappa, \qquad \mbox{ for } \quad  \kappa=\sqrt{\xi^2+\zeta^2+f^2}\quad\mbox{ and } \quad  \rho=\kappa\sqrt{2(f^2+\zeta^2)}.
\]
We find $\partial_\xi H=-\partial_\xi G^{-1}=\gamma_1$ and $\partial_\zeta H=-\partial_\zeta G^{-1}=\gamma_4$. As a consequence, \eqref{eq:w3trace2} gives $W_\alpha=\frac{i}{8\pi^2}\int_{\Rm^2} T d\xi d\zeta$ with
\[
  T= 2\pi \dsum_{1\leq i <j \leq 3}  \dfrac{\sgn{\alpha-h_j}-\sgn{\alpha-h_i}}{(h_i-h_j)^2} 
2i\Im (\aver{\psi_i,\gamma_1 \psi_j}\aver{\psi_j,\gamma_4 \psi_i}).
\]
 For $\alpha>0$, the only differences of signs contributing non trivial terms are $(i,j)$ with $i<j$ in $\{(-,+),(0,+)\}$ with then  $\sgn{\alpha-h_j}-\sgn{\alpha-h_i}=-2$. We find
\[
  \aver{\psi_+,\gamma_4\psi_-}=0,\quad  \aver{\psi_+,\gamma_4\psi_0}=\frac1\rho(if \kappa-\zeta\xi),\quad
\aver{\psi_0,\gamma_1\psi_+} =\frac{1}{\rho}(\zeta^2+f^2).
\]
Thus, $2i \Im  (\aver{\psi_0,\gamma_1 \psi_+}\aver{\psi_+,\gamma_4 \psi_0}) = \frac{if}{\kappa}$.
Note that $h_+-h_0=\kappa$. The bulk difference is defined for $f_\pm=\pm f$ on the top and bottom planes, which yields
\[
  -W_\alpha = 2\sgn{f} \frac{i}{8\pi^2} \dint_{\Rm^2} \frac{i|f|}{\kappa} \frac{-4\pi}{\kappa^2} d\xi d\zeta = \frac{\sgn{f}}{\pi} \dint_{\Rm^2} \frac{|f|d\xi d\zeta}{(\zeta^2+\xi^2+f^2)^{\frac32}}  = 2\sgn{f},
\]
as a standard integration for two-dimensional invariants. This was obtained for $\alpha>0$ and $f_-=-f_+$. All other expressions are obtained similarly. In particular, we observe that $W_\alpha$ after some algebra takes the same value for $\alpha<0$ so that $c_+=c_++c_0=-c_-=-2\sgn{f}$, which yields \eqref{eq:chern3}. See \cite{souslov2019topological,tauber2019bulk} and in particular \cite[Appendix A]{graf2020topology} for alternative methods to compute the above invariants.

\section{Bulk-interface correspondences}
\label{sec:bic}

This paper presents two methods to compute the interface conductivity for general classes of operators $H_V$. Both involve two steps. The first step is to assume that $H_V=H[\mu(y)]+V$ with $V$ a perturbation in an appropriate sense so that we may prove $\sigma_I[H_V]=\sigma_I[H]$. The second step is the computation of $\sigma_I$ for an operator that is invariant by translations along the $x$ axis. In the Fourier domain $x\to\xi$, we thus end up with the analysis of problems of the form  $H(\xi)\psi(\xi)=E(\xi)\psi(\xi)$ for $E(\xi)$ an energy inside the bulk band gap.

The two methods presented here in sections \ref{sec:ii} and \ref{sec:BI}  are related to the notion of spectral flow and the similar one of spectral asymmetry.  The computation given in section \ref{sec:ii} is based on the assumption of finitely many spectral branches $\xi\to E(\xi)$. The computations in section \ref{sec:BI} are more closely related to (a regularization of) a spectral asymmetry. 

In  this appendix, we briefly present some related results on the bulk-interface correspondences primarily developed in \cite{fukui2012bulk,essin2011bulk,volovik2009universe}. All revolve around the following observation. Consider the graph $\xi\mapsto E(\xi)$ and assume that it crosses the value $0$ a finite number of times. If that number is even, then the graph can be continuously deformed (with fixed end points) so that it no longer crosses the line $0$. Such a deformation is not possible when that number is odd, where the minimal number of crossings is $\pm1$ accounting for direction of crossing. This number is topologically protected and equivalent to the winding number of the spectral branches we considered in section \ref{sec:ii}. This asymmetric edge transport is also the main  physically observable manifestation of the non-trivial topology of the system.

%\subsubsection{Spectral flow}
%\label{sec:heursf}
\medskip
\noindent{\bf Spectral flow.}
The above asymmetry may be formalized by the notion of spectral flow. Consider the $2\times2$ system and assume that $m(y)$ is a continuous function equal to $m_-$ for $y<-|y_0|<0$ and equal to $m_+$ for $y>|y_0|$. The bulk Hamiltonian has a spectral gap in $(-m_0,m_0)$ for $m_0$ the minimum of $|m_-|$ and $|m_+|$.
%We can then look at the spectral flow of this operator as $\xi$ varies. 
For any energy $E$ in the spectral gap $(-m_0,m_0)$, we look at the graph $\xi\mapsto(H[\xi]-E)$ for eigenvalues of  $H[\xi]$ in the spectral gap as well. The spectrum of $H[\xi]$ restricted to the spectral gap is only composed of point spectrum as already recalled in the preceding section. 
%\cite{B-BulkInterface-2018}. 
 We then consider an energy $-m_0<E<m_0$ and define the number of protected edge states as \cite{fukui2012bulk}
\[
   I(E) ={\rm SF}[H[\xi]-E],
\]
the number of signed crossings of $H[\xi]$ through $E$ (counted as positive from bottom to top and negative from top to bottom).  The method devised in  \cite{fukui2012bulk} to compute the spectral flow is to relate it (semi-formally) to the index of the Fredholm operator :
\begin{equation}\label{eq:Hv}
   H_v = v\partial_\xi + H[\xi] -E = \left( \begin{matrix}  \fa_\xi-E & -i\fa_y \\ i\fa_y^* & -\fa_\xi^*-E \end{matrix} \right), \qquad \fa_y=\partial_y + m(y) ,\quad \fa_\xi = v \partial_\xi + \xi
\end{equation}
for some $v>0$ arbitrary. Using that the kernel of  $\fa_y$ is non-trivial whereas that of $\fa_y^*$ is trivial when $m_-<0<m_+$ and that the kernel of $\fa_\xi-E$ is non-trivial whereas that of $\fa^*_\xi-E$ is, it is straightforward to obtain that 
$
   I[H_v]  =  - \frac12 (\sgn{m_+} - \sgn{m_-}),
$
which is also the bulk-difference invariant $c_+=-c_-$ in \eqref{eq:chern2}. This is the bulk-interface correspondence for the $2\times2$ system. The correspondence  only depends on the asymptotic behavior of $m(y)$ for large $|y|$. This shows that $\fa_\xi$ and $\fa_y$ can be continuously modified without changing the invariant so long as $m_\pm$ remain of constant sign. 

The above calculations extend to a large class of models as shown in \cite{fukui2012bulk}. However, the calculations we  obtained earlier show that the above correspondence cannot extend to arbitrary $3\times3$ models since the spectral flow depends on some details of the coefficient $f(y)$. We will revisit the above notion of spectral flow below.

\medskip
\noindent
{\bf Spectral asymmetry.}
An intuitive picture of the bulk-interface correspondence was developed in \cite[Chapter 22]{volovik2009universe} and further analyzed in \cite{essin2011bulk,gurarie2011single}. It is based on the spectral asymmetry, 
\[
   \nu(\xi;\alpha) = {\rm Tr} \dint_{\Rm} \dfrac{d\omega}{2\pi i} {\mathcal G} \partial_\omega {\mathcal G}^{-1}  = {\rm Tr} \dint_{\Rm} \dfrac{d\omega}{2\pi } {\mathcal G} 
\]
defined for $\alpha\in\Rm$, where ${\mathcal G}=(\alpha+ i\omega-H[\xi])^{-1}$ is the Green's (resolvent) operator, whose kernel is the Green's function, and where ${\rm Tr}$ is the operator trace. The above operator is in fact not necessarily trace-class and the asymmetry will be appropriately regularized below. Heuristically, we expect $\nu(\xi;\alpha)$ to provide the difference of the numbers of eigenvalues of $H$ that are larger and smaller than $\alpha$ since, formally,
\[
   \nu(\xi;\alpha) = -{\rm Tr} \dint \dfrac{d\omega}{2\pi} \dfrac{H[\xi]-\alpha}{\omega^2 + (H[\xi]-\alpha)^2} = -\frac12\dsum_n \sgn{E_n(\xi)-\alpha}.
\]
The sum in $\nu(\xi_2;\alpha)-\nu(\xi_1,\alpha)$ counts the number of eigenvalues $E_n(\xi)$ that cross $\alpha$ as $\xi$ runs from $\xi_1$ to $\xi_2$ as does the spectral flow considered earlier.

The above sign function is defined only when $\alpha$ is not an eigenvalue of $H[\xi]$. Assuming that this is the case when $|\xi|$ is large, then, still heuristically, we expect that
\begin{equation}\label{eq:heursf}
 -I(\alpha) = \lim\limits_{\xi\to\infty} \big( \nu(\xi; \alpha)  - \nu(-\xi; \alpha) \big).
\end{equation}

The works in \cite{essin2011bulk,volovik2009universe} then recast the right-hand side in \eqref{eq:heursf} as an integral of the symbol of $H[\mu(y)]$, or equivalently the symbol of the resolvent (Green's) operator. Such a derivation, called gradient expansion in these references, will be justified below by using semiclassical calculus in a spirit similar to the derivation of the Fedosov-H\"ormander index theorem \cite[Theorem 19.3.1]{H-III-SP-94}.  We also refer to  \cite{drouot2019microlocal} for a recent application of semiclassical calculus for the derivation of the bulk-interface correspondence for periodic magnetic Schr\"odinger operators.  

In the rest of the section, we introduce this calculus, which only applies for Hamiltonians with smooth coefficients, and recast the bulk-interface correspondence as a calculation of an index for an appropriate Fredholm operator that is similar to \eqref{eq:Hv}. In doing so, we obtain that \eqref{eq:heursf} actually holds only in specific cases, which for the $2\times2$ and $3\times3$ systems correspond to the cases where $\mu(y)$ converges to infinity at infinity.

\section{Semiclassical calculus and Helffer-Sj\"ostrand formula}
\label{sec:hpdo}
We collect here notation and results on pseudo-differential, semiclassical, and spectral calculus used in this paper following \cite{bolte2004semiclassical,davies_1995,dimassi1999spectral,zworski2012semiclassical}, to which we refer for details.

Let $V=\Rm^d$ and $V'\simeq\Rm^d$ its dual. A bounded matrix-valued operator $A$ from ${\mathcal S}(V)\otimes \Cm^n$ the Schwartz space to its dual ${\mathcal S}'(V)\otimes \Cm^n$ admits a (Schwartz) distribution kernel $K_A\in {\mathcal S'}(V\times V)\otimes \Mm_n(\Cm)$ and can be represented for $0<h\leq1$ as
\begin{equation}\label{eq:hpdo}
  A = {\rm Op}^w_h(a) := {\rm Op}_h(a),\quad   {\rm Op}_h(a)\psi(x) = \dfrac{1}{(2\pi h)^d} \dint_{V'\times V} e^{i\frac{x-y}h\cdot\xi} a(\frac{x+y}2,\xi) \psi(y) dy d\xi,
\end{equation}
where $a(x,\xi)\in {\mathcal S}'(V\times V') \otimes \Mm_n$ is the Weyl symbol of $A$ defined as 
\[
  a(x,\xi) = \dint_V e^{-i\frac{y\cdot\xi}h} K_A(x+\frac y2,x-\frac y2) dy.
\]
The notation $a^w(x,hD):={\rm Op}_h(a)$ is also used to define the semiclassical operator (\hpdo) for the Weyl quantization of $a$ with $D=\frac1i\nabla$. 
 Finally, we use the notation ${\rm Op}^w(a)={\rm Op}(a):={\rm Op}_1(a)$ for the Weyl quantization of pseudo-differential operators defined at the scale $h=1$. 

\medskip

To define operators with smoother kernels that can be composed with each other, we define the space of {\em order functions} $m(x,\xi)$ from $V\times V'\to[0,\infty)$ satisfying the growth condition:
\begin{equation}\label{eq:order}
  m(x,\xi) \leq C (1+|x-y|+|\xi-\zeta|)^N m(y,\zeta)
\end{equation}
for some constants $C=C(m)$ and $N=N(m)$. Examples of interest are $(1+|x|^2+|\xi|^2)^s$ for $s\in\Rm$ as well as ${\rm max}(0,\pm x_1)$.
We then denote by $S(m)$ the Fr\'echet space of symbols $a(x,\xi)\in C^\infty(V\times V') \otimes \Mm_n$  such that 
\begin{equation}\label{eq:S}
 \|\partial^\alpha_x\partial^\beta_\xi a(x,\xi) \|_{\Mm_n} \leq C_{\alpha,\beta} m(x,\xi).
\end{equation}
For $h-$dependent symbols $a(\cdot;h)$, we say that $a\in S^0(m)$ when $a(\cdot;h)\in S(m)$ uniformly in $0<h\leq1$.

For two operators $a^w$ and $b^w$ with symbols $a\in S^0(m_1)$ and $b\in S^0(m_2)$, we then define the composition $c^w=a^w\circ b^w$ with symbol $c\in S^0(m_1m_2)$ given by the Moyal product
\begin{equation}\label{eq:sharph}
c(x,\xi) =(a\sharp_h b) (x,\xi) := \Big( e^{i\frac h2(\partial_x\cdot\partial_\zeta - \partial_y\cdot\partial_\xi)} a(x,\xi) b(y,\zeta)\Big)_{|y=x;\zeta=\xi}.
\end{equation}
In other words, ${\rm Op}_h(a) {\rm Op}_h(b) = {\rm Op}_h(a\sharp_h b)$. We also define $\sharp=\sharp_1$ when $h=1$.

For $a\in S^0(1)$, we obtain (\cite[Th. 7.11]{dimassi1999spectral},\cite[Prop. 1.4]{bolte2004semiclassical}) that ${\rm Op}_h(a)$ is bounded as an operator in ${\mathcal L}(L^2(V)\otimes \Cm^n)$ with bound uniform in $0<h\leq1$ so that $I-h{\rm Op}_h(a)$ is invertible on that space when $h$ is sufficiently small.

An operator is said to be (semiclassically) elliptic when the symbol $a=a(x,\xi;h)\in S^0(m)$ is invertible in $\Mm_n$ for all $(x,\xi)\in V\times V'$ with $a^{-1}\in S^0(m^{-1})$. For $\sigma$ a Hermitian-valued symbol and $H_h={\rm Op}_h(\sigma)$, we then obtain that $z-H_h$ is elliptic for $z=\lambda+i\omega$ with $\omega\not=0$. Moreover, $(z-H_h)^{-1}$ is a operator with symbol $r_z \in S^0(1)$ uniformly in $0<h\leq h_0$ sufficiently small, $|\lambda|$ bounded and $|\omega|$ bounded away from $0$; see \cite[Prop. 8.6]{dimassi1999spectral}, which extends to the matrix-valued case, and Lemma \ref{lem:rz} for a more precise bound for $\omega$ small.

\medskip

We next recall the sharp semiclassical G\aa rding inequality \cite[Theorem 7.12]{dimassi1999spectral} (which extends to the vectorial case without modification) stating that for $a$ a Hermitian-valued symbol in $S^0(1)$ with eigenvalues greater than or equal to $\beta\in\Rm$ for all $(x,\xi)\in V\times V'$, then 
\begin{equation}\label{eq:garding}
  ({\rm Op}_h(a)\psi,\psi) \geq (\beta-Ch) \|\psi\|^2
\end{equation}
for all $\psi\in L^2(V)\otimes\Cm^n$ and for $C$ a constant independent of $0<h\leq h_0$ {\em sufficiently small}. Thus, for $\beta>0$ and $h$ sufficiently small, the operator ${\rm Op}_h(a)$ has spectrum bounded away from $0$.

%\medskip
%
%\tb{Not clear we need that section anymore since we no longer use the properties of the specific aae.}

\medskip

Finally, we recall some results on spectral calculus and the Helffer-Sj\"ostrand formula following \cite{davies_1995,dimassi1999spectral}; see also \cite{bolte2004semiclassical} for the vectorial case. For any self-adjoint operator $H$ from its domain ${\mathcal D}(H)$ to $L^2(V)\otimes\Cm^n$ and any bounded continuous function $\varphi$ on $\Rm$, then $\varphi(H)$ is uniquely defined as a bounded operator on $L^2(V)\otimes\Cm^n$ \cite[Chapter 4]{dimassi1999spectral}. Moreover, for $\varphi$ compactly supported, we have the following spectral representation
\begin{equation}\label{eq:hs}
  \varphi(H) = -\frac1\pi \dint_{\Cm} \bar\partial \tilde\varphi(z) (z-H)^{-1} d^2z,
\end{equation}
where, for $z=\lambda+i\omega$, $d^2z:=d\lambda d\omega$, $\bar\partial=\frac12\partial_\lambda+\frac1 2\partial_\omega$, and where $\tilde\varphi(z)$ is an almost analytic extension of $\varphi$.
%Following \cite[Eq. (2.2.1)]{davies_1995}, we introduce  the following almost analytic extension
%\begin{equation}\label{eq:aae}
%  \tilde\varphi(\alpha,\omega) =  \Big(\dsum_{r=0}^n \varphi^{(r)}(\alpha) \frac1{r!} (i\omega)^r \Big) \tau(\lambda \omega)
%\end{equation}
%for $n\geq1$, $\lambda>0$ and  $\omega\mapsto\tau(\omega)$ for $\omega\in\Rm$ a smooth, non-negative, compactly supported function equal to $1$ in the vicinity of $\omega=0$ and with derivative $\tau'(\omega)$ supported in $1<\omega<2$. Since $\varphi$ is compactly supported, we verify as in \cite[Lemmas 2.2.3 \& 2.2.4]{davies_1995} that replacing $\tau(\lambda\omega)$ above by $\tau(\frac{i\omega}{\aver{\alpha}})$ (which is the expression in \cite[Chapter 2]{davies_1995}) yields the same operator as the one given above since the difference vanishes for small $|\omega|$. 
The extension $\tilde\varphi$ is compactly supported in $\Cm$. 
%with a support that may be arbitrarily close to that of $\varphi$.  
Moreover, $\tilde\varphi(\lambda+i0)=\varphi(\lambda)$ and $\bpar \tilde\varphi(\lambda+i0)=0$, whence the name of {\em almost} analytic extension.  We can choose the almost analytic extension such that $|\bar\partial\tilde\varphi| \leq C_n |\omega|^n$  for any $n\in\Nm$ in the vicinity of the real axis uniformly in $(\alpha,\omega)$ on compact sets.    Several explicit expressions, which we do not need here, for such extensions are available in \cite{davies_1995,dimassi1999spectral}.

%In dimension $d=1$ (in the variables $(y,\zeta)$), the semiclassical symbol $r_z$ of $(z-H_n[\xi])^{-1}$ defined in \eqref{eq:rz} is bounded by $|\omega|^{-4}$ in the vicinity of $\omega=0$. As a consequence, when $n\geq4$, $\varphi(H)$ is a \hpdo\ with symbol in $S(1)$ given by $-\frac1\pi \int_{\Cm} \bar\partial \tilde\varphi(z) r_z d^2z$.

%%%%%%%%%%%%%%%%%%%
%\bibliography{../../../bibliography,../bibTI} 

\begin{thebibliography}{10}

\bibitem{atiyah1975spectral}
M.~F. Atiyah, V.~K. Patodi, and I.~M. Singer.
\newblock {Spectral asymmetry and Riemannian geometry. I}.
\newblock In {\em Mathematical Proceedings of the Cambridge Philosophical
  Society}, volume~77, pages 43--69. Cambridge University Press, 1975.

\bibitem{avron1983homotopy}
J.~Avron, R.~Seiler, and B.~Simon.
\newblock Homotopy and quantization in condensed matter physics.
\newblock {\em Physical review letters}, 51(1):51, 1983.

\bibitem{B-BulkInterface-2018}
G.~Bal.
\newblock Continuous bulk and interface description of topological insulators.
\newblock {\em Journal of Mathematical Physics}, 60(8):081506, 2019.

\bibitem{B-EdgeStates-2018}
G.~Bal.
\newblock Topological protection of perturbed edge states.
\newblock {\em Communications in Mathematical Sciences}, 17(1):193--225, 2019.

\bibitem{BM-FTI-21}
G.~Bal and D.~Massatt.
\newblock {Multiscale Invariants of Floquet Topological Insulators}.
\newblock {\em In Preparation}, 2021.

\bibitem{bernevig2013topological}
B.~A. Bernevig and T.~L. Hughes.
\newblock {\em Topological insulators and topological superconductors}.
\newblock Princeton university press, 2013.

\bibitem{bolte2004semiclassical}
J.~Bolte and R.~Glaser.
\newblock A semiclassical egorov theorem and quantum ergodicity for matrix
  valued operators.
\newblock {\em Communications in mathematical physics}, 247(2):391--419, 2004.

\bibitem{bourne2017k}
C.~Bourne, J.~Kellendonk, and A.~Rennie.
\newblock The k-theoretic bulk--edge correspondence for topological insulators.
\newblock In {\em Annales Henri Poincar{\'e}}, volume~18, pages 1833--1866.
  Springer, 2017.

\bibitem{bourne2018chern}
C.~Bourne and A.~Rennie.
\newblock Chern numbers, localisation and the bulk-edge correspondence for
  continuous models of topological phases.
\newblock {\em Mathematical Physics, Analysis and Geometry}, 21(3):16, 2018.

\bibitem{brislawn1988kernels}
C.~Brislawn.
\newblock Kernels of trace class operators.
\newblock {\em Proceedings of the American Mathematical Society},
  104(4):1181--1190, 1988.

\bibitem{davies_1995}
E.~B. Davies.
\newblock {\em Spectral Theory and Differential Operators}.
\newblock Cambridge Studies in Advanced Mathematics. Cambridge University
  Press, 1995.

\bibitem{delplace2017topological}
P.~Delplace, J.~Marston, and A.~Venaille.
\newblock Topological origin of equatorial waves.
\newblock {\em Science}, 358(6366):1075--1077, 2017.

\bibitem{dimassi1999spectral}
M.~Dimassi and J.~Sj{\"o}strand.
\newblock {\em Spectral asymptotics in the semi-classical limit}.
\newblock Number 268. Cambridge university press, 1999.

\bibitem{drouot2019bulk}
A.~Drouot.
\newblock The bulk-edge correspondence for continuous honeycomb lattices.
\newblock {\em arXiv preprint arXiv:1901.06281}, 2019.

\bibitem{drouot2019microlocal}
A.~Drouot.
\newblock Microlocal analysis of the bulk-edge correspondence.
\newblock {\em arXiv preprint arXiv:1909.10474}, 2019.

\bibitem{drouot2018defect}
A.~Drouot, C.~L. Fefferman, and M.~I. Weinstein.
\newblock Defect modes for dislocated periodic media.
\newblock {\em arXiv preprint arXiv:1810.05875}, 2018.

\bibitem{elbau2002equality}
P.~Elbau and G.~Graf.
\newblock Equality of bulk and edge hall conductance revisited.
\newblock {\em Communications in mathematical physics}, 229(3):415--432, 2002.

\bibitem{elgart2005equality}
A.~Elgart, G.~Graf, and J.~Schenker.
\newblock Equality of the bulk and edge hall conductances in a mobility gap.
\newblock {\em Communications in mathematical physics}, 259(1):185--221, 2005.

\bibitem{essin2011bulk}
A.~M. Essin and V.~Gurarie.
\newblock Bulk-boundary correspondence of topological insulators from their
  respective green’s functions.
\newblock {\em Physical Review B}, 84(12):125132, 2011.

\bibitem{Fefferman2016}
C.~L. Fefferman, J.~P. Lee-Thorp, and M.~I. Weinstein.
\newblock {Edge States in Honeycomb Structures}.
\newblock {\em Annals of PDE}, 2(2):12, 2016.

\bibitem{Fruchart2013779}
M.~Fruchart and D.~Carpentier.
\newblock An introduction to topological insulators.
\newblock {\em Comptes Rendus Physique}, 14(9–10):779 -- 815, 2013.

\bibitem{fukui2012bulk}
T.~Fukui, K.~Shiozaki, T.~Fujiwara, and S.~Fujimoto.
\newblock {Bulk-edge correspondence for Chern topological phases: A viewpoint
  from a generalized index theorem}.
\newblock {\em Journal of the Physical Society of Japan}, 81(11):114602, 2012.

\bibitem{graf2020topology}
G.~M. Graf, H.~Jud, and C.~Tauber.
\newblock Topology in shallow-water waves: a violation of bulk-edge
  correspondence.
\newblock {\em arXiv preprint arXiv:2001.00439}, 2020.

\bibitem{Graf2013}
G.~M. Graf and M.~Porta.
\newblock Bulk-edge correspondence for two-dimensional topological insulators.
\newblock {\em Communications in Mathematical Physics}, 324(3):851--895, 2013.

\bibitem{graf2018bulk}
G.~M. Graf and C.~Tauber.
\newblock {Bulk-edge correspondence for two-dimensional Floquet topological
  insulators}.
\newblock In {\em Annales Henri Poincar{\'e}}, volume~19, pages 709--741.
  Springer, 2018.

\bibitem{grushin1970pseudodifferential}
V.~Grushin.
\newblock Pseudodifferential operators on r n with bounded symbols.
\newblock {\em Functional Analysis and Its Applications}, 4(3):202--212, 1970.

\bibitem{gurarie2011single}
V.~Gurarie.
\newblock {Single-particle Green’s functions and interacting topological
  insulators}.
\newblock {\em Physical Review B}, 83(8):085426, 2011.

\bibitem{hatsugai1993chern}
Y.~Hatsugai.
\newblock Chern number and edge states in the integer quantum hall effect.
\newblock {\em Physical review letters}, 71(22):3697, 1993.

\bibitem{helffer1983calcul}
B.~Helffer and D.~Robert.
\newblock Calcul fonctionnel par la transformation de mellin et op{\'e}rateurs
  admissibles.
\newblock {\em Journal of functional analysis}, 53(3):246--268, 1983.

\bibitem{H-III-SP-94}
L.~V. H{\"o}rmander.
\newblock {\em {The Analysis of Linear Partial Differential Operators III:
  Pseudo-Differential Operators}}.
\newblock Springer Verlag, 1994.

\bibitem{kellendonk2012wave}
J.~Kellendonk and S.~Richard.
\newblock {On the wave operators and Levinson's theorem for potential
  scattering in R3}.
\newblock {\em Asian-European Journal of Mathematics}, 5(01):1250004, 2012.

\bibitem{kellendonk2004quantization}
J.~Kellendonk and H.~Schulz-Baldes.
\newblock Quantization of edge currents for continuous magnetic operators.
\newblock {\em Journal of Functional Analysis}, 209(2):388--413, 2004.

\bibitem{ludewig2020cobordism}
M.~Ludewig and G.~C. Thiang.
\newblock Cobordism invariance of topological edge-following states.
\newblock {\em arXiv preprint arXiv:2001.08339}, 2020.

\bibitem{nakahara2003geometry}
M.~Nakahara.
\newblock {\em Geometry, topology and physics}.
\newblock CRC Press, 2003.

\bibitem{niemi1984spectral}
A.~J. Niemi and G.~W. Semenoff.
\newblock Spectral asymmetry on an open space.
\newblock {\em Physical Review D}, 30(4):809, 1984.

\bibitem{prodan2016bulk}
E.~Prodan and H.~Schulz-Baldes.
\newblock {\em {Bulk and boundary invariants for complex topological
  insulators: From K-Theory to Physics}}.
\newblock Springer Verlag, Berlin, 2016.

\bibitem{QB-NUMTI-2021}
S.~Quinn and G.~Bal.
\newblock {Approximations of topological invariants for interface
  Hamiltonians}.
\newblock {\em In preparation}, 2021.

\bibitem{shankar2017topological}
S.~Shankar, M.~J. Bowick, and M.~C. Marchetti.
\newblock Topological sound and flocking on curved surfaces.
\newblock {\em Physical Review X}, 7(3):031039, 2017.

\bibitem{simon2015operator}
B.~Simon.
\newblock {Operator Theory. A Comprehensive Course in Analysis, Part 4}.
\newblock {\em American Mathematical Society}, 92, 2015.

\bibitem{souslov2019topological}
A.~Souslov, K.~Dasbiswas, M.~Fruchart, S.~Vaikuntanathan, and V.~Vitelli.
\newblock Topological waves in fluids with odd viscosity.
\newblock {\em Physical Review Letters}, 122(12):128001, 2019.

\bibitem{tauber2019anomalous}
C.~Tauber, P.~Delplace, and A.~Venaille.
\newblock Anomalous bulk-edge correspondence in continuous media.
\newblock {\em arXiv preprint arXiv:1902.10050}, 2019.

\bibitem{tauber2019bulk}
C.~Tauber, P.~Delplace, and A.~Venaille.
\newblock A bulk-interface correspondence for equatorial waves.
\newblock {\em Journal of Fluid Mechanics}, 868, 2019.

\bibitem{volovik2009universe}
G.~Volovik.
\newblock {\em The Universe in a Helium Droplet}.
\newblock International Series of Monographs on Physics. OUP Oxford, 2009.

\bibitem{zworski2012semiclassical}
M.~Zworski.
\newblock {\em Semiclassical analysis}, volume 138.
\newblock American Mathematical Soc., 2012.

\end{thebibliography}
%\bibliographystyle{abbrv}
\end{document}